\let\footnote=\endnote
\newtheorem{fact}{Fact}
\newcommand{\ud}{\mathrm{d}}
\newcommand{\regret}{\mathrm{Regret}}
\renewcommand{\hat}{\widehat}
\renewcommand{\tilde}{\widetilde}
\def\BState{\State\hskip-\ALG@thistlm}
\renewcommand{\hat}{\widehat}
\renewcommand{\tilde}{\widetilde}
\renewcommand{\bar}{\overline}
\definecolor{DSgray}{cmyk}{0,1,0,0}
\begin{document}



\RUNTITLE{Privacy-Preserving Dynamic Personalized Pricing}

\TITLE{Privacy-Preserving Dynamic Personalized Pricing with Demand Learning}

\ARTICLEAUTHORS{%
\AUTHOR{Xi Chen\thanks{Author names listed in alphabetical order.}}
\AFF{Leonard N.~Stern School of Business, New York University, \EMAIL{xc13@stern.nyu.edu}}
\AUTHOR{David Simchi-Levi}
\AFF{Institute for Data, Systems, and Society, Department of Civil and Environmental Engineering and Operations Research Center, Massachusetts Institute of Technology, \EMAIL{dslevi@mit.edu}}
\AUTHOR{Yining Wang}
\AFF{Warrington College of Business, University of Florida, \EMAIL{yining.wang@warrington.ufl.edu}}
} 

\ABSTRACT{The prevalence of e-commerce has made customers' detailed personal information readily accessible to retailers, and this information has been widely used in pricing decisions. When using personalized information, the question of how to protect the privacy of such information becomes a critical issue in practice. In this paper, we consider a dynamic pricing problem over $T$ time periods with an \emph{unknown} demand function of posted price and personalized information. At each time $t$, the retailer observes an arriving customer's personal information and offers a price. The customer then makes the purchase decision, which will be utilized by the retailer to learn the underlying demand function. There is potentially a serious privacy concern during this process: a third-party agent might infer the personalized information and purchase decisions from price changes in the pricing system. Using the fundamental framework of differential privacy from computer science, we develop a privacy-preserving dynamic pricing policy,  which tries to maximize the retailer revenue while avoiding information leakage of individual customer's information and purchasing decisions. To this end, we first introduce a notion of \emph{anticipating} $(\varepsilon, \delta)$-differential privacy that is tailored to  the dynamic pricing problem. Our policy achieves both the privacy guarantee and the performance guarantee in terms of regret. Roughly speaking, for $d$-dimensional personalized information, our algorithm achieves the expected regret at the order of $\tilde{O}(\varepsilon^{-1} \sqrt{d^3 T})$, when the customers' information is adversarially chosen. For stochastic personalized information, the regret bound can be further improved to $\tilde{O}(\sqrt{d^2T} + \varepsilon^{-2} d^2)$.
}

\KEYWORDS{Differential privacy (DP), Dynamic pricing, Generalized linear bandits, Personal information}


\date{}

\maketitle

\parskip=6pt

\section{Introduction}

The increasing prominence of e-commerce has given retailers an unprecedented power to understand customers as individuals and to tailor their services accordingly. For example, personal information is known to be used in pricing on travel websites \citep{Hannak14Measuring} and Amazon \citep{Chen2016empirical}; \cite{Linden2003} illustrates how personal information is used in Amazon recommender systems to achieve a dramatic increase in click-through and conversion rates.  Although personalized pricing may involve complicated legal issues in many domains, it has been adopted or considered in several key industries, such as air travel, hotel booking,  insurance, and ride-sharing.
For example, according to 	\cite{wiser:18},  ``Hotel websites such as Orbitz (whose parent company is Expedia) and auto dealers like Tesla utilize personalized pricing to their advantage when conducting sales with a customer. Even Uber has dabbled in personalized pricing by offering `premium pricing' to predict which users are willing to pay more to go to a certain location.'' As reported by \cite{Mohammed:17},  when using Orbitz, for identical flights, hotel and type of room, the price of the traveling package found on a laptop was 6.5\% more than the price offered on the Orbitz app.  Moreover, in practice, instead of directly charging different prices, the e-commerce platforms usually use the discount or promotions to implement personalized pricing strategies.

Although the availability of personal data (e.g., location, web search histories, media consumption, social media activities) enables targeted services for an individual customer, it poses significant privacy issues in practice (e.g., \cite{AppleDP:2017}).
Many existing privacy-protection approaches are rather ad hoc by ``anonymizing'' personal information. However, such ad hoc anonymization leads to two issues. First, it is difficult to quantify the level of privacy. Second, it has been shown that a de-anonymization procedure can easily jeopardize privacy. Examples include the de-anonymization of released AOL search logs \citep{aol2006} and movie-watching records in Netflix challenge \citep{Narayanan2008Robust}. Therefore, personalized operations management urgently calls for mathematically rigorous privacy-preserving methods to prevent personal information leakage in online decision-making.  On one hand, personalized revenue management has received a significant amount of attention in recent operations literature (see, e.g., \cite{ban2017personalized,Cheung2017} and references therein). On the other hand, the question of how to protect an individual's privacy has not been well-explored in the existing literature. 

In this paper, we study how to systematically protect an individual's privacy in the dynamic pricing problem with demand learning.  Given $T$ time periods,   a potential customer arrives at each time $t$,  and the retailer receives $x_t$ containing information about the incoming customer,
such as age, location, purchase history and ratings, credit scores, etc.  We consider a very general personalized setting, where the customers are \emph{heterogeneous} and thus the feature  $\{x_t\}_{t=1}^T$ does \emph{not} necessarily follow the same distribution. By observing the personal information $x_t$, the retailer offers the customer a price $p_t\in[0,1]$. The customer then makes 
$y_t\in\mathbb R$, where the random demand $y_t$ follows a \emph{generalized linear model} of a feature vector $\phi(x_t, p_t) \in \mathbb{R}^d$ (see \eqref{eq:exp-family}) and the retailer collects revenue $p_ty_t$. The objective of the retailer is to maximize the expected revenue over the entire $T$ time periods,
or more specifically $\mathbb E[\sum_{t=1}^Tp_ty_t]$.
As this paper focuses on how to protect an individual's sensitive information, we consider a stylized setting of pricing a \emph{single} product, with \emph{unlimited} inventories available.
 
Due to the personalized nature, the aforementioned pricing procedure involves the use of individuals' sensitive information, such as customers' personal information, {characterized by $x_t$} and their purchase history, designated {by $y_t$} (e.g., whether a purchase was made at time $t$). 
Thanks to secured internet communication channels, the information $(x_t, p_t, y_t)$ at time $t$ is usually securely transmitted, and thus only revealed
to the retailer and the particular customer coming at time $t$. However, although the information at time $t$ is not directly accessible to future customers,  the sensitive information is not completely shielded from outside third-party agents (a.k.a. attackers or adversaries) because of the ripple effects of historical customers' data on future pricing decisions in a data-driven pricing system. Indeed,  a third-party agent who observes his own posted prices \emph{in the future} can 
potentially infer an individual's personal information $x_t$ and her purchase decision $y_t$.  We provide two examples showing how the sensitive data at time $t$ could be potentially breached and why such privacy leakage could incur serious challenges to the integrity of the underlying pricing system.



\paragraph{Leakage of purchase activity $y_t$.} 
For sensitive commodities such as medications, customers' purchasing decisions $\{y_t\}$ must be well protected from the public, as such purchases may potentially reveal purchasers' underlying medical conditions. Some dynamic pricing policies would increase prices facing increased sales volumes for a higher profit.
Such behavior might inadvertently leak information about $y_t$ to a third party via the fluctuation of prices.
For example, a third-party agent might place orders immediately before and after a person of interest
and if he sees a slight spike in his received prices,
he might be able to infer the purchase decision $y_t$ of the person of interest.

\paragraph{Leakage of customers' personal information $x_t$.}
When making the price decision $p_t$ for an arriving customer at time $t$, 
the retailer makes use of the customer's personal information $x_t$.
Some components of $x_t$, such as the customer's age, credit history, and prior purchases, are highly sensitive and should be protected. 
Consider a natural pricing policy that is highly ``local'' to personal information, e.g.,  posting similar prices to future customers with a similar profile to customer $t$. A third-party agent could arrive before and after a person of interest with {guesses of personal information} to detect whether there are noticeable changes in the prices. Then, the agent would be able to infer to some degree about the personal information $x_t$ of the individual of interest.




In summary, it is vital to develop systematic and mathematically rigorous policies that \emph{provably} protect customers' privacy.
As we previously discussed, simple data anonymization lacks a theoretical foundation and can be jeopardized.  On the other hand, the notion of \emph{differential privacy} (DP), which was proposed in the computer science field \citep{dwork2006our,dwork2006calibrating},  has laid a solid foundation for private data analysis and achieved great success in industries. The DP is not only a gold standard notion in academia but also has been widely adopted by companies, such as Apple \citep{AppleDP:2017}, Google \citep{Erlingsson:14}, Microsoft \citep{ding2017collecting}, and the U.S. Census Bureau \citep{Abowd2018Census}. 
The aim of this paper is therefore to build upon the differential privacy notion
to design mathematically rigorous privacy policies with provable utility (regret) guarantees for the dynamic personalized pricing problem.

\subsection{Our contributions}\label{subsec:contributions}

The major contributions of this paper can be summarized as follows:


\paragraph{\bf Near-optimal regret of provably privacy-aware pricing policies.}
Built upon the notion of anticipating differential privacy, we propose a privacy-aware personalized pricing algorithm
that enjoys rigorous regret guarantees.
More specifically, in a general setting when the personalized information of each coming customer can be adversarially chosen, 
our policy achieves a regret upper bound of $\tilde{O}(\varepsilon^{-1} \sqrt{d^3T})$, where $\varepsilon$ is the parameter in DP (a smaller $\varepsilon$ implies a stronger privacy preservation of the resulting algorithm), 
$d$ is the dimension of the feature map $\phi(x_t, p_t)$, $T$ is the time horizon, and $\tilde{O}(\cdot)$ hides logarithmic factors (see Theorem \ref{thm:main-regret-adversarial-context}). 
The $\sqrt{T}$ dependency on the time horizon $T$ in this regret upper bound is optimal \citep{broder2012dynamic}.

In addition to the regret upper bound for the general personalized information setting,
we also study a ``stochastic'' setting in which the customer's personal information $\{x_t\}$ is assumed to be stochastic and independently and identically distributed from an unknown non-degenerate distribution.
We remark that this is a common assumption/setting
studied in the existing literature \citep{qiang2016dynamic,Miao2019context}.
In this setting,
with some changes of hyper-parameters of our proposed algorithm,
an improved regret upper bound of $\tilde{O}(d \sqrt{T} + \varepsilon^{-2} d^2 )$ can be proved (see Theorem \ref{thm:main-regret-stochastic-context}).   {One attractive property of this bound is that it separates the dependency on conventional problem parameters (i.e., $d$ and $T$) from privacy-related parameter (i.e., $\varepsilon$).  }
The dominating term (with $T\to\infty$) in this regret bound, namely the $\widetilde O(d\sqrt{T})$ term,
is \emph{optimal} in both $d$ and $T$, as shown in \citep{dani2008stochastic}.

{
In both the general setting and the ``stochastic'' setting, the regret upper bounds of either $\tilde O(\varepsilon^{-1}\sqrt{d^3 T})$
or $\tilde O(d\sqrt{T}+\varepsilon^{-2}d^2)$ also characterize the tradeoffs between customers' privacy protection and the revenue
(surplus) of the seller under the designed policy.
More specifically, the $\varepsilon>0$ parameter characterizes the level of customers' privacy protection, with smaller $\varepsilon$
corresponding to stronger protection against malicious agents.
Clearly, as both regret upper bounds depend inversely on $\varepsilon$, it shows that as the seller seeks stronger protection
over the privacy of customers' personalized data, the more he/she will suffer from decreased revenue (and  a larger regret). This revenue loss is due to additional efforts/randomization required for data privacy protection.
}

Finally, the privacy requirements imposed on the seller's policy also have interesting implications on consumer surplus.
In Sec.~\ref{subsec:discussion-consumer-surplus} of this paper, we provide numerical results to characterize the tradeoffs between consumers' privacy protection and consumer surplus.
We find that as the implied privacy protection becomes weaker (i.e., the seller having less ability to discriminate against customers
based on their personal data and features, resembling a transition from the first-degree to the third-degree price discrimination), 
the consumer surplus increases because the seller extracts less of the consumer surplus from his/her limited ability to carry out price discrimination.

\paragraph{\bf Technical contributions.}
Our proposed framework for privacy-preserving personalized dynamic pricing makes use of several existing privacy-aware
learning/releasing techniques, such as the \textsc{AnalyzeGauss} method in online PCAs \citep{dwork2014analyze},
the tree-based aggregation technique for releasing serial data \citep{chan2011private}, 
and differentially private empirical risk minimization methods \citep{kifer2012private,chaudhuri2011differentially}.
On the other hand, the development and analysis of our proposed method make several key technical contributions
to the general topic of privacy-aware sequential decision-making in revenue management problems, which we briefly summarize as follows:
\begin{enumerate}
\item One salient feature of this paper is the inclusion of customers' personal information $x_t$ as sensitive data that needs to be protected,
which is different from existing works \citep{tang20contextual}, where only purchase activities $y_t$ are regarded as sensitive data (see Section \ref{sec:related} for more discussions).
The objective of protecting privacy in $\{x_t\}$ presents two technical challenges.
First, as $\{x_t\}$ and subsequently the feature representations $\{\phi_t\}$ are sensitive data, one cannot directly apply the private follow-the-regularized-leader (FTRL) approach in \citep{tang20contextual} to the dynamic pricing problem.
Furthermore, the sensitivity of $\{x_t\}$ implies the sensitivity of $\{p_t\}$ as well, since prices offered to incoming customers must be strongly associated with customers' personal information to achieve good revenue performances.  To address these challenges, we build our DP setting on the notion of \emph{anticipating DP} \citep{shariff2018differentially}, which excludes prices in prior selling periods from the outcome sets of a randomized algorithm.
\item The demand rate function $f$ as a function of price $p$ and personal information $x$ is modeled in this paper as a \emph{generalized linear model}
within the exponential family.
Despite its apparent similarity to linear models, such generalization results in significant challenges when privacy concerns
are considered. {In fact, this is still an open problem for generalized linear contextual bandit under the DP guarantee.}
{ More specifically, the results of \cite{shariff2018differentially} on privacy-aware linear bandits rely heavily on the fact that the ordinary least squares solution is in a closed-form with two simple sufficient statistics: the sample covariance matrix $X^\top X$ and the response-weighted feature vector $X^\top y$.
With the post-processing property of DP (which we briefly discuss in Section~\ref{subsec:post-processing}), 
it suffices to obtain privacy-preserved copies of $X^\top X$ and $X^\top y$ at each time.
In contrast, parameter estimates in generalized linear models are usually obtained using maximum likelihood estimates (MLE), which do not have
simple sufficient statistics. It is nearly impossible to guarantee  the privacy and a non-trivial regret simultaneously if the MLE is updated at every period.}
To overcome this challenge, we make the important observation that the required number of updates of MLEs can be reduced significantly {(i.e., only $O(d \log T)$ periods of updates will be sufficient)}.
This key observation allows us to compose differentially private empirical risk minimizers \citep{kifer2012private} to arrive at a privacy-aware
contextual bandit algorithm even without explicit sufficient statistics.
\item The generalized linear model for demand rate modeling resembles existing works on parametric contextual bandits
without privacy constraints \citep{li2017provably,filippi2010parametric,wang2019optimism}.
One significant limitation of these existing works is that, without assuming stochasticity of the contextual vectors, 
the optimization of parameter estimates in these works is usually non-convex.
Examples include the robustified Z-estimation in \citep{filippi2010parametric} and the constrained least-squares formulation in \citep{wang2019optimism},
both of which are non-convex for some popular generalized linear models such as the logistic regression model.
While such non-convexity poses only computational difficulties in non-private bandit algorithms,
these challenges become much more significant when privacy constraints are imposed since most existing techniques of DP
stochastic optimization require convexity \citep{kifer2012private,chaudhuri2011differentially} and the general privacy-aware non-convex optimization is extremely difficult.

To overcome this challenge, this paper analyzes a constrained maximum likelihood estimation in a more refined style
with a relatively large regularization parameter, demonstrating with high probability that the solution to the constrained MLE lies in the strict interior of the constraint set (see Lemma EC.1 in the supplementary material).
This result then implies the first-order KKT condition of the solution, from which the Z-estimation analysis in \citep{li2017provably,filippi2010parametric}
can be used together with the analysis of an objective-perturbed convex minimization problem to obtain satisfactory regret upper bounds.
\end{enumerate}
%
%


\subsection{Organization}
The rest of the paper is organized as follows. Section \ref{sec:related} discusses the related literature in both dynamic pricing and differential privacy. We set up our pricing models and formalize the anticipating DP in Sections \ref{sec:models} and \ref{sec:prelim-privacy}. Our policy is presented in Section \ref{sec:algo}, which contains two components: \emph{privacy releasers} and \emph{price optimizers}. Sections \ref{sec:privacy} and \ref{sec:regret} establish the privacy and regret guarantees, respectively, followed by a conclusion in Section \ref{sec:con}. 
All the technical proofs are relegated to the online supplementary material.

\section{Literature Review}
\label{sec:related}

This section briefly reviews related research from both the personalized pricing and differential privacy literature.

\paragraph{Personalized dynamic pricing with demand learning.} 

Due to the increasing popularity of online retailing, dynamic pricing with demand learning has become an active research area in revenue management in the past ten years (see, e.g., \cite{araman2009dynamic, besbes2009dynamic, farias2010dynamic, harrison2012bayesian, broder2012dynamic, den2013simultaneously, wang2014close, chen2015real, besbes2015suprising, cheung2017dynamic,ferreira2018online, Wang21:uncertainty}).  More recently, due to the availability of abundant personal information, personalized pricing with feature information has been investigated in several works. For example, \cite{chen2020statistical} studied offline personalized pricing and quantified the statistical property of the MLE. \cite{Cohen2020} considered a binary thresholding model for purchasing decisions by comparing a linear function of the feature and the posted price, proposed an ellipsoid-based method for dynamic pricing, and established the worst case regret bound. \cite{qiang2016dynamic} considered a linear demand model and studied the performance of the greedy iterated least squares. \cite{ban2017personalized} and  \cite{javanmard2019dynamic} studied the personalized dynamic pricing problem in high-dimensional settings with sparsity assumption of features. A very recent work by \cite{tang20contextual} studied differentially-private contextual dynamic pricing and proposed a Follow-the-Approximate-Leader-type policy. Our work differs from this paper in several aspects. First, we protect the personal information $\{x_t\}$, while \cite{tang20contextual} treated this information as public. Second, \cite{tang20contextual} adopted the classical DP notion, while we consider the notion of anticipating DP. Finally, we assume that the demand follows a generalized linear model of a feature map of personal information and price, while \cite{tang20contextual} considered a binary thresholding purchase model with a linear mapping of contextual information.

\paragraph{Differential privacy for online learning.} 
Since the notation of $(\varepsilon, \delta)$-differential (DP) privacy was proposed by \cite{dwork2006our, dwork2006calibrating}, it has become a golden standard for privacy-preserving data analysis in both academia and industry. Please refer to the survey \cite{dwork2014algorithmic} for a comprehensive introduction of DP.

Built on this classical notion, other privacy notions have also been developed in the literature, such as Gaussian DP \citep{Dong19Gaussian}, joint DP \citep{shariff2018differentially}, local DP \citep{Evfimievski2003limiting,kasiviswanathan2011can}, average-KL DP \citep{wang16average} and per-instance DP \citep{wang2019per}. Our notion of anticipating DP is motivated by the joint DP \citep{shariff2018differentially} designed for linear contextual bandits.
While the work of \cite{shariff2018differentially}  studied the linear contextual bandits  subject to differential privacy constraints,
their methods and analysis are built upon the noisy perturbation of sufficient statistics (namely, the sample covariance and sample average).
Thus, their method is \emph{not} applicable to the personalized pricing question, where generalized linear demand models are widely used
(see also the technical challenges summarized in the introduction).

In DP, there are several fundamental techniques, such as composition, post-processing (see Section \ref{subsec:comp} and \cite{dwork2014algorithmic}), partial-sum by tree-based aggregation \cite{dwork2010differential,chan2011private}, and ``objective-perturbation'' \citep{chaudhuri2011differentially,kifer2012private}.  
In our designed personalized dynamic pricing algorithm, we build on these important techniques to make sure that our algorithm is differentially private.

The techniques of DP have been applied to multi-armed bandit problems. For example, \cite{mishra2015nearly}  developed differentially private UCB and Thompson sampling algorithms for classical bandits.  \cite{mishra2015nearly} and \cite{shariff2018differentially} further studied  differentially private linear contextual bandits, where \cite{mishra2015nearly}  protected the privacy of rewards and \cite{shariff2018differentially} protected both rewards and contextual information. However, for linear bandits, since the maximum likelihood estimator (MLE) admits a simple closed-form solution, one only needs  to protect the sufficient statistics (e.g., $\sum_{t'=1}^t x_{t'} x_{t'}^\top$ and $\sum_{t'=1}^t y_{t'}x_{t'}$). On the other hand, we consider a much more general demand model following a generalized linear model. Therefore, the corresponding MLE does not admit a closed-form solution; we address this challenge by providing a new analysis of constrained MLE properties.
{There are other interesting private online learning frameworks developed in  recent literature. For example,
the private sequential learning model was proposed in  \cite{Tsitsiklis2020private} (for noiseless responses) and further investigated in \cite{xu2018query}  and  \cite{xu2020optimal} (for noisy responses). In particular, 
\cite{xu2020optimal} quantified the optimal query complexity for private sequential learning against eavesdropping.  While existing privacy literature mainly focuses on protecting a data owner's privacy, this work investigates how to protect the privacy of a learner who sequentially queries a database and receives binary responses.  We note that the goal of the private sequential learning is to learn a global parameter, e.g., ``the highest price to charge so that at least 50\% of the consumers would purchase'' in pricing domain \citep{xu2020optimal}, and to make sure the adversary cannot infer the final released price. In contrast, our goal is to make sequential decision-making to maximize revenue while protecting individuals' personalized information and purchasing decisions.
}

{
In the recent work of \cite{lei2020privacy}, which was completed after this paper was released, an \emph{offline} personalized pricing setting
is studied with differential privacy guarantees.
The recent work of \cite{zheng2020locally} studied the stronger local privacy notion and derived an algorithm with $\widetilde O(T^{3/4})$ regret bound
for the generalized linear model, which is worse than the regret bounds obtained in this paper.
}

\section{Pricing models and assumptions}\label{sec:models}

The basic setting of personalized dynamic pricing has been described in the introduction. In this section, we provide more technical details of the problem setting. At each time $t$ with the observed personal information $x_t$ and the posted price $p_t$, the (random) demand realized by customer at time $t$ is modeled by a Generalized Linear Model (GLM) within the exponential family,  taking the form of
\begin{equation}
\Pr[y_t=y|p_t,x_t,\theta^*] = \exp\{\zeta(y\phi_t^\top\theta^*-m(\phi_t^\top\theta^*)) + h(y)\},
\label{eq:exp-family}
\end{equation}
where $\phi_t=\phi(x_t,p_t)\in\mathbb R^d$ is a known feature map, $\theta^*\in\mathbb R^d$ is an unknown linear model,
and $\zeta,m(\cdot),h(\cdot)$ are components of the distribution family.
Some examples of exponential family distributions include the Gaussian distribution and the Logistic model,
which are given at the end of this section.
It is easy to verify that $f(\phi_t^\top\theta^*):=m'(\phi_t^\top\theta^*)$ is the expectation of $y_t$ conditioned on $p_t,x_t$ and $\theta^*$.
Hence, we can equivalently write Eq.~(\ref{eq:exp-family}) as
\begin{equation}
y_t = f(\phi_t^\top\theta^*) + \xi_t, 
\label{eq:demand-model}
\end{equation}
where $\phi_t=\phi(x_t,p_t)$ and $\xi_t$ are independent random variables satisfying $\mathbb E[\xi_t|p_t,x_t]=0$.

We next specify the filtration process of $x_t$ and $p_t$.
Let $\mathcal F_{t}=\{(x_\tau,y_\tau,p_\tau)\}_{\tau=1}^t$ be the history up to time period $t$.
In the most general setting, the features $\{x_t\}_{t=1}^T$ of the $T$ customers are arbitrarily chosen before the pricing process starts \footnotemark\footnotetext{This setting is known as the ``oblivious adversary'' model in the contextual bandit literature. While this model is weaker than the ``fully adversarial'' one mostly studied in the literature, we adopt the oblivious adversary model for a more convenient treatment of privacy constraints, as $\{x_t\}$ will not depend on the offered prices or the randomly realized demands.}.
The price $p_t$ at each time $t$ is subsequently chosen by the dynamic pricing policy conditioned on filtration $\mathcal F_{t-1}$ and $x_t$.
The demand $y_t$ is then realized via $y_t=f(\phi_t^\top\theta^*)+\xi_t$, where $\phi_t=\phi(x_t,p_t)$ and $\mathbb E[\xi_t|x_t,p_t,\mathcal F_{t-1}]=0$.

Throughout this paper we impose the following conditions on the distribution family, the linear model, and the feature map:
\begin{enumerate}
{
\item There exists a parameter $B_Y<\infty$ such that $|y_t|\leq B_Y$ for all time periods $t$ in all databases $D$;
}
\item Both the feature vectors and the linear model have at most unit norm, or more specifically $\|\phi(x,p)\|_2,\|\theta^*\|_2\leq 1$ for all $x,p$;
\item The stochastic noises $\{\xi_t\}$ are centered and sub-Gaussian, meaning that $\mathbb E[\xi_t|x_t,p_t,\mathcal F_{t-1}]=0$
and there exists $s<\infty$ such that $\mathbb E[e^{\lambda\xi_t}|x_t,p_t,\mathcal F_{t-1}]\leq e^{\lambda^2s^2/2}$ for all $\lambda\in\mathbb R$;
\item $f(\cdot)=m'(\cdot)$ maps $\mathbb R$ to $[0,1]$ is continuously differentiable and strictly monotonically increasing.
Furthermore, for all $|z|\leq 2$, $K^{-1}\leq f'(z)\leq K$ for some constant $1\leq K<\infty$;
\item $\zeta$ in Eq.~(\ref{eq:exp-family}) satisfies $G^{-1}\leq \zeta\leq G$ for some constant $1\leq G<\infty$.
\end{enumerate}

We give some common examples that fall into Eq.~(\ref{eq:exp-family}) and satisfy all imposed conditions.

\begin{example}[Gaussian model]
In the Gaussian model the realized demand $y_t$ follows $y_t=\phi_t^\top\theta^*+\xi_t$ with $\xi_t\sim\mathcal N(0,1)$.
It is easy to verify that the Gaussian model falls into Eq.~(\ref{eq:demand-model}) with $\zeta=1$,
$m(z)=\frac{1}{2}z^2$, $f(z)=m'(z)=z$, and $h(y)=-\frac{1}{2}y^2-\frac{1}{2}\ln(2\pi)$.
The Gaussian model also satisfies all imposed conditions with high probability with $B_Y\lesssim s\sqrt{\ln T}$, $s=1$, $K=1$, and $G=1$.
\end{example}
\begin{example}[Logistic model]
In the Logistic model the realized demand $y_t$ is supported on $\{0,1\}$,
following the Logistic distribution $\Pr[y_t=1|\phi_t,\theta^*] = e^{\phi_t^\top\theta^*}/(1+e^{\phi_t^\top\theta^*})$.
It is easy to verify that the Logistic model falls into Eq.~(\ref{eq:demand-model}) with $\zeta=1$,
$m(z)=\ln(1+e^z)$, $f(z)=m'(z)=e^z/(1+e^z)$, and $h(y)=1$.
The Logistic model also satisfies all imposed conditions with $B_Y=1$, $s=1$, $K=(1+e^2)^2/e^2$, and $G=1$.
\end{example}


\section{Preliminaries on differential privacy}\label{sec:prelim-privacy}

In this section we present background material on \emph{differential privacy}, the core privacy concept
adopted in this paper.
We start with the introduction of the standard differential privacy concept, and then show
how the privacy concept could be extended to its ``anticipating'' version which is more appropriate for data-driven sequential decision-making problems.
Finally we discuss two fundamental concepts of
\emph{composition} and \emph{post-processing}, which are essential in designing complex differentially private systems.
For a full technical treatment and historical motivations, the readers are referred to the comprehensive review by \cite{dwork2014algorithmic}.

\subsection{Differential privacy}

\emph{Differential privacy} is a mathematically rigorous measure of privacy protection and has been extensively studied and applied
since its proposal in the work of \cite{dwork2006calibrating}.
At a higher level, the fundamental concept behind differential privacy is the \emph{impossibility} of distinguishing two ``neighboring databases'' (differing only on a single entry) 
with high probability, based on publicly available information about the database.
To facilitate such probabilistic indistinguishability, the conventional approach is to artificially calibrate \emph{stochastic noise} into the process
or the outputs of differentially private algorithms.

\begin{figure}[t]
\centering
\includegraphics[width=0.7\textwidth]{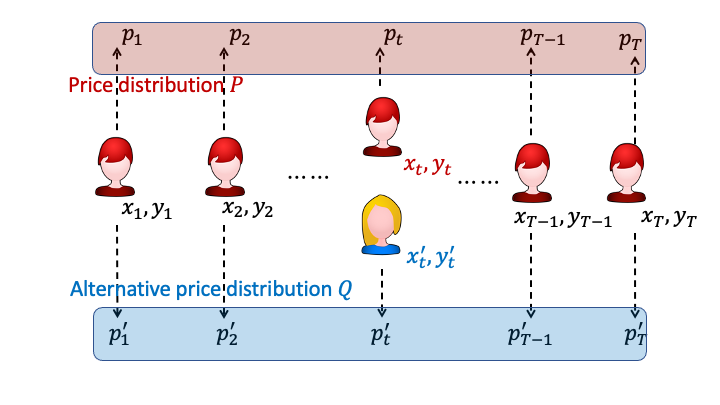}
\caption{{Illustration of the differential privacy concept.}}
\label{fig:dp-illustration}
\end{figure}

{

More specifically, Figure \ref{fig:dp-illustration} gives an intuitive illustration of the differential privacy concept
applied to our dynamic personalized pricing problem.
Suppose at time $t$ the incoming customer with the context vector $x_t$ is being offered price $p_t$ and makes purchase decision $y_t$.
The price decisions $\{p_t\}_{t=1}^T$ produced by the pricing algorithm are usually random, and therefore we can use $P$ to denote
the joint distribution of these prices.
The concept of differential privacy requires that, if a customer's personal data change from $(x_t,y_t)$ to $(x_t',y_t')$,
while all the other $T-1$ customers' data remain unchanged, 
the joint distributions of the posted prices $P$ will change to a distribution $Q$ that is very close to $P$.
The closer $P$ and $Q$ are under the hypothetical personal data change $(x_t,y_t)\to (x_t',y_t')$, 
the better data privacy is protected under the pricing policy.

Why is the close proximity of price distributions $P$ and $Q$ a good measurement of a pricing algorithm's privacy protection?
Assume that a malicious agent  would like to extract the sensitive information of a particular customer of interest, who arrives in the system
at time $t$. The malicious agent must extract such sensitive data based solely on \emph{publicly available} information,
which in this case would be the firm's posted prices $p_1,\cdots,p_T$. {Here, ``public information'' in the differential privacy literature refers to the information or released data that can be accessed by a malicious adversary, because these data are used by the adversary to infer the personalized data of the customers, whose privacy is to be protected.} 
If the price distributions $P$ and $Q$ produced by the pricing algorithm are very similar,
then it is \emph{information-theoretically} not possible for the malicious agent to distinguish with reasonable success probability between a customer $(x_t,y_t)$ and another hypothetical customer $(x_t',y_t')$ (see Figure \ref{fig:dp-illustration}).
This means that no matter how smart the malicious agent is, it is impossible for him to extract very much sensitive data
from the customer of interest simply based on publicly available price information.

Mathematically, we use $D$ to denote the database of all sensitive data $\{(x_t,y_t)\}_{t=1}^T$ for all of the $T$ customers.
For convenience of presentation, we also write $o_t=(x_t,y_t)$.
A database $D'$ that is a \emph{neighboring database} of $D$ if and only if $D'$ and $D$ only differ at a single time period.
More specifically, $D=\{o_t\}_{t=1}^T,D'=\{o_t'\}_{t=1}^T$ are neighboring databases if there exists $t$ such that $o_t\neq o_t'$
and $o_\tau=o_\tau'$ for all $\tau\neq t$.
Suppose a pricing algorithm $A$ operates with input database $D$ and produces randomized price output $A(D)=(p_1,\cdots,p_T)$. 
The following definition gives a rigorous formulation of $(\varepsilon,\delta)$-differential privacy:
}

\begin{definition}[$(\varepsilon,\delta)$-differential privacy \citep{dwork2006our}]
For $\varepsilon,\delta>0$, a randomized algorithm $A$ satisfies $(\varepsilon,\delta)$-differential privacy if for every pair of neighboring databases $D,D'$
and measurable set $\mathcal A\subseteq [\underline p,\overline p]^T$, it holds that
$$
\Pr[A(D)\in\mathcal A]\leq e^{\varepsilon}\Pr[A(D')\in\mathcal A] + \delta.
$$
\label{defn:eps-delta-privacy}
\end{definition}

{ To facilitate the understanding of this definition, we explain why the multiplicative factor $e^{\varepsilon}$ is critical and the role of the parameter $\delta$ in practice.
Let us first explain why the DP-definition in Def.~\ref{defn:eps-delta-privacy} adopts a multiplicative factor $e^{\varepsilon}$ rather than an additive bound of  $|\Pr[A(D)\in\mathcal A]- \Pr[A(D')\in\mathcal A]$.	Imagine two neighboring datasets $D,D'$ give rise to the same output $O$ with probabilities $p_1 = \Pr[O|D]$ and $p_2=\Pr[O|D']$. The key is to prevent a malicious party from distinguishing between $D$ and $D'$ based on the observation of $O$. If an additive guarantee is involved $|p_1-p_2|\leq\varepsilon$, then it is possible that $p_1=0$ and $p_2=\varepsilon$. If this is the case, the adversary would be \emph{100\% sure whether the underlying dataset is $D$ or $D'$} once she observes the output $O$ (since $p_1=0$ implies that it is \emph{impossible} to observe $O$ given $D$). This means that with probability $\varepsilon$, which is usually not that small (e.g., $\varepsilon=0.1$), a \emph{catastrophe} (i.e., an outside adversary being \emph{completely certain} about the customer's private data) will occur with 10\% probability. On the other hand, if the guarantee is multiplicative (e.g., $0.9p_2\leq p_1\leq 1.1p_2$) then the adversary \emph{cannot}
completely distinguish between $D$ and $D'$ no matter how small $p_1$ or $p_2$ is. Following this discussion on the multiplicative factor versus the additive factor, since $\delta$ is an additive term, it corresponds to the probability of a \emph{catastrophe} happening that allows the adversary to completely infer the privacy information about customers' data.  Since we don't want a catastrophe to happen, $\delta$ needs to be set \emph{overwhelmingly} small.  With a tiny $\delta$ value in the DP-definition, more specifically, the adversary is \emph{always} able to conclude that $D$ (or $D'$) is more likely than the other,
but such preference of likelihood is never going to exceed a ratio of $e^{\varepsilon}$. For example, with $\varepsilon=0.1$, the adversary may conclude that $D$ is 10.5\% more possible than $D'$ based on his observations of published data $O$, but will never be able to completely/deterministically distinguish $D$ from $D'$ based on $O$.

}


\subsection{Anticipating differential privacy}
\label{sec:adp}

Despite being a widely adopted measure, the DP notion as stated in Definition \ref{defn:eps-delta-privacy}
cannot be directly applied to dynamic pricing for several reasons. First, Definition \ref{defn:eps-delta-privacy}
would not lead to useful pricing policies.  
This is because, essentially, 
Definition \ref{defn:eps-delta-privacy} requires that conditioned on the output of the \emph{entire}
posted price sequence, the adversary cannot distinguish between $o_t$ and $o_t'$ in a probabilistic sense. 
On the other hand, for high-profit personalized pricing policies, once the customer's personal information $x_t$ changes, the price $p_t$ offered to that customer must change accordingly 
in order to achieve high expected revenue, making inference of $x_t$ much easier given $p_t$.
Furthermore, as we have discussed in the previous paragraphs, the communications of $(x_t,p_t,y_t)$ at time $t$ are secured in practice and
therefore, an adversary should not have the capability of accessing the price $p_t$ at time $t$. From this perspective,  the classical DP notion defined in Definition \ref{defn:eps-delta-privacy}
is too strong since it implicitly allows the adversary to access the price at time $t$ (as $p_t$ belongs to the output $A(D)$).
In a practical setting, however, the adversary is only able to access information during other time periods (e.g., by maliciously sending fake customers to obtain price quotes) to infer the sensitive information about an individual at time $t$. {In other words, in the following anticipating DP definition (see Definition \ref{defn:na-eps-delta-privacy}), the price offered to a specific customer of interest  $p_t$ is not public information, as we can expect basic communication security between the customer and the seller. However, prices offered to \emph{other} customers are considered public information because a malicious adversary could pretend to be a customer and extract such price information, and subsequently infer the private data of the customer of interest based on such extracted price information.}

This argument can be made rigorous by the following proposition.
{ The proposition is similar to Claim 13 of \cite{shariff2018differentially}, by showing that \emph{any} policy satisfying the $(\varepsilon,\delta)$-differential privacy
in Definition \ref{defn:eps-delta-privacy} must suffer regret that is linear in the time horizon $T$.
The proof of Proposition \ref{prop:anticipating-linear-regret} is, however, different from \cite{shariff2018differentially}, since we study
generalized linear models such as the logistic regression model.
We relegate the complete proof to the supplementary material.}
\begin{proposition}
Let $\pi$ be a contextual pricing policy over $T$ periods that satisfies $(\varepsilon,\delta)$-differential privacy as defined in Definition \ref{defn:eps-delta-privacy},
with $\varepsilon<\ln(2)$ and $\delta<1/4$.
Then the worst case regret of $\pi$ is lower bounded by $\Omega(T)$.
\label{prop:anticipating-linear-regret}
\end{proposition}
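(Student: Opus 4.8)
The plan is to exhibit a single fixed instance of the model (a logistic demand model, to stay inside the generalized-linear family) on which \emph{every} $(\varepsilon,\delta)$-differentially private policy with $\varepsilon<\ln 2$ and $\delta<1/4$ suffers $\Omega(T)$ regret, so that the hardness is driven entirely by the privacy constraint and not by the need to learn $\theta^*$ (equivalently, the bound even applies to policies that know $\theta^*$). Concretely, I would fix $d=2$, a parameter $\theta^*$ with $\|\theta^*\|_2\le 1$, a feature map of the form $\phi(x,p)=(g(x),-h(x)p)$, and two context values $x^{(0)},x^{(1)}$ chosen so that (i) $\|\phi(x^{(i)},p)\|_2\le 1$ over the whole price range, so the instance is admissible, and (ii) the expected-revenue curves $r_i(p):=p\,f\bigl((\theta^*)^\top\phi(x^{(i)},p)\bigr)$ (with $f$ the logistic mean function) are each uniquely maximized, at prices $p_0^\ast$ and $p_1^\ast$ that are a constant apart. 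Writing $G_i:=\{p:\ r_i(p_i^\ast)-r_i(p)<\Delta\}$ for the ``near-optimal'' price set of context $i$, one takes $\Delta>0$ a small enough constant, depending only on this fixed instance, so that $G_0\cap G_1=\varnothing$; any price outside $G_i$ then costs at least $\Delta$ in expected period revenue under context $x^{(i)}$. The intuition is that differential privacy forces the price at time $t$ to depend only negligibly on the time-$t$ context, so it cannot lie in the ``right'' near-optimal set for both $x^{(0)}$ and $x^{(1)}$, costing $\Omega(1)$ every period.

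The first substantive step is to turn the $(\varepsilon,\delta)$-DP guarantee on the whole price vector into a statement about the per-period pricing rule. Since the projection $(p_1,\dots,p_T)\mapsto p_t$ is post-processing, Definition~\ref{defn:eps-delta-privacy} implies that for neighboring databases $D,D'$ differing only in the $t$-th entry and every measurable price set $S$, $\Pr[p_t\in S\mid D]\le e^{\varepsilon}\Pr[p_t\in S\mid D']+\delta$. I would invoke this with $D,D'$ that agree on all entries except the time-$t$ context, which is set to $x^{(0)}$ versus $x^{(1)}$. Because the feature sequence is oblivious and the policy is causal --- $p_t$ is drawn from a kernel $K_t(\cdot\mid\mathcal F_{t-1},x_t)$ where $\mathcal F_{t-1}$ depends only on $(x_\tau,y_\tau,p_\tau)_{\tau<t}$ --- the law of $\mathcal F_{t-1}$ does not change when $x_t$ is flipped, so a disintegration argument upgrades the marginal inequality to a kernel inequality: for (almost) every realization of $\mathcal F_{t-1}$, $K_t(S\mid\mathcal F_{t-1},x^{(0)})\le e^{\varepsilon}K_t(S\mid\mathcal F_{t-1},x^{(1)})+\delta$ and symmetrically.

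Next I would feed the policy the random feature sequence in which each $x_t$ is i.i.d.\ uniform on $\{x^{(0)},x^{(1)}\}$. Conditioning on $\mathcal F_{t-1}$, which is independent of $x_t$, the expected period-$t$ regret relative to the clairvoyant benchmark equals $\tfrac12\mathbb E_{\mu}[\mathrm{reg}_0]+\tfrac12\mathbb E_{\nu}[\mathrm{reg}_1]$, where $\mu=K_t(\cdot\mid\mathcal F_{t-1},x^{(0)})$ and $\nu=K_t(\cdot\mid\mathcal F_{t-1},x^{(1)})$ are $(\varepsilon,\delta)$-close and $\mathrm{reg}_i(p)\ge\Delta\cdot\mathbf 1\{p\notin G_i\}$. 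Bounding $\mathbb E_{\mu}[\mathrm{reg}_0]\ge\Delta\,\mu(G_0^c)$, $\mathbb E_{\nu}[\mathrm{reg}_1]\ge\Delta\,\nu(G_1^c)$, and using $G_0\subseteq G_1^c$ (disjointness) together with $\mu(G_0)\le e^{\varepsilon}\nu(G_0)+\delta$ gives
\[
\tfrac12\mu(G_0^c)+\tfrac12\nu(G_1^c)\ \ge\ \tfrac12\bigl(1-\mu(G_0)\bigr)+\tfrac12 e^{-\varepsilon}\bigl(\mu(G_0)-\delta\bigr)\ \ge\ \tfrac12 e^{-\varepsilon}(1-\delta),
\]
where the last inequality minimizes the affine, decreasing expression over $\mu(G_0)\in[0,1]$. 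With $\varepsilon<\ln 2$ and $\delta<1/4$ the right-hand side is at least $\tfrac12\cdot\tfrac12\cdot\tfrac34=\tfrac{3}{16}$, so the expected period-$t$ regret is at least $\tfrac{3\Delta}{16}$ for every $t$. Summing over $t$ yields expected total regret $\ge\tfrac{3\Delta}{16}T$ over the random feature sequence, hence \emph{some} deterministic feature sequence witnesses regret $\Omega(T)$, which is the claim.

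I expect the main obstacle to be the DP-to-kernel reduction rather than the regret arithmetic: one must argue carefully that, although flipping the time-$t$ context also perturbs the realized demand $y_t$ and therefore the laws of $\mathcal F_{t+1},\dots,\mathcal F_{T}$, it is precisely the \emph{current} price kernel $K_t(\cdot\mid\mathcal F_{t-1},x_t)$ that inherits the $(\varepsilon,\delta)$-insensitivity --- this requires post-processing down to the coordinates $(p_1,\dots,p_t)$, the invariance of the law of $(p_1,\dots,p_{t-1})$ under the flip, and a disintegration step. A secondary, routine obstacle is verifying that the logistic instance with the two chosen context values meets all of the imposed regularity and norm conditions while keeping $p_0^\ast,p_1^\ast$ bounded apart and $G_0\cap G_1=\varnothing$; this is a short computation that I would relegate to the supplementary material.
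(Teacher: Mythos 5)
Your proposal is correct and follows essentially the same route as the paper's own proof: a fixed two-context logistic instance with a constant price-separation gap, the observation that $(\varepsilon,\delta)$-differential privacy forces the conditional law of $p_t$ given the history to be nearly the same under the two contexts, uniformly random contexts $x_t$, and the same $\tfrac{3}{16}\Delta$-per-period bound summed over $T$. The only cosmetic differences are that you encode the gap via disjoint near-optimal price sets $G_0,G_1$ rather than the paper's quantity $\Delta_r^*=\min_{p}\max_{j}\{r_j^*-p\Pr[d=1|p,x=j]\}$, and you spell out (via post-processing to $p_t$ and invariance of the pre-$t$ history) the per-period DP inequality that the paper asserts directly.
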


\begin{figure}[t]
\centering
\includegraphics[width=0.7\textwidth]{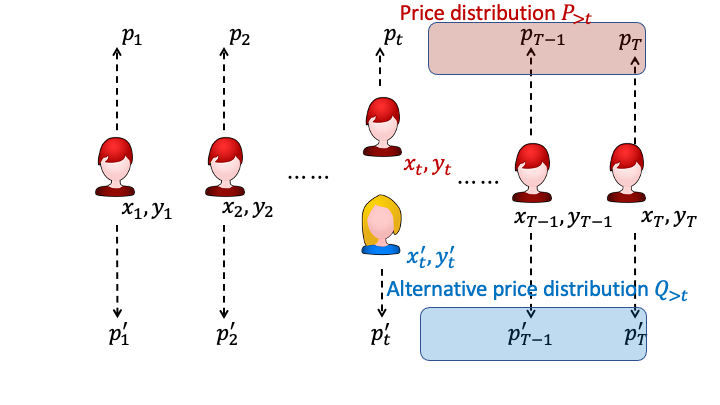}
\caption{Illustration of the anticipating differential privacy (ADP) concept.}
\label{fig:adp-illustration}
\end{figure}

To address the challenges mentioned, \cite{shariff2018differentially}  proposed a notion of  ``joint DP'' in the context of linear contextual bandits. We adopt this notion but refer to it as \emph{anticipating DP}. The  notion of anticipating DP highlights the key property of this definition and our focus on more general dynamic personalized pricing policies. 
{
Figure \ref{fig:adp-illustration} gives an illustration of the anticipating differential privacy (ADP) concept.
Compared to the classical differential privacy notion illustrated in Figure \ref{fig:dp-illustration},
the important difference of ADP is to restrict the output sets to prices strictly \emph{after} a customer of interest $t$
and to only require the distributions of \emph{anticipating} prices (denoted by $P_{>t}$ and $Q_{>t}$) to remain stable
with change of personal information $(x_t,y_t)\to (x_t',y_t')$ at time $t$.
Such a restriction is motivated by the fact that the communication about $(x_t, p_t, y_t)$ at time $t$ is secured and the data prior to time $t$ has no impact on 
the privacy of customer $t$ since the pricing algorithm has no knowledge of $x_t$ before time $t$.
With the formulation of anticipating differential privacy, the challenges we mentioned earlier are resolved
because the pricing decision $p_t$ at time $t$ is no longer in the information set of a potential attacker.
}

Our next definition gives a rigorous mathematical formulation of the anticipating differential privacy notion
illustrated in Figure \ref{fig:adp-illustration}.

\begin{definition}[anticipating $(\varepsilon,\delta)$-differential privacy]
Let $\varepsilon,\delta>0$ be privacy parameters.
A dynamic personalized pricing policy $\pi$ satisfies anticipating $(\varepsilon,\delta)$-differential privacy
if for any pair of neighboring databases $D,D'$ differing at time $t$ (i.e., $o_t\neq o_t'$) and measurable set $\mathcal P_{>t}$, it holds that
\begin{equation}
\Pr[p_{t+1},\cdots,p_T\in\mathcal P_{>t}|\pi,D] \leq e^\varepsilon \Pr[p_{t+1},\cdots,p_T\in\mathcal P_{>t}|\pi,D'] + \delta.
\label{eq:non-ant-private}
\end{equation}
\label{defn:na-eps-delta-privacy}
\end{definition}


We also remark that all privacy definitions in this section are \emph{model-free}, meaning that they do \emph{not} depend 
on how realized demands $y_t$ are modeled.
Hence, the privacy guarantees of our proposed algorithm are independent from the generalized linear demand model in Eqs.~(\ref{eq:exp-family}, \ref{eq:demand-model}).
This fact is essential in practical implementations of privacy-aware algorithms because one cannot build privacy guarantees of an algorithm
on a specific underlying model, which may or may not hold in reality.
The modeling assumptions, on the other hand, are required for \emph{performance analysis} (also known as \emph{utility analysis}, e.g., regret upper bounds or convergence results) of
our proposed privacy-aware pricing policies.

\subsection{Composition in differential privacy} \label{subsec:comp}

When a differentially private algorithm only outputs a single statistic (e.g., the sample mean of the database), Definition \ref{defn:eps-delta-privacy}
is easy to check and verify.
{ In reality, however, a useful differentially private protocol is tasked to release several statistics (sometimes with adaptively chosen queries)
and the \emph{entire} output sequence of a protocol needs to be differentially private.
With multiple output statistics, Definition \ref{defn:eps-delta-privacy} involves high-dimensional vector spaces and 
is therefore difficult to check and verify.
\emph{Composition}, on the other hand, provides convenient \emph{upper bounds} on the privacy guarantee of composite outputs using privacy guarantees of individual queries. Take the dynamic pricing setting as an example. The seller repeatedly interacts with the potential customers by offering different prices. It is therefore essential to leverage a composition guarantee in Fact 1 to make sure that all the prices offered, \emph{when aggregated as a whole}, do not leak consumers' privacy via their personalized data. 
}

{
The left panel of Figure \ref{fig:composition-post-processing} gives an illustration of the concept of composition in the context of personalized pricing.
In this simple example, a centralized pricing algorithm has access to a pool of past customers' sensitive data
and offers personalized prices to three customers.
The rule of composition in differential privacy asserts that the privacy guarantee of the pricing algorithm \emph{worsens} as the pricing algorithm 
offers prices to more customers, each time with access and calculations based on the majority of the same sensitive data.
In particular, if the privacy guarantee for each individual pricing decision is $\varepsilon$, then the joint privacy guarantee 
when $k$ individualized prices are offered will worsen to $\Omega(k\varepsilon)$ or $\Omega(\sqrt{k}\varepsilon)$, depending on the 
detailed composition mechanisms.
}

\begin{figure}[t]
\centering
\includegraphics[width=0.48\textwidth]{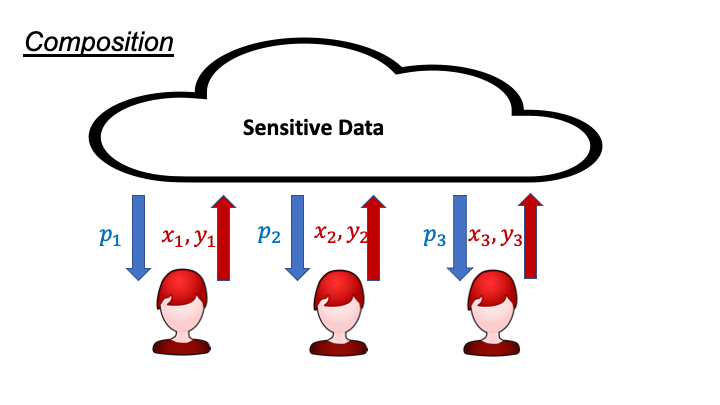}
\includegraphics[width=0.48\textwidth]{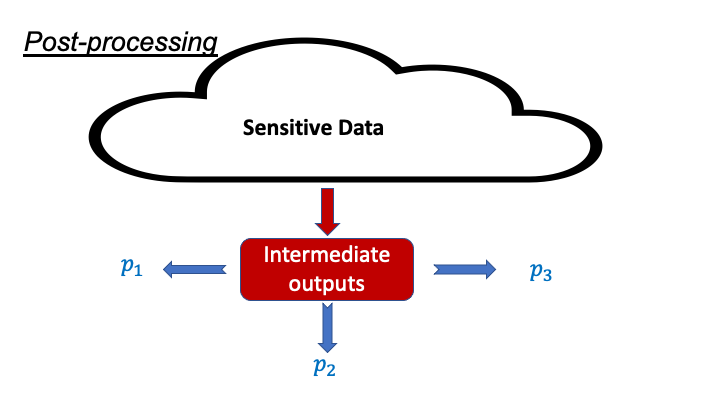}
\caption{Illustration of the concepts of {composition} (left) and {post-processing} (right) in differential privacy.}
\label{fig:composition-post-processing}
\end{figure}

More specifically, let $A=(A_1,\cdots,A_k)$ be a collection of $k$ adaptively chosen queries and suppose that each query $A_k$ satisfies
$(\varepsilon,\delta)$-differential privacy as defined in Definition \ref{defn:eps-delta-privacy}.
The following result is standard in the literature and cited from Theorems 3.16 and 3.20 from \citep{dwork2014algorithmic}.
\begin{fact}
The composite query $A=(A_1,\cdots,A_k)$  satisfies $(\varepsilon',\delta')$-differential privacy with either one of the following:
\begin{enumerate}
\item \emph{(Basic composition)} $\varepsilon' = k\varepsilon$, $\delta'=k\delta$;
\item \emph{(Advanced composition)} $\varepsilon'=\sqrt{2k\ln(1/\tilde\delta)}\varepsilon + k\varepsilon(e^{\varepsilon}-1)$, $\delta'=k\delta+\tilde\delta$
for $\tilde\delta>0$.
\end{enumerate}
\label{fact:composition}
\end{fact}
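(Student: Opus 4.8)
The plan is to treat the two claims separately, as each corresponds to a classical result whose proof follows a standard template. For the basic composition claim ($\varepsilon'=k\varepsilon$, $\delta'=k\delta$), I would proceed by induction on the number of queries $k$. The base case $k=1$ is the hypothesis. For the inductive step, I would consider the joint output $(A_1,\dots,A_{k})$ on a neighboring pair $D,D'$ and condition on the realized value of the first $k-1$ outputs; since the queries are adaptively chosen, the $k$-th query $A_k$ is a (randomized) function of the database and of the prefix $(a_1,\dots,a_{k-1})$, and by assumption $A_k(\cdot; a_1,\dots,a_{k-1})$ is $(\varepsilon,\delta)$-DP for every fixed prefix. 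I would then decompose $\Pr[(A_1,\dots,A_k)(D)\in\mathcal A]$ by integrating over prefixes, apply the inductive hypothesis $(\,(k-1)\varepsilon,(k-1)\delta\,)$ to the prefix distribution and the single-query guarantee to the conditional last coordinate, and collect the $e^{\varepsilon}$ factors multiplicatively and the $\delta$ terms additively. The only subtlety is bookkeeping the additive $\delta$ terms through the integral, which is handled by noting $e^{\varepsilon}\le e^{(k-1)\varepsilon}$ is not needed — one simply adds $\delta$ at each of the $k$ stages.

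For the advanced composition claim, I would invoke the ``privacy loss random variable'' technique. Define, for a fixed output prefix, the privacy loss $L = \ln\frac{\Pr[\text{output}\mid D]}{\Pr[\text{output}\mid D']}$; the key lemma (originally due to Dwork--Rothblum--Vadhan) is that an $(\varepsilon,\delta)$-DP mechanism is, up to a $\delta$-mass ``bad event,'' $(\varepsilon,0)$-DP, and for an $(\varepsilon,0)$-DP mechanism the expected privacy loss is bounded by $\varepsilon(e^{\varepsilon}-1)$ while the loss itself is bounded in $[-\varepsilon,\varepsilon]$. Under adaptive composition the total privacy loss is a sum of $k$ such bounded-difference terms forming a martingale (each conditioned on the prefix), so I would apply Azuma--Hoeffding to the centered sum to conclude that with probability at least $1-\tilde\delta$ the total loss is at most $k\varepsilon(e^{\varepsilon}-1) + \sqrt{2k\ln(1/\tilde\delta)}\,\varepsilon$, and then translate this high-probability bound on the privacy loss back into the $(\varepsilon',\delta')$-DP statement with $\delta' = k\delta + \tilde\delta$ (the $k\delta$ absorbing the per-query bad events via a union bound, the $\tilde\delta$ absorbing the Azuma tail).

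Since the statement explicitly cites Theorems 3.16 and 3.20 of \citet{dwork2014algorithmic}, the honest and efficient route is simply to record that both parts are restatements of those theorems and defer to that reference for the full argument; I would include the inductive sketch above for basic composition (since it is short and self-contained) and cite the martingale argument for advanced composition rather than reproducing the Azuma calculation. The main obstacle, such as it is, is purely expository: making precise the adaptivity of the queries so that ``each $A_i$ is $(\varepsilon,\delta)$-DP'' is interpreted as ``$(\varepsilon,\delta)$-DP conditionally on every realization of the earlier outputs,'' which is exactly the setting in which both composition theorems are stated. With that convention fixed, no new mathematics is required beyond what is in the cited reference.
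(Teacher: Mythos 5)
Your proposal matches the paper exactly: the paper gives no proof of this fact at all, simply citing Theorems 3.16 and 3.20 of \cite{dwork2014algorithmic}, which is precisely the route you recommend. Your accompanying sketches (induction with conditioning on prefixes for basic composition, and the privacy-loss-martingale/Azuma argument of Dwork--Rothblum--Vadhan for advanced composition) are the standard arguments behind those cited theorems and are accurate as stated.
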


{
To avoid potential confusion, we remark that both basic and advanced composition apply to \emph{any} differentially private algorithms.
Indeed, they are two different types of joint privacy guarantees proved using different techniques, reflecting different tradeoffs
when composing multiple differentially private queries/algorithms together.
In particular, the basic composition shows a linear growth in the $\varepsilon$ parameter (i.e., $\varepsilon'=k\varepsilon)$
but it allows the $\delta'$ parameter to be zero when the individual queries are $(\varepsilon,0)$-private.
On the other hand,  the advanced composition allows for a slower growth of the $\varepsilon$ parameter (i.e., $\varepsilon'\asymp \sqrt{k}\varepsilon$)
but must yield an $(\varepsilon',\delta')$ differential privacy guarantee with $\delta'>0$, even if the individual queries are $(\varepsilon,0)$-private.
In this paper, we shall use primarily the advanced composition result because we focus on $(\varepsilon,\delta)$ privacy guarantees with $\delta>0$.
}

\begin{corollary}[Corollary 3.21, \citep{dwork2014algorithmic}]
Given target privacy level $0<\varepsilon'<1$, $\delta'>0$ of the composite query $A$,
it is sufficient for each sub-query to be $(\varepsilon,\delta)$-differentially private with $\varepsilon=\varepsilon'/2\sqrt{2k\ln(2k/\delta)}$ and $\delta = \delta'/2k$.
\label{cor:advanced-composition}
\end{corollary}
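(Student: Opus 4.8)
The plan is to read Corollary~\ref{cor:advanced-composition} off the advanced composition bound of Fact~\ref{fact:composition}(2). That result has a free slack parameter $\tilde\delta>0$ and certifies that $k$ composed $(\varepsilon,\delta)$-private queries are jointly $(\varepsilon'',\delta'')$-private with $\varepsilon''=\sqrt{2k\ln(1/\tilde\delta)}\,\varepsilon+k\varepsilon(e^\varepsilon-1)$ and $\delta''=k\delta+\tilde\delta$; the task is to choose $\tilde\delta$ and the per-query level $(\varepsilon,\delta)$ so that $\varepsilon''\le\varepsilon'$ and $\delta''\le\delta'$, after which monotonicity of differential privacy in both $\varepsilon$ and $\delta$ upgrades the composite guarantee to $(\varepsilon',\delta')$. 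I would take $\tilde\delta:=\delta'/2k$ and $\delta:=\delta'/2k$ (the value stated in the corollary). Then $\delta''=k\delta+\tilde\delta=\delta'/2+\delta'/2k\le\delta'$ since $k\ge1$, and with this $\tilde\delta$ the logarithm in the first term of $\varepsilon''$ becomes $\ln(1/\tilde\delta)=\ln(2k/\delta')$, exactly the logarithm appearing in the stated per-query level $\varepsilon=\varepsilon'/\bigl(2\sqrt{2k\ln(2k/\delta')}\bigr)$ (I read the displayed $\delta$ inside the corollary's logarithm as the target $\delta'$; interpreting it instead as the per-query $\delta=\delta'/2k$ would only enlarge that logarithm, shrink $\varepsilon$, and leave every inequality below intact).

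It then remains to verify $\varepsilon''=\sqrt{2k\ln(2k/\delta')}\,\varepsilon+k\varepsilon(e^\varepsilon-1)\le\varepsilon'$ for this $\varepsilon$. The linear term is immediate, $\sqrt{2k\ln(2k/\delta')}\,\varepsilon=\sqrt{2k\ln(2k/\delta')}\cdot\varepsilon'/\bigl(2\sqrt{2k\ln(2k/\delta')}\bigr)=\varepsilon'/2$. For the quadratic term I would first record that $\varepsilon$ is small: from $0<\varepsilon'<1$ and $2k\ln(2k/\delta')\ge2\ln 2>1$ (using $k\ge1$ and $\delta'\le1$, so that $2k/\delta'\ge2$) it follows that $\varepsilon\le\varepsilon'/2<1/2$. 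The elementary convexity bound $e^x-1\le(e-1)x\le2x$ on $x\in[0,1]$ then gives $k\varepsilon(e^\varepsilon-1)\le2k\varepsilon^2=(\varepsilon')^2/\bigl(4\ln(2k/\delta')\bigr)$, which is at most $\varepsilon'/2$ because $\varepsilon'<1<2\ln 2\le2\ln(2k/\delta')$. Adding the two pieces yields $\varepsilon''\le\varepsilon'/2+\varepsilon'/2=\varepsilon'$, so Fact~\ref{fact:composition}(2) applies and the composite query is $(\varepsilon',\delta')$-differentially private, which is the assertion of the corollary.

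The one genuinely delicate point is the control of the second-order term $k\varepsilon(e^\varepsilon-1)$: one must first establish that the per-query $\varepsilon$ is strictly below $1$, so that $e^\varepsilon-1$ can legitimately be replaced by a linear function of $\varepsilon$, and then track the constants to see that the residual contribution, of order $(\varepsilon')^2/\ln(2k/\delta')$, is comfortably dominated by the remaining half $\varepsilon'/2$ of the privacy budget. Everything else is a routine substitution into Fact~\ref{fact:composition}(2): the choice of the slack $\tilde\delta$, the induced split of the failure budget $\delta'$, the exact cancellation in the linear term, and the final assembly. In particular, the mild notational ambiguity in the corollary's statement between the per-query $\delta$ and the target $\delta'$ inside the logarithm is harmless, since resolving it either way only shrinks $\varepsilon$ and hence preserves all of the required inequalities.
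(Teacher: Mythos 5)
Your derivation is correct. Note that the paper offers no proof of this statement at all---it is imported verbatim as Corollary~3.21 of \cite{dwork2014algorithmic}---so there is no in-paper argument to compare against; what you have written is essentially the standard derivation of that corollary from the advanced composition bound, i.e., Fact~\ref{fact:composition}(2): set $\tilde\delta$ and the per-query $\delta$ to split the failure budget $\delta'$, observe that the linear term contributes exactly $\varepsilon'/2$ by construction, and check that the second-order term $k\varepsilon(e^{\varepsilon}-1)\le 2k\varepsilon^2$ is absorbed by the remaining $\varepsilon'/2$ once one knows $\varepsilon<1$. The only points worth flagging are cosmetic: you implicitly use $\delta'\le 1$ to get $2k/\delta'\ge 2$ (harmless, since $(\varepsilon,\delta)$-privacy is vacuous for $\delta\ge 1$), and your resolution of the corollary's ambiguous $\ln(2k/\delta)$---noting that either reading only shrinks $\varepsilon$ and preserves all inequalities---is exactly the right way to handle it.
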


\subsection{Post-processing in differential privacy}\label{subsec:post-processing}

Practical privacy-aware algorithms usually involve several separate sub-routines.
In most of the cases, not all sub-routines access the sensitive database: some sub-routines may only process
the results from other sub-routines.
{ The principle of \emph{post-processing} states that one only needs to preserve the privacy of those sub-routines with access to the sensitive database
in order to argue for privacy protection of the entire algorithm. For example, in dynamic pricing, algorithms are developed into different components and only one of them directly accesses the 
sensitive data. It is therefore necessary to use the concept of post-processing to argue that the entire algorithm viewed as a whole
does not leak consumers' private personalized data.}

{
The right panel of Figure \ref{fig:composition-post-processing} gives an intuitive illustration of the post-processing concept in differential privacy.
Suppose that an algorithm with full access to all sensitive data has produced some intermediate results (as shown in the red square of the illustration), and these intermediate results have already satisfied the definitions of differential privacy.
Further assume that there is a downstream algorithm, which operates arbitrarily on the intermediate results to produce the personalized prices $p_1,p_2,p_3,\ldots$, \emph{without accessing the sensitive data any more}. 
Then the \emph{post-processing} asserts that there is no need to worry about potential privacy leakages of the downstream algorithm because the intermediate results have already been privatized. 
This useful concept makes it easier to design multi-step, sophisticated privacy-preserving algorithms.
}

More specifically, let $A$ be a sub-routine with access to the sensitive database and $B$ be a sub-routine that only depends on the results of $A$.
\begin{fact}[Proposition 2.1, \citep{dwork2014algorithmic}]
Suppose the outputs of sub-routine $A$ satisfy $(\varepsilon,\delta)$-differential privacy.
Then the outputs of sub-routine $B$ also satisfy $(\varepsilon,\delta)$-differential privacy.
\label{fact:post-processing}
\end{fact}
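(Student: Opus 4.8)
The plan is to reduce the claim to the defining inequality of $(\varepsilon,\delta)$-differential privacy for the database-accessing sub-routine $A$ via a change-of-variables argument, treating first a deterministic post-processing map and then a randomized one through a mixture decomposition. I would write the composite output as $B = g\circ A$, where $g$ is the (possibly randomized) transformation that sub-routine $B$ applies to the output of $A$; the key structural fact is that any internal randomness used by $g$ is drawn independently of the database $D$ and of the randomness internal to $A$. Let $\mathcal R$ and $\mathcal R'$ denote the output spaces of $A$ and $B$ respectively.

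\textbf{Deterministic case.} First I would suppose $g$ is deterministic and measurable. Fix any pair of neighboring databases $D,D'$ and any measurable set $\mathcal S\subseteq\mathcal R'$, and set $\mathcal T = g^{-1}(\mathcal S) = \{r\in\mathcal R:\ g(r)\in\mathcal S\}$, which is measurable. Then $\{B(D)\in\mathcal S\}$ and $\{A(D)\in\mathcal T\}$ are the same event, and likewise for $D'$, so applying the $(\varepsilon,\delta)$-privacy guarantee of $A$ to the set $\mathcal T$ yields
$$\Pr[B(D)\in\mathcal S] = \Pr[A(D)\in\mathcal T] \le e^{\varepsilon}\Pr[A(D')\in\mathcal T] + \delta = e^{\varepsilon}\Pr[B(D')\in\mathcal S] + \delta,$$
which is exactly the $(\varepsilon,\delta)$-differential privacy of $B$.

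\textbf{Randomized case.} Next I would lift this to a randomized $g$ using the standard representation of a randomized map as a mixture of deterministic ones: there is a seed $\omega\sim\mu$, independent of $(D,A)$, such that conditional on $\omega=w$ the map $g(\cdot\,;w)$ is deterministic. Conditioning on $\omega$ and invoking the deterministic case for each fixed $w$, then integrating against $\mu$,
$$\Pr[B(D)\in\mathcal S] = \int \Pr[g(A(D);w)\in\mathcal S]\,\mu(\mathrm{d}w) \le \int\!\big(e^{\varepsilon}\Pr[g(A(D');w)\in\mathcal S]+\delta\big)\mu(\mathrm{d}w) = e^{\varepsilon}\Pr[B(D')\in\mathcal S]+\delta,$$
using only monotonicity and linearity of the integral.

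The only genuinely delicate point is the mixture representation of $g$ together with the independence of its randomness from the database and from $A$'s internal coins; once that is in place, everything reduces to a direct invocation of $A$'s guarantee, with no further probabilistic inequalities. I would also remark that the identical argument goes through verbatim when "$(\varepsilon,\delta)$-differential privacy" is replaced throughout by the anticipating notion of Definition \ref{defn:na-eps-delta-privacy}, since that definition has the same multiplicative-plus-additive form, merely with the output coordinates restricted to the anticipating prices $p_{t+1},\dots,p_T$; this is precisely what lets us apply the post-processing principle to the downstream price-optimizer component of the proposed policy.
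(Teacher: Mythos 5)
Your proof is correct and is essentially the standard argument behind the cited result (Proposition 2.1 of Dwork and Roth): the deterministic case via the preimage $g^{-1}(\mathcal S)$ and the randomized case via a mixture decomposition with a database-independent seed, which is exactly how the paper's reference establishes Fact \ref{fact:post-processing}. One caveat on your closing remark: it does \emph{not} apply verbatim to the price optimizer, because the optimizer accesses the sensitive $x_t$ directly at time $t$; the paper explicitly notes that post-processing cannot be invoked there and instead proves the anticipating guarantee separately in Proposition \ref{prop:na-privacy}, exploiting that the output set excludes $p_t$ so the dependence on $x_t$ is harmless for the anticipating prices.
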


\section{Algorithmic framework}
\label{sec:algo}
In this section we present the framework of our proposed privacy-aware dynamic personalized pricing algorithm.

{ A straightforward idea is to directly inject noise into customers' sensitive information (e.g., $x_t$) to protect privacy. However, as we will explain later in the paper (see Section~\ref{subsec:input-perturbation}), such a method will fail because the features of each individual customer are relatively independent of each other. Thus, an excessively large magnitude of noise needs to be injected, which incurs a large regret. Therefore, this paper will develop a new dynamic personalized pricing algorithm based on the privacy-preserving maximum likelihood estimator.
}
To better illustrate our algorithm, we first introduce two types of routines used in our algorithm: the \emph{private releasers} that access the sensitive database
and produce differentially private outputs, and the \emph{price optimizers} that access only the outputs from private releasers
to assign near-optimal and privacy-aware prices.
Then a pseudo-code description of our main algorithm will be presented and discussed.

\subsection{Private releasers and price optimizers}

Our proposed privacy-preserving dynamic personalized pricing algorithm consists of several sub-routines.
We divide the sub-routines into two classes: the \emph{private releasers} and the \emph{price optimizers}.

The \emph{private releasers} access the sensitive database $\{x_t,p_t,y_t\}_{t=1}^T$ and output differentially private intermediate results.
For example, in Figure \ref{fig:framework} the \textsc{PrivateCov} routine returns differentially private sample covariance matrices 
and the \textsc{PrivateMLE} routine returns differentially private maximum likelihood estimates.
For private releaser routines, the differential privacy notions are classical (in Definition \ref{defn:eps-delta-privacy}).
Note that, in addition to differential privacy guarantees, the sub-routines also need to satisfy the anticipating constraints for pricing algorithms (i.e.,
accessing only $\{x_\tau,y_\tau,p_\tau\}_{\tau<t}$ to produce any outputs being used at time $t$).

The \emph{price optimizer}, on the other hand, performs optimization and outputs the prices $p_t$ for each time period $t$.
To ensure privacy, our designed price optimizer will \emph{not} directly access historical sensitive data $\{x_\tau,y_\tau,p_\tau\}_{\tau<t}$.
Instead, it optimizes the offering price $p_t$ based only on $x_t$ (the personal information of the incoming customer) and intermediate quantities computed
by private releasers up to time $t$.

Because our designed price optimizer has access to $x_t$ at time $t$, one cannot directly apply the post-processing rule in Fact \ref{fact:post-processing}
to argue privacy guarantees. Nevertheless, the following proposition shows that if all private releasers are differentially private, then so is the price optimizer
in the sense of anticipating differential privacy in Definition \ref{defn:na-eps-delta-privacy}.
The proof of Proposition \ref{prop:na-privacy} is placed in the supplementary material.
\begin{proposition}
Let $(a_1,\cdots,a_T)$ be the outputs of private releasers at each time period $t$ and suppose the entire output sequence $(a_1,\cdots,a_T)$
satisfies $(\varepsilon,\delta)$-differential privacy.
Suppose the price $p_t$ at time $t$ is a deterministic function of $x_t$ and $a_1,\cdots,a_{t-1}$.
Then the pricing policy satisfies anticipating $(\varepsilon,\delta)$-differential privacy.
\label{prop:na-privacy}
\end{proposition}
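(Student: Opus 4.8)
The plan is to reduce the anticipating DP property of the pricing policy to the classical $(\varepsilon,\delta)$-DP property of the releaser outputs $(a_1,\dots,a_T)$, exploiting the fact that the price $p_t$ depends only on $x_t$ and the \emph{past} releaser outputs $a_1,\dots,a_{t-1}$. Fix neighboring databases $D,D'$ that differ exactly at time period $t$ (so $o_t \neq o_t'$ but $o_\tau = o_\tau'$ for all $\tau \neq t$), and fix a measurable set $\mathcal P_{>t}$ of price trajectories $(p_{t+1},\dots,p_T)$. The goal is to bound $\Pr[(p_{t+1},\dots,p_T)\in\mathcal P_{>t}\mid \pi, D]$ in terms of the corresponding probability under $D'$.

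First I would observe that for every $\tau > t$, the price $p_\tau$ is a deterministic function of $x_\tau$ and $a_1,\dots,a_{\tau-1}$, and the contexts $x_\tau$ are fixed in advance (the oblivious-adversary model, and moreover $x_\tau = x_\tau'$ for $\tau > t$ since $D,D'$ agree outside time $t$). Hence the entire anticipating price vector $(p_{t+1},\dots,p_T)$ is a fixed measurable function $g$ of the releaser output sequence $(a_1,\dots,a_{T-1})$ — indeed only of $(a_1,\dots,a_{T-1})$, and even only of those coordinates, with the fixed contexts baked into $g$. Therefore the event $\{(p_{t+1},\dots,p_T)\in\mathcal P_{>t}\}$ equals the event $\{(a_1,\dots,a_{T-1})\in g^{-1}(\mathcal P_{>t})\}$, and $g^{-1}(\mathcal P_{>t})$ is a measurable set in the output space of the releasers.

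Next I would apply the hypothesis that $(a_1,\dots,a_T)$ (hence its prefix, or one simply takes $\mathcal A = g^{-1}(\mathcal P_{>t})\times(\text{full space for }a_T)$) satisfies classical $(\varepsilon,\delta)$-DP with respect to the same neighboring relation. Since $D$ and $D'$ differ only at time $t$, they are neighbors in the sense of Definition~\ref{defn:eps-delta-privacy}, so
\[
\Pr\big[(a_1,\dots,a_{T-1})\in g^{-1}(\mathcal P_{>t})\;\big|\;D\big]\;\le\;e^{\varepsilon}\,\Pr\big[(a_1,\dots,a_{T-1})\in g^{-1}(\mathcal P_{>t})\;\big|\;D'\big]+\delta.
\]
Rewriting both sides back in terms of prices via the identity from the previous step yields exactly the inequality \eqref{eq:non-ant-private} required by Definition~\ref{defn:na-eps-delta-privacy}. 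A small subtlety to address carefully is that the pricing policy may have its own internal randomness beyond the releasers; here the proposition's hypothesis that $p_t$ is a \emph{deterministic} function of $x_t$ and $a_1,\dots,a_{t-1}$ means all randomness in the prices is channeled through the releaser outputs, so the pushforward argument is clean. One should also note that the releasers themselves must respect the anticipating constraint — $a_\tau$ uses only data from periods strictly before the period where it is consumed — but this is part of the setup and does not enter the probabilistic manipulation.

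The main obstacle, such as it is, is purely a matter of bookkeeping: being precise about which releaser outputs feed into which prices and verifying that the composite map $g$ is measurable and that the neighboring-database relation is preserved between the two privacy notions. There is no quantitative estimate to control — the $(\varepsilon,\delta)$ parameters pass through verbatim — so the proof is essentially a post-processing argument (Fact~\ref{fact:post-processing}) adapted to account for the policy's \emph{legitimate} access to the non-sensitive current context $x_t$, which is exactly why the conclusion is anticipating DP rather than ordinary DP.
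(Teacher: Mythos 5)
Your proposal is correct and follows essentially the same route as the paper: define the pre-image set of releaser outputs whose induced future prices land in $\mathcal P_{>t}$, observe that this set is identical under $D$ and $D'$ because $x_\tau=x_\tau'$ for all $\tau>t$ and the price maps are deterministic, and then invoke the $(\varepsilon,\delta)$-DP guarantee of $(a_1,\cdots,a_T)$ on that event. Your added remarks on measurability and on randomness being channeled entirely through the releasers are fine but do not change the argument.
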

\begin{remark}
The conclusion in Proposition \ref{prop:na-privacy} holds for $p_t$ as randomized functions of $x_t$, $a_1,\cdots,a_{t-1}$ as well.
Nevertheless, because in our proposed algorithm the price optimizer is deterministic, we shall restrict ourselves to deterministic functions.
\end{remark}

\subsection{Our policy}\label{subsec:policy}

\begin{figure}[t]
\centering
\includegraphics[width=0.9\textwidth]{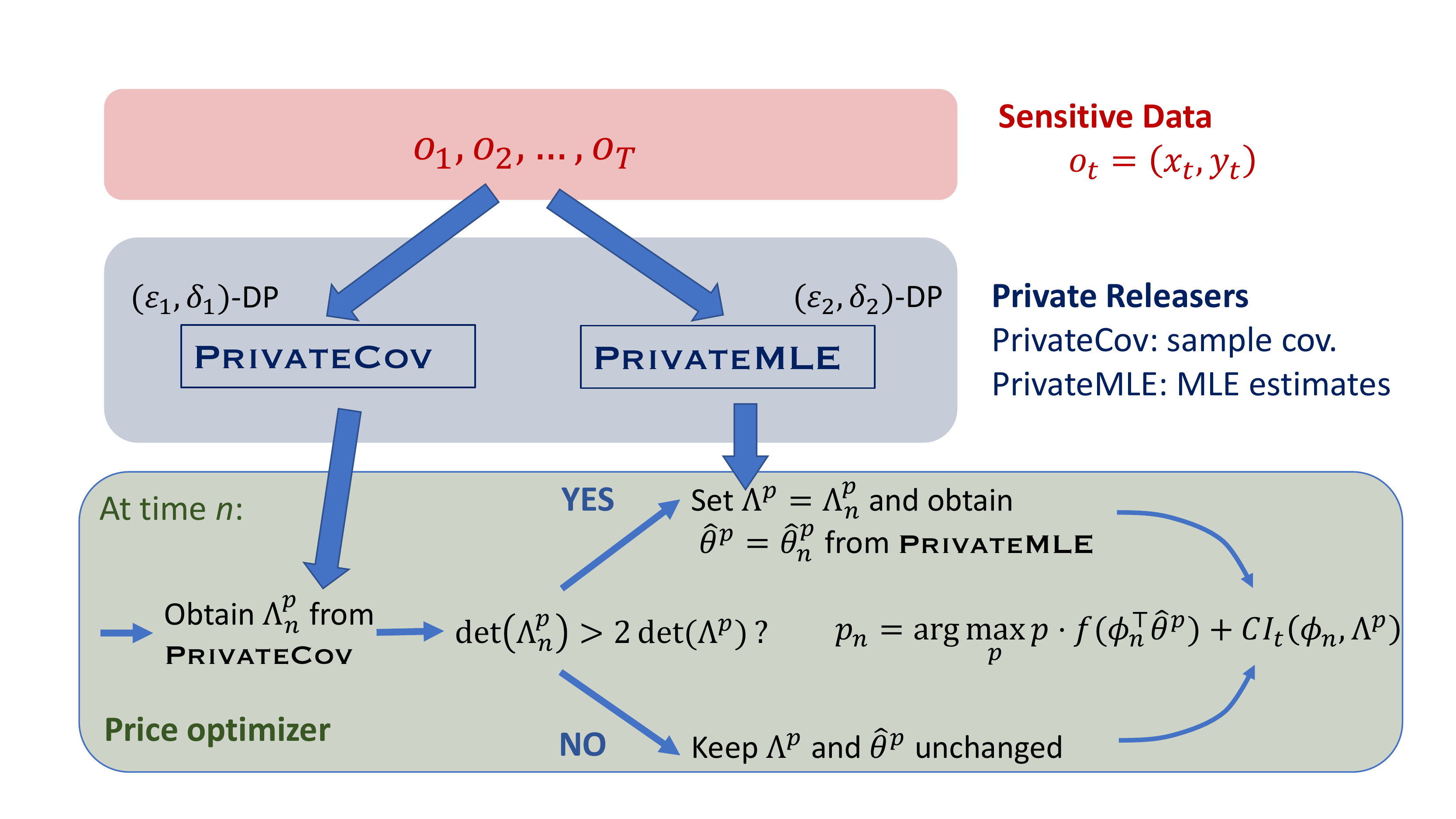}
\caption{Our algorithm framework. Details and explanations in Section~\ref{subsec:policy} in the main text.}
\label{fig:framework}
\end{figure}

In Figure \ref{fig:framework} we depict a high-level framework of our privacy-aware dynamic personalized pricing policy.
It shows a three-layer structure of the proposed policy.
The first layer is the \emph{sensitive database}, consisting of data $\{o_t=(p_t,x_t,y_t)\}_{t=1}^T$; and its privacy needs to be protected.
The second layer is \emph{private releasers}, which consists of two sub-routines \textsc{PrivateCov} (see Algorithm \ref{alg:private-cov} in Section \ref{subsec:privacy-cov}) and \textsc{PrivateMLE} (see Algorithm \ref{alg:private-mle} in Section \ref{subsec:private-mle}).
The \textsc{PrivateCov} sub-routine supplies differentially private sample covariance matrices $\Lambda_n^p\in\mathbb R^{d\times d}$
at every time period.
The \textsc{PrivateMLE} sub-routine outputs differentially private maximum likelihood estimates $\hat\theta_n^p$, but only when such estimates
are requested by the price optimizer.
The \textsc{PrivateCov} sub-routine is designed to be $(\varepsilon_1,\delta_1)$-differentially private and the \textsc{PrivateMLE} routine is 	$(\varepsilon_2,\delta_2)$-differentially private,
so that all outputs from private releasers are $(\varepsilon_1+\varepsilon_2,\delta_1+\delta_2)$-differentially private, thanks to the basic composition rule 
in Fact \ref{fact:composition}.

The third layer of our proposed policy is the price optimizer.
As discussed in the previous section, to ensure privacy the price optimizer shall not access the sensitive database $D$ directly.
Instead it should base its decision of $p_t$ on outputs from private releasers and $x_t$ only.
The last block in Figure \ref{fig:framework} illustrates the basic flow of our price optimizer.
The price optimizer maintains $\Lambda^p$ and $\hat\theta^p$ throughout the pricing process, both of which are obtained directly
from private releasers without accessing the sensitive database.
At the beginning of time period $n$, the price optimizer first obtains sample covariance $\Lambda_n^p$ from the \textsc{PrivateCov} routine.
The optimizer then decides whether to request fresh MLE from the \textsc{PrivateMLE} routine by comparing $\det(\Lambda_n^p)$ with $\det(\Lambda^p)$,
in addition to some other criteria specified in Algorithm \ref{alg:framework}.
Afterwards, $p_t$ is selected as the maximizer of an upper confidence bound of the expected revenue on $x_t$.
It is only during this step that the personal information $x_t$ is involved.

\begin{algorithm}[t]
\caption{The framework of privacy-aware dynamic personalized pricing}
\label{alg:framework}
\begin{algorithmic}[1]
\State \textbf{Input}: privacy parameters $\varepsilon_1,\delta_1,\varepsilon_2,\delta_2>0$, number of pure-exploration periods $T_0$,
maximum number of \textsc{PrivateMLE} calls $D_{\infty}$, regularization parameter $\rho\geq 1$, confidence parameter $\gamma>0$.
\State \textbf{Output}: the offering prices $p_1,p_2,\cdots,p_T$;
\State $\delta_2'=\frac{\delta_2}{2D_\infty}$, $\varepsilon_2'\gets\frac{\varepsilon_2}{2\sqrt{2D_\infty\ln(1/\delta_2')}}$, 
$\Lambda^p = \rho I_d$, $\hat\theta^p=0$, $D_{\mathrm{MLE}}=0$;
\State For the first $T_0$ time periods, offer prices $p_t$ uniformly at random from $[0,1]$;
\For{$n=T_0+1,\cdots, T$}
	\State Obtain $\Sigma_n^p\gets \textsc{PrivateCov}(n,\varepsilon_1,\delta_1)$ and let $\Lambda_n^p=\Sigma_n^p+\rho I_d$;
	\If{$\det(\Lambda_n^p)> 2\det(\Lambda^p)$ and $D_{\mathrm{MLE}} < D_\infty$}
		\State $\hat\theta^p\gets \textsc{PrivateMLE}(n,\rho,\varepsilon_2',\delta_2')$, $\Lambda^p\gets\Lambda_n^p$, $D_{\mathrm{MLE}}\gets D_{\mathrm{MLE}}+1$;
	\EndIf
	\State Offer price $p_n = \arg\max_{p\in[0,1]}\min\{1, pf(\phi_n^\top\hat\theta^p) + \gamma\sqrt{\phi_n^\top(\Lambda^p)^{-1}\phi_n}\}$,
	where $\phi_n=\phi(x_n,p_n)$;
\EndFor
\end{algorithmic}
\end{algorithm}

Algorithm \ref{alg:framework} also gives a pseudo-code description of our proposed pricing policy, which is more accurate and detailed than Figure \ref{fig:framework}.
Note that Algorithm \ref{alg:framework} involves several algorithmic parameters, such as $T_0,D_\infty,\gamma$, and $\rho$,
which do \emph{not} affect the privacy guarantees of the algorithm but do have an impact on its performance.
How to set these algorithmic parameters will be given later in Section~\ref{sec:regret} when we analyze the regret performance of Algorithm \ref{alg:framework}. Before that, we will first make a few important remarks about  Algorithm \ref{alg:framework}.

\begin{remark}[Time complexity] The time complexity for the \textsc{PrivateMLE} sub-routine is the same as traditional maximum likelihood estimation calculations,
if not easier (since the overall formulation is convex), because only the objective is perturbed with a linear term.
The time complexity for the \textsc{PrivateCov} sub-routine is slightly more expensive: at each time $n$, the tree-based protocol needs to update $O(\log n)$ nodes on the binary tree instead of just adding $\phi_t\phi_t^\top$ to a counting matrix.
Overall, the algorithm's time complexity is $O(d^3 T\ln T)$ (note $d^3$ comes from the computation of the determinant), in addition to $O(d\ln T)$ number of MLE calculations.
The next section gives more details on the two private releasers.
\end{remark}

{ 
\begin{remark}[Difference from Generalized Linear Contextual Bandit]
This remark explains how the algorithm differs from a classic generalized linear bandit algorithm without privacy consideration.
The major difference is that when there is no privacy consideration, there is no need (and no use) to randomize 
and therefore vanilla maximum likelihood estimation (MLE) can be used to obtain an estimated model $\widehat\theta_t$ at every time period $t$,
with standard statistical analysis of the errors for such estimates (see \cite{li2017provably}). 
With privacy constraints, such maximum likelihood estimates need to be carefully privatized by calibrating artificial noise into the objective
of the MLE (the \textsc{PrivateMLE} sub-routine later in Algorithm \ref{alg:private-mle}), which also calls for more detailed perturbation-based statistical analysis.
Another difference is that without privacy constraints, the seller could update its model estimate $\widehat\theta_t$ at \emph{every} time period
to obtain the most accurate and updated information.
With privacy constraints, however, the seller cannot afford to adaptively compute a model estimate after each time period due to 
composition constraints and must perform such model estimates sparingly, relying further on a signal scheme also privatized
by incorporating artificial noise matrices (see the \textsc{PrivateCov} sub-routine later in Algorithm \ref{alg:private-cov}).
\end{remark}
}

{ 
\begin{remark}[Exploration Phase]
In addition, we also clarify that the forced exploration step in our algorithm is optional: the proposed algorithm remains valid 
(i.e., satisfying suitable differential privacy constraints and achieving small overall regret) without the forced exploration step (see Theorem \ref{thm:main-regret-adversarial-context} in Sec.~\ref{sec:general_case} where $T_0=0$). 
The forced exploration helps to ensure \emph{improved} regret guarantee when  there are additional distributional assumptions  on  contextual vectors (see Section \ref{subsec:improved}).
This forced exploration aims to make sure the sample covariance of the context vectors is well-conditioned, which leads to improved regret guarantees of privatized MLE. 
\end{remark}
}

\section{Design and analysis of private releasers}\label{sec:privacy}

In this section, we give detailed designs of the two private releasers: the \textsc{PrivateCov} sub-routine
and the \textsc{PrivateMLE} sub-routine.
We prove that both of them satisfy $(\varepsilon,\delta)$-differential privacy as defined in Definition \ref{defn:eps-delta-privacy}.
We also prove several utility guarantees that will be helpful later in the regret analysis of the pricing policy.
{ Figure \ref{fig:proof-flow} shows the flow of our proof framework.}
Due to space constraints and exposition concerns, all proofs to technical lemmas or propositions in this section are placed in the supplementary material.

\subsection{The \textsc{Priva	teCov} sub-routine}\label{subsec:private-cov}
\label{subsec:privacy-cov}
Algorithm \ref{alg:private-cov} gives a pseudo-code description of the \textsc{PrivateCov} sub-routine.
Note that in Algorithm \ref{alg:private-cov} the $\Sigma_n^p$ covariance matrices are released sequentially once each time period,
and $\textsc{PrivateCov}(n,\varepsilon,\delta)$ would simply be the $\Sigma_n^p$ matrix released at the end of iteration $n-1$.

\begin{algorithm}[t]
\caption{The \textsc{PrivateCov} sub-routine}
\label{alg:private-cov}
\begin{algorithmic}[1]
\Function{PrivateCov}{$T,\varepsilon,\delta$} \Comment{returns $\Sigma_1^p,\cdots,\Sigma_{T-1}^p$}
	\State $\delta'\gets \frac{\delta}{2\lceil\log_2 T\rceil}$, $\varepsilon' \gets \frac{\varepsilon}{2\lceil\log_2 T\rceil\ln(1/\delta')}$, 
	$\sigma_{\varepsilon',\delta'}^2 = \frac{2\ln(1.25/\delta')}{(\varepsilon')^2}$, $m=\lceil\log_2 T\rceil$;
	\State Initialize $\Sigma(\ell)=\hat\Sigma(\ell)=0$ for all $\ell=0,\cdots,m-1$;
	\For{$n=1,2,\cdots,T-1$}
		\State Express $n$ in its binary form: $n=\sum_{\ell=0}^{m-1}b_n(\ell)2^{\ell}$, $b_n(\ell)\in\{0,1\}$;
		\State Let $\ell_n\gets \min\{\ell: b_n(\ell)=1\}$ be the least significant bit of $n$;
		\State Update $\Sigma(\ell_n) \gets\phi_n\phi_n^\top + \sum_{\ell<\ell_n}\Sigma(\ell)$ and $\Sigma(\ell)\gets\hat\Sigma(\ell)\gets 0$ for all $\ell<\ell_n$;
		\State Calibrate noise: $\hat\Sigma(\ell_n)\gets \Sigma(\ell_n) + W^n$ where $W^n_{ij}=W^n_{ji}\overset{i.i.d.}{\sim}\mathcal N(0, \sigma_{\varepsilon',\delta'}^2)$;
		\State Release $\Sigma_n^p = \sum_{\ell=0}^{m-1}b_n(\ell)\hat\Sigma(\ell)$;
	\EndFor
\EndFunction
\end{algorithmic}
\end{algorithm}

\begin{figure}[t]
	\centering
	\includegraphics[width=0.6\textwidth]{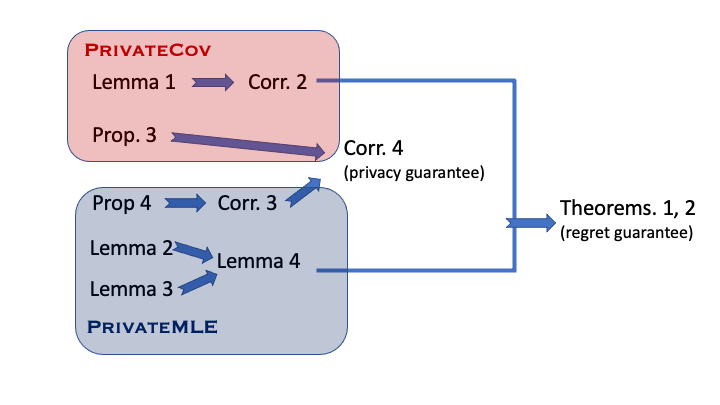}
	\caption{Flow of our proof framework.}
	\label{fig:proof-flow}
\end{figure}\textbf{}

{  Algorithm \ref{alg:private-cov} is based on the AnalyzeGauss framework in \citep{dwork2014analyze} coupled with the \emph{tree-based aggregation} technique for releasing continual observations \citep{dwork2010differential,chan2011private}. The AnalyzeGauss  by \cite{dwork2014analyze} develops a Gaussian mechanism on releasing a single covariance matrix privately from the data.  On the other hand, tree-based aggregation provides a general protocol on how to continually release sequentially updated statistics (e.g., partial sums of sample covariance matrices) under privacy constraints. For our \textsc{PrivateCov}, by calibrating symmetric random Gaussian matrices $\{W^n\}$ into the sample covariances under the tree-based aggregation,  one achieves differential privacy.}
The following proposition claims that the outputs $(\Sigma_1^p,\cdots,\Sigma_{T-1}^p)$ of Algorithm \ref{alg:private-cov} satisfy $(\varepsilon,\delta)$-differential privacy.
\begin{proposition}
The outputs of Algorithm \ref{alg:private-cov}, $(\Sigma_1^p, \ldots, \Sigma_{T-1}^p)$ satisfy $(\varepsilon,\delta)$-differential privacy.
\label{prop:privacy-cov}
\end{proposition}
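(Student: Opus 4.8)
The plan is to reduce the statement to the privacy of the internal \emph{noisy p-sums} produced by Algorithm~\ref{alg:private-cov}, and then combine the Gaussian mechanism with advanced composition over the binary-tree structure. For the duration of this proof, the feature vectors $\phi_1,\ldots,\phi_{T-1}$ (equivalently, the data $\{o_t\}$) are regarded as fixed inputs and the only randomness is the family of symmetric Gaussian matrices $\{W^n\}$. First I would note that every released matrix $\Sigma_n^p=\sum_\ell b_n(\ell)\hat\Sigma(\ell)$ is a \emph{deterministic} function of the noisy p-sums $\hat\Sigma(\cdot)$ finalized up to time $n$; hence, by the post-processing property (Fact~\ref{fact:post-processing}), it suffices to show that the entire collection of finalized noisy p-sums $\{\hat\Sigma(\ell_n)\}_{n=1}^{T-1}$ is $(\varepsilon,\delta)$-differentially private in the sense of Definition~\ref{defn:eps-delta-privacy}.

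The second step is a \emph{per-p-sum} privacy guarantee. Fix a neighboring pair $D,D'$ differing only at time $t^*$, so the only change is $\phi_{t^*}\phi_{t^*}^\top\mapsto \phi_{t^*}'(\phi_{t^*}')^\top$. Each finalized p-sum has the form $\Sigma(\ell_n)=\sum_{t\in I}\phi_t\phi_t^\top$ for a contiguous block $I$ of time indices, so changing time $t^*$ perturbs it in Frobenius norm by at most $\|\phi_{t^*}\phi_{t^*}^\top-\phi_{t^*}'(\phi_{t^*}')^\top\|_F$, which is bounded by a constant since $\|\phi(x,p)\|_2\le 1$ for all $x,p$. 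Because $W^n$ is symmetric, its on-and-above-diagonal entries form a vector of i.i.d.\ $\mathcal N(0,\sigma_{\varepsilon',\delta'}^2)$ coordinates and the $\ell_2$-norm of the corresponding vectorized perturbation is bounded by its Frobenius norm; hence the ordinary vector Gaussian mechanism applies, which is exactly the \textsc{AnalyzeGauss} argument of \cite{dwork2014analyze}. This shows that each individual release $\hat\Sigma(\ell_n)=\Sigma(\ell_n)+W^n$ is $(\varepsilon',\delta')$-differentially private for the noise scale $\sigma_{\varepsilon',\delta'}^2=2\ln(1.25/\delta')/(\varepsilon')^2$ fixed in the algorithm.

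The third step is the tree-based accounting and composition. By the structure of tree-based aggregation \citep{dwork2010differential,chan2011private}, the index $t^*$ lies in at most $m=\lceil\log_2 T\rceil$ of the contiguous blocks $I$ that are ever finalized (one per level of the binary tree over $\{1,\ldots,T-1\}$, i.e., the nodes on the root-to-leaf path of leaf $t^*$); all other finalized p-sums are identical under $D$ and $D'$. Thus, for this neighboring pair, the joint law of all finalized p-sums differs only through at most $m$ releases, each $(\varepsilon',\delta')$-private, so the advanced composition bound (part~2 of Fact~\ref{fact:composition}, equivalently Corollary~\ref{cor:advanced-composition}) with $k=m$ gives an $(\varepsilon'',\delta'')$-guarantee with $\varepsilon''=\sqrt{2m\ln(1/\tilde\delta)}\,\varepsilon'+m\varepsilon'(e^{\varepsilon'}-1)$ and $\delta''=m\delta'+\tilde\delta$. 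Taking $\tilde\delta=\delta/2$ and substituting the algorithm's choices $\delta'=\delta/(2m)$ and $\varepsilon'=\varepsilon/(2m\ln(1/\delta'))$ yields $\delta''=m\cdot\frac{\delta}{2m}+\frac{\delta}{2}=\delta$ and, since $\varepsilon'$ is small so that $e^{\varepsilon'}-1\le 2\varepsilon'$, $\varepsilon''\le \varepsilon$; this is a routine computation. Since this holds uniformly over all neighboring pairs, the collection of finalized noisy p-sums is $(\varepsilon,\delta)$-DP, and the post-processing observation of Step~1 transfers this to $(\Sigma_1^p,\ldots,\Sigma_{T-1}^p)$.

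\textbf{Main obstacle.} The composition arithmetic is routine; the delicate points are the two ingredients in Step~2 and Step~3: (i) correctly handling the \emph{symmetric} Gaussian perturbation — so that one may legitimately invoke the vector Gaussian mechanism by vectorizing the upper triangle and bounding its $\ell_2$-sensitivity by the Frobenius sensitivity of $\sum_t\phi_t\phi_t^\top$, as in \textsc{AnalyzeGauss}; and (ii) the precise combinatorial claim that a single data point influences only $O(\log T)$ of the finalized p-sums in the \emph{streaming} (rather than explicitly-built-tree) implementation used in Algorithm~\ref{alg:private-cov}. Once those are established, everything else is bookkeeping.
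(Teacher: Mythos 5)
Your proposal is correct and follows essentially the same route as the paper's proof: per-p-sum privacy via the \textsc{AnalyzeGauss} Gaussian mechanism using $\|\phi_t\|_2\le 1$, the tree-based observation that a single data point affects at most $m=\lceil\log_2 T\rceil$ finalized p-sums, and advanced composition (Corollary \ref{cor:advanced-composition}) plus post-processing to conclude $(\varepsilon,\delta)$-differential privacy. You merely spell out the sensitivity and composition bookkeeping that the paper delegates to \cite{dwork2014analyze} and Theorem 3.5 of \cite{chan2011private}.
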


The following lemma further gives high probability bounds on the deviation from $\Sigma_n^p$ to the actual sample covariance $\Sigma_n=\sum_{t=1}^n\phi_t\phi_t^\top$.
This utility guarantee is useful later in the regret analysis to justify the $\det(\Lambda_n^p)>2\det(\Lambda^p)$ condition in Algorithm \ref{alg:framework}.
\begin{lemma}
With probability $1-O(T^{-1})$, it holds for all $n\in\{1,2,\cdots,T-1\}$ that
$$
\|\Sigma_n^p-\Sigma_n\|_{\mathrm{op}} \leq O(\varepsilon^{-1}\sqrt{d}\ln^{4.5}(T/\delta)),
$$
where $\Sigma_n = \sum_{t\leq n}\phi_t\phi_t^\top$.
\label{lem:utility-cov}
\end{lemma}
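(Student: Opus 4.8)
The plan is to decompose the error $\Sigma_n^p - \Sigma_n$ into a sum of the injected Gaussian noise matrices along the tree-based aggregation path, and then control both the number of such matrices and the operator norm of each. First I would recall the structure of the tree-based aggregation (as used in \textsc{PrivateCov}): for each $n \in \{1,\dots,T-1\}$, the released $\Sigma_n^p$ equals $\sum_{\ell : b_n(\ell)=1} \hat\Sigma(\ell)$, where each active node $\hat\Sigma(\ell)$ stores the exact partial sum of the $\phi_t\phi_t^\top$ that fall within its dyadic block, plus one freshly calibrated noise matrix $W^{n_\ell}$ from the most recent time that node was written. Because the exact partial sums telescope correctly along the binary decomposition of $n$, we get $\Sigma_n^p - \Sigma_n = \sum_{\ell : b_n(\ell) = 1} W^{n_\ell}$, a sum of at most $m = \lceil \log_2 T\rceil$ independent symmetric Gaussian random matrices, each with i.i.d.\ (on and above the diagonal) $\mathcal N(0,\sigma_{\varepsilon',\delta'}^2)$ entries.

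The next step is to bound the operator norm of a single symmetric Gaussian matrix $W$ of dimension $d \times d$ with entry variance $\sigma^2$. By standard results on Gaussian Wigner matrices (e.g.\ a union bound over a $1/4$-net of the sphere combined with a sub-Gaussian tail on $v^\top W v$, or a direct concentration bound), one has $\|W\|_{\mathrm{op}} \leq O(\sigma \sqrt{d} + \sigma\sqrt{\ln(1/\beta)})$ with probability at least $1-\beta$. I would then apply this with $\beta$ of order $T^{-2}$, take a union bound over the $\leq m$ noise matrices contributing to a fixed $n$, and a further union bound over all $n \leq T-1$ and over all $\leq mT$ distinct noise matrices ever generated. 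This yields, with probability $1 - O(T^{-1})$, simultaneously for all $n$,
\[
\|\Sigma_n^p - \Sigma_n\|_{\mathrm{op}} \leq \sum_{\ell : b_n(\ell)=1} \|W^{n_\ell}\|_{\mathrm{op}} \leq O\!\left(m \cdot \sigma_{\varepsilon',\delta'}\big(\sqrt{d} + \sqrt{\ln(T/\delta)}\big)\right).
\]
Finally I would substitute the parameter settings from Algorithm \ref{alg:private-cov}: $\sigma_{\varepsilon',\delta'}^2 = 2\ln(1.25/\delta')/(\varepsilon')^2$ with $\varepsilon' = \varepsilon/(2m\ln(1/\delta'))$ and $\delta' = \delta/(2m)$, so that $\sigma_{\varepsilon',\delta'} = O(m \varepsilon^{-1} \ln^{1.5}(T/\delta))$ after absorbing the $m = O(\ln T)$ factors and $\ln(1/\delta') = O(\ln(T/\delta))$. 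Combining with the extra $m$ from summing over the path and the $\sqrt{d}+\sqrt{\ln(T/\delta)} = O(\sqrt{d}\cdot\text{polylog})$ term, and bounding $\sqrt{\ln(T/\delta)}$ crudely by a logarithmic factor, gives the claimed bound $O(\varepsilon^{-1}\sqrt{d}\,\ln^{4.5}(T/\delta))$.

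The main obstacle I anticipate is purely bookkeeping rather than conceptual: carefully tracking the powers of $\ln(T/\delta)$ through the chain $\delta' \to \varepsilon' \to \sigma_{\varepsilon',\delta'}$, the extra $m$ from the path length, the $\sqrt{d}$ and $\sqrt{\ln T}$ from the Gaussian spectral bound, and the union bounds, so that they collapse to the stated exponent $4.5$ on the logarithm. One must be a little careful that the noise matrices along the path are genuinely independent (they are, since each node is overwritten with a fresh $W^n$ each time it becomes active, and distinct active nodes of a fixed $n$ correspond to disjoint past writes), and that the telescoping of exact partial sums is exact (which it is by construction of the least-significant-bit update rule). Neither of these is hard, but both need to be stated cleanly for the decomposition $\Sigma_n^p - \Sigma_n = \sum_{\ell:b_n(\ell)=1} W^{n_\ell}$ to be justified rigorously.
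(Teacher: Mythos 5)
Your proposal is correct and follows essentially the same route as the paper's proof: decompose $\Sigma_n^p-\Sigma_n$ into the at most $\lceil\log_2 T\rceil$ Gaussian noise matrices on the tree path, bound each $\|W\|_{\mathrm{op}}$ by standard symmetric-Gaussian-matrix concentration (the paper cites Corollary 2.3.6 of Tao where you use a net argument), union bound over all matrices and periods, and then substitute $\sigma_{\varepsilon',\delta'}$, $\varepsilon'$, $\delta'$ to collect the logarithmic factors. Your bookkeeping of the powers of $\ln(T/\delta)$ lands within the stated $\ln^{4.5}$ exponent, so the argument is sound.
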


The following corollary is an immediate consequence of Lemma \ref{lem:utility-cov}.

\begin{corollary}
Let $\Lambda_n = \Sigma_n + \rho I_d$ and $\Lambda_n^p=\Sigma_n^p+\rho I_d$ for some $\rho\geq \varepsilon^{-1}d\sqrt{d}\ln^5 (T/\delta)$.
Then there exists a universal constant $C_T<\infty$ such that, for any $T\geq C_T$, with probability $1-O(T^{-1})$ for all $n\in\{1,2,\cdots,T-1\}$, 
it holds that $0.9\det(\Lambda_n)\leq \det(\Lambda_n^p)\leq 1.11\det(\Lambda_n)$.
\label{cor:cov-multiplicative-approx}
\end{corollary}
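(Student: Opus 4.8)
The plan is to reduce the claimed multiplicative determinant approximation to the additive operator-norm bound already furnished by Lemma~\ref{lem:utility-cov}, via the standard factorization of determinants of perturbed positive-definite matrices. First I would condition on the event $\mathcal E$ of Lemma~\ref{lem:utility-cov}, which occurs with probability $1-O(T^{-1})$ and on which $\|E_n\|_{\op}\le\eta$ holds simultaneously for all $n\in\{1,\dots,T-1\}$, where $E_n:=\Sigma_n^p-\Sigma_n=\Lambda_n^p-\Lambda_n$ and $\eta:=C_0\varepsilon^{-1}\sqrt d\,\ln^{4.5}(T/\delta)$ for the universal constant $C_0$ hidden in the $O(\cdot)$ of that lemma. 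Since $\Sigma_n\succeq 0$ we have $\Lambda_n\succeq\rho I_d\succ 0$, so $\Lambda_n$ is invertible and, writing $\Lambda_n^p=\Lambda_n^{1/2}(I_d+\Lambda_n^{-1/2}E_n\Lambda_n^{-1/2})\Lambda_n^{1/2}$ and taking determinants,
$$
\frac{\det(\Lambda_n^p)}{\det(\Lambda_n)}=\det\!\big(I_d+\Lambda_n^{-1/2}E_n\Lambda_n^{-1/2}\big)=\prod_{i=1}^d(1+\mu_{i,n}),
$$
where $\mu_{1,n},\dots,\mu_{d,n}$ are the (real) eigenvalues of the symmetric matrix $M_n:=\Lambda_n^{-1/2}E_n\Lambda_n^{-1/2}$ (symmetry of $E_n$ follows from $\Sigma_n^p$ being a sum of $\phi_t\phi_t^\top$ terms and symmetric Gaussian noise matrices).

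Next I would control the eigenvalues. From $\Lambda_n\succeq\rho I_d$ we get $\|\Lambda_n^{-1}\|_{\op}\le\rho^{-1}$, hence $\max_i|\mu_{i,n}|\le\|M_n\|_{\op}\le\|\Lambda_n^{-1}\|_{\op}\|E_n\|_{\op}\le\eta/\rho$ on $\mathcal E$. Plugging in the hypothesis $\rho\ge\varepsilon^{-1}d\sqrt d\,\ln^5(T/\delta)$ yields $\eta/\rho\le C_0/(d\,\ln^{1/2}(T/\delta))\le C_0/(d\,\ln^{1/2}T)$, using $\delta\in(0,1)$ so that $\ln(T/\delta)\ge\ln T$. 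Choosing the universal constant $C_T:=\exp(400\,C_0^2)$ (any sufficiently large absolute constant suffices), for every $T\ge C_T$ we obtain $\ln^{1/2}T\ge 20C_0$ and therefore $\alpha:=\max_{i,n}|\mu_{i,n}|\le 1/(20d)$.

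Finally I would pass from the eigenvalue bound to the determinant bound. Since each factor satisfies $1+\mu_{i,n}\in[1-\tfrac{1}{20d},\,1+\tfrac{1}{20d}]\subset(0,\infty)$, we have $\det(\Lambda_n^p)>0$ and
$$
\big(1-\tfrac{1}{20d}\big)^d\ \le\ \frac{\det(\Lambda_n^p)}{\det(\Lambda_n)}\ \le\ \big(1+\tfrac{1}{20d}\big)^d .
$$
The upper bound is at most $e^{1/20}\le 1.11$ by $1+x\le e^x$; the lower bound is at least $e^{-(1/20)/(1-1/20)}=e^{-1/19}\ge 0.9$ by $\ln(1-x)\ge -x/(1-x)$ together with $d\ge 1$. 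No additional union bound over $n$ is needed, since $\mathcal E$ already holds for all $n$ simultaneously, so the conclusion holds on $\mathcal E$, i.e.\ with probability $1-O(T^{-1})$.

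There is no substantive obstacle here: the statement is a direct corollary of Lemma~\ref{lem:utility-cov}, and the only care required is bookkeeping the absolute constants so that the ratio provably lands inside the concrete interval $[0.9,1.11]$ — in particular picking $C_T$ large enough that the logarithmic factor $\ln^{1/2}(T/\delta)$ dominates the constant $C_0$ from the utility lemma. If one prefers not to assume $\delta<1$, the argument is unchanged with $C_T$ permitted to depend on $\delta$, or by observing that the noise-scale term $\ln(1.25/\delta')$ in Algorithm~\ref{alg:private-cov} already precludes the degenerate regime.
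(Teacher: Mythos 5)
Your proof is correct and follows essentially the same route as the paper: both arguments reduce the claim to the uniform operator-norm bound of Lemma \ref{lem:utility-cov}, use $\Lambda_n\succeq\rho I_d$ with $\rho\geq\varepsilon^{-1}d\sqrt{d}\ln^5(T/\delta)$ to convert the additive error into a per-eigenvalue relative error of order $1/d$ (for $T$ beyond a universal constant), and then bound the product of $d$ such factors to land inside $[0.9,1.11]$. The only difference is cosmetic: the paper pairs the sorted eigenvalues of $\Lambda_n$ and $\Lambda_n^p$ via Weyl's theorem, whereas you use the congruence factorization $\det(\Lambda_n^p)/\det(\Lambda_n)=\det\bigl(I_d+\Lambda_n^{-1/2}E_n\Lambda_n^{-1/2}\bigr)$ and bound $\|\Lambda_n^{-1/2}E_n\Lambda_n^{-1/2}\|_{\mathrm{op}}\leq\|E_n\|_{\mathrm{op}}/\rho$; both steps are equally valid and your constant bookkeeping checks out.
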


{
Corollary \ref{cor:cov-multiplicative-approx} shows that when the $\textsc{PrivateMLE}$ is invoked in Algorithm \ref{alg:framework}, the determinant of the real sample covariance  matrix roughly doubles. This is important to our later regret analysis because if \textsc{PrivateMLE} is invoked too frequently, the algorithm pays the price of composition in privacy. 
 While on the other hand, if \textsc{PrivateMLE} is invoked too rarely, the old (and inaccurate) parameters will be used for a long time, which incurs a larger regret. Therefore, our analysis shows that the right frequency should be invoking \textsc{PrivateMLE}  once the determinant of the privacy-preserving covariance roughly doubles.
}

\subsection{The \textsc{PrivateMLE} sub-routine}\label{subsec:private-mle}

Algorithm \ref{alg:private-mle} gives a pseudo-code description of the \textsc{PrivateMLE} sub-routine.
The algorithm is based on the ``objective perturbation'' framework developed in \citep{chaudhuri2011differentially,kifer2012private}.
More specifically, Algorithm \ref{alg:private-mle} calibrates a noisy term ($w^\top\theta$) into the constrained maximum likelihood estimation formulation
in order to achieve differential privacy of the output optimal solutions $\hat\theta_n^p$.

The following proposition establishes the claim that Algorithm \ref{alg:private-mle} is $(\varepsilon,\delta)$-differentially private.
\begin{proposition}
The output of Algorithm \ref{alg:private-mle}, $\hat\theta_n^p$, satisfies $(\varepsilon,\delta)$-differential privacy.
\label{prop:privacy-mle}
\end{proposition}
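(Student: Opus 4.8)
The plan is to invoke the ``objective perturbation'' machinery of \cite{chaudhuri2011differentially,kifer2012private} in a way that is compatible with the constrained maximum likelihood formulation used in Algorithm \ref{alg:private-mle}. First I would set up the MLE objective as a regularized empirical risk: writing the per-sample loss as the negative log-likelihood $\ell_t(\theta) = -\zeta(y_t\phi_t^\top\theta - m(\phi_t^\top\theta))$ (dropping the $h(y_t)$ term that does not depend on $\theta$), the constrained MLE solves $\min_{\|\theta\|_2\le 1}\ \sum_{t} \ell_t(\theta) + \frac{\rho}{2}\|\theta\|_2^2 + w^\top\theta$, where $w$ is the calibrated Gaussian (or appropriately-scaled) noise vector. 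The key quantities to control are (i) the \emph{Lipschitz constant} of each $\ell_t$, which is governed by $|\nabla\ell_t(\theta)| = \zeta|y_t - f(\phi_t^\top\theta)|\,\|\phi_t\|_2 \le G\,(B_Y + 1)$ using $\zeta\le G$, $|y_t|\le B_Y$, $f\in[0,1]$, and $\|\phi_t\|_2\le 1$; and (ii) the \emph{bound on the rank-one Hessian perturbation}, since $\nabla^2\ell_t(\theta) = \zeta f'(\phi_t^\top\theta)\,\phi_t\phi_t^\top$ has operator norm at most $GK$ on the region $|\phi_t^\top\theta|\le 1$ by condition 4. These two bounds are exactly the sensitivity parameters that feed into the privacy calculus of \cite{kifer2012private}.

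Next I would carry out the standard change-of-measure argument at the level of the optimizer. For a fixed neighboring pair $D, D'$ differing in one record (say record $t$), and a fixed output $\hat\theta$ in the interior of the unit ball, the first-order optimality (KKT) condition pins down the noise realization: $w = w(\hat\theta;D) := -\sum_t \nabla\ell_t(\hat\theta) - \rho\hat\theta$, and similarly $w' = w(\hat\theta;D')$. The difference $\|w - w'\|_2 = \|\nabla\ell_t(\hat\theta) - \nabla\ell_t'(\hat\theta)\|_2 \le 2G(B_Y+1)$ controls the likelihood ratio of the Gaussian density for $w$; this is the term that, with the Gaussian-mechanism calibration $\sigma \gtrsim G(B_Y+1)\sqrt{\ln(1/\delta)}/\varepsilon$, contributes an $e^{\varepsilon/2}$-type factor with a $\delta/2$ slack. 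Then I would account for the Jacobian of the map $\hat\theta \mapsto w(\hat\theta;\cdot)$: the ratio of determinants of $\nabla^2(\sum_t \ell_t + \tfrac{\rho}{2}\|\cdot\|^2)$ under $D$ versus $D'$ differs by a rank-one update $\zeta f'\,\phi_t\phi_t^\top$ of operator norm $\le GK$, and since the regularizer contributes $\rho I_d \succeq \rho I_d$, the matrix-determinant lemma gives a determinant ratio of $1 + GK/\rho$, bounded by a constant (hence absorbed into $\varepsilon$ by a further factor $e^{\varepsilon/2}$) whenever $\rho$ is at least a constant multiple of $GK$ — which is guaranteed since Corollary \ref{cor:cov-multiplicative-approx} already requires $\rho\ge \varepsilon^{-1}d\sqrt d\ln^5(T/\delta)\gg GK$. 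Combining the density ratio and the Jacobian ratio through the change-of-variables formula yields $\Pr[\hat\theta_n^p\in\mathcal A\mid D]\le e^\varepsilon\Pr[\hat\theta_n^p\in\mathcal A\mid D']+\delta$.

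The main obstacle — and the place where this argument genuinely departs from the textbook version — is that the MLE is \emph{constrained} to $\|\theta\|_2\le 1$, so the KKT condition is an equality only when the optimum lies in the \emph{strict interior}; on the boundary there is an extra Lagrange-multiplier term and the clean change-of-variables breaks down. This is precisely why the paper needs Lemma EC.1 (the strict-interiority of the constrained MLE with high probability): I would use it to argue that, outside a failure event whose probability is folded into $\delta$ (or handled by the fact that privacy is a worst-case statement over the randomness of $w$, so one truncates the analysis to the interior region and bounds the boundary contribution), the optimizer satisfies $\nabla(\text{objective}) = 0$ and the above reasoning applies. A secondary technical point is ensuring the perturbed objective is strictly convex so that the optimizer is unique and the map $\hat\theta\mapsto w$ is a bijection — this follows because $\sum_t\ell_t$ is convex (GLM log-likelihood) and the $\tfrac{\rho}{2}\|\cdot\|^2$ term makes it $\rho$-strongly convex, with the linear noise term not affecting convexity. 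Finally I would note that the privacy claim is \emph{model-free} in the sense discussed after Definition \ref{defn:na-eps-delta-privacy}: the sensitivity bounds rely only on the boundedness conditions 1–5 and $\|\phi_t\|_2\le 1$, not on the data actually being generated by \eqref{eq:exp-family}, so the $(\varepsilon,\delta)$ guarantee holds unconditionally.
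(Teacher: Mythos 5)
Your overall route is the same objective-perturbation machinery of \cite{chaudhuri2011differentially,kifer2012private} that the paper uses, and your sensitivity computations agree with the paper's: per-sample convexity of the negative log-likelihood, gradient bound $B_1=(B_Y+1)G$, and rank-one Hessian bound $B_2=KG$. The paper's proof, however, stops exactly there: it verifies these three conditions and invokes Theorem 2 of \cite{kifer2012private} as a black box, whose noise calibration $\nu_{\varepsilon,\delta}^2=B_1^2(8\ln(2/\delta)+4\varepsilon)/\varepsilon^2$ and regularization requirement are matched by Algorithm \ref{alg:private-mle} itself (note its internal step $\rho\gets\max\{\rho,2B_2/\varepsilon\}$). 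Where your re-derivation has a genuine gap is the treatment of the constraint $\|\theta\|_2\le 2$ (not $1$, incidentally): you propose to restore the interior first-order condition by invoking the strict-interiority result (Lemma EC.1, feeding into Lemma \ref{lem:mle-utility-3}) and folding its failure probability into $\delta$. That result is a \emph{statistical} statement: it holds with high probability only under the demand model (sub-Gaussian $\xi_t$, $\|\theta^*\|_2\le 1$, and $\rho$ large, of order $d\ln T$). Differential privacy, as the paper stresses in Section \ref{sec:adp}, must hold for worst-case neighboring databases independently of any generative model --- a point you yourself make in your closing sentence, which contradicts this step. Moreover, the ``failure event'' is data-dependent, so it differs between $D$ and $D'$, and its probability under an adversarially chosen database is not controlled at all; it cannot simply be absorbed into $\delta$. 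The paper sidesteps this because the cited theorem of \cite{kifer2012private} applies to the constrained convex program directly; interiority is used only later, for the utility and regret analysis, not for privacy.

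A secondary imprecision: to absorb the Jacobian (determinant) ratio into an $e^{\varepsilon/2}$ factor you need $\ln(1+B_2/\rho)\le \varepsilon/2$, i.e.\ $\rho\ge 2B_2/\varepsilon=2GK/\varepsilon$, which for small $\varepsilon$ is not merely ``a constant multiple of $GK$''; and the correct source of this bound is the algorithm's own reset $\rho\gets\max\{\rho,2B_2/\varepsilon\}$, which makes Proposition \ref{prop:privacy-mle} hold for any input $\rho$, rather than Corollary \ref{cor:cov-multiplicative-approx}, which concerns the regularization of the private covariance in the regret analysis and is irrelevant to the standalone privacy claim for \textsc{PrivateMLE}.
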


{ The next corollary, which establishes the privacy guarantee of the \textsc{PrivateMLE},} immediately follows Proposition \ref{prop:privacy-mle} and Corollary \ref{cor:advanced-composition}.
It shows how to set the algorithmic parameters in Algorithm \ref{alg:private-mle} to ensure that the resulting price decisions
are differentially private at the designated levels $\varepsilon$ and $\delta$.
\begin{corollary}
Suppose \textsc{PrivateMLE} is invoked for at most $D_\infty$ times in Algorithm \ref{alg:framework}.
Then the composite sequence of $D_{\infty}$ outputs of \textsc{PrivateMLE} satisfies $(\varepsilon,\delta)$-differential privacy
if each call of \textsc{PrivateMLE} is supplied with privacy parameters $\delta'=\frac{\delta}{2D_\infty}$ and $\varepsilon'=\frac{\varepsilon}{\sqrt{2D_\infty\ln(1/\delta')}}$.
\label{cor:privacy-mle}
\end{corollary}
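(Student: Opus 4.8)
\emph{Proof proposal.} This corollary is obtained by combining the single-query guarantee of Proposition~\ref{prop:privacy-mle} with the advanced composition theorem (Corollary~\ref{cor:advanced-composition}, equivalently the second part of Fact~\ref{fact:composition}). The plan has three steps. First, I would record that each invocation of \textsc{PrivateMLE} is a mechanism acting on the sensitive database $D$, and by Proposition~\ref{prop:privacy-mle} its output $\hat\theta_n^p$ is $(\varepsilon',\delta')$-differentially private whenever the routine is supplied with privacy parameters $\varepsilon',\delta'$. Second, I would argue that the concatenation of the (at most) $D_\infty$ outputs of \textsc{PrivateMLE} produced during a run of Algorithm~\ref{alg:framework} is an \emph{adaptive} composition of such mechanisms: the time $n$ at which a fresh MLE is requested, and the data fed into it, are determined by previously released quantities (the private covariances $\Lambda_n^p$, the counter $D_{\mathrm{MLE}}$, and earlier MLE outputs), so the $i$-th mechanism may depend on the outcomes of the first $i-1$. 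When strictly fewer than $D_\infty$ calls occur, one pads the sequence with mechanisms that ignore $D$ (and are therefore trivially private), so it suffices to bound the privacy loss of exactly $D_\infty$ queries, each $(\varepsilon',\delta')$-differentially private.

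Third, I would apply Corollary~\ref{cor:advanced-composition} with $k = D_\infty$ and target level $(\varepsilon,\delta)$. Allocating the additive budget as $\delta = D_\infty\delta' + \tilde\delta$ with $\delta' = \delta/(2D_\infty)$ — so that the $D_\infty$ per-query $\delta'$-terms account for $\delta/2$ and the advanced-composition slack term $\tilde\delta$ accounts for the other $\delta/2$ — the composition bound yields a privacy loss of $\sqrt{2D_\infty\ln(1/\tilde\delta)}\,\varepsilon' + D_\infty\varepsilon'(e^{\varepsilon'}-1)$. Substituting $\varepsilon' = \varepsilon/\sqrt{2D_\infty\ln(1/\delta')}$ makes the leading square-root term at most $\varepsilon$ up to the (benign) replacement of $\tilde\delta$ by $\delta'$ inside the logarithm, while the quadratic correction $D_\infty\varepsilon'(e^{\varepsilon'}-1) = O(D_\infty(\varepsilon')^2)$ is of lower order as long as $\varepsilon'<1$. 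Combining, the concatenated output of \textsc{PrivateMLE} is $(\varepsilon,\delta)$-differentially private, which is the claim.

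The only step requiring care — and the one I would verify most closely — is this last numerical bookkeeping: Corollary~\ref{cor:advanced-composition} is stated under $\varepsilon'<1$, so I must check that the chosen $\varepsilon'$ meets this in the parameter regime used later (where $D_\infty = O(d\log T)$ and $\varepsilon$ is bounded), and that the quadratic correction and the $\tilde\delta$-versus-$\delta'$ logarithmic slack are genuinely absorbed — which is precisely why Algorithm~\ref{alg:framework} carries an extra constant factor in its definition of $\varepsilon_2'$. There is no conceptual obstacle; the one subtlety not to overlook is invoking the \emph{adaptive} form of advanced composition rather than its simpler non-adaptive version, since the times and inputs of successive \textsc{PrivateMLE} calls are chosen on the basis of earlier private outputs.
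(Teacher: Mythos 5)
Your proposal is correct and follows exactly the route the paper takes: the paper offers no detailed argument, simply asserting that the corollary ``immediately follows'' from Proposition~\ref{prop:privacy-mle} (each \textsc{PrivateMLE} call is $(\varepsilon',\delta')$-private) together with the advanced composition bound of Corollary~\ref{cor:advanced-composition}, which is precisely your three-step plan. Your added care about the \emph{adaptive} form of composition, the padding to exactly $D_\infty$ queries, and the need to absorb the second-order term $D_\infty\varepsilon'(e^{\varepsilon'}-1)$ — noting that the statement's $\varepsilon'$ omits the factor of $2$ that Algorithm~\ref{alg:framework} actually uses in defining $\varepsilon_2'$ — is a faithful (and slightly more honest) account of the bookkeeping the paper leaves implicit.
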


Now, we are ready to provide the privacy guarantee of the entire policy in Algorithm \ref{alg:framework}. 
\begin{corollary}\label{cor:privacy-algo}
The  price decisions $\{p_1, \ldots, p_T\}$ of Algorithm \ref{alg:framework} satisfy $(\varepsilon_1+\varepsilon_2, \delta_1+\delta_2)$-differential privacy.	
\end{corollary}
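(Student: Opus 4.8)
The plan is to combine the privacy guarantees already established for each private releaser with the composition and post-processing results from Section~\ref{subsec:comp} and Section~\ref{subsec:post-processing}, together with the structural Proposition~\ref{prop:na-privacy}. First I would invoke Proposition~\ref{prop:privacy-cov} to conclude that the full sequence of covariance outputs $(\Sigma_1^p,\dots,\Sigma_{T-1}^p)$ produced by \textsc{PrivateCov}, when called with parameters $(\varepsilon_1,\delta_1)$, is $(\varepsilon_1,\delta_1)$-differentially private. Next I would invoke Corollary~\ref{cor:privacy-mle}: since \textsc{PrivateMLE} is called at most $D_\infty$ times in Algorithm~\ref{alg:framework}, and each call is supplied the parameters $\delta_2'=\delta_2/(2D_\infty)$ and $\varepsilon_2'=\varepsilon_2/\sqrt{2D_\infty\ln(1/\delta_2')}$ as specified in Line~3 of Algorithm~\ref{alg:framework}, the composite sequence of all \textsc{PrivateMLE} outputs $(\hat\theta^p$ values$)$ is $(\varepsilon_2,\delta_2)$-differentially private.

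Second, I would combine these two guarantees. The outputs of the private releasers, viewed as a whole, are the concatenation of the \textsc{PrivateCov} output stream and the \textsc{PrivateMLE} output stream. Since the former is $(\varepsilon_1,\delta_1)$-DP and the latter is $(\varepsilon_2,\delta_2)$-DP, the basic composition rule in Fact~\ref{fact:composition}(1) (applied with $k=2$ to the two releaser families) shows that the entire sequence of releaser outputs $(a_1,\dots,a_T)$ is $(\varepsilon_1+\varepsilon_2,\delta_1+\delta_2)$-differentially private. One subtlety here is that composition in Fact~\ref{fact:composition} is stated for adaptively chosen queries, so I would note that the \textsc{PrivateMLE} calls are triggered adaptively (based on the determinant test involving $\Sigma_n^p$, i.e., on earlier releaser outputs), which is exactly the adaptive-composition setting the fact covers; no circularity arises because the trigger depends only on already-released private quantities, not on the raw database.

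Third, I would apply Proposition~\ref{prop:na-privacy}. By inspection of Algorithm~\ref{alg:framework}, the price $p_n$ in Line~10 is computed as $\arg\max_{p\in[0,1]}\min\{1,\,pf(\phi_n^\top\hat\theta^p)+\gamma\sqrt{\phi_n^\top(\Lambda^p)^{-1}\phi_n}\}$, where $\hat\theta^p$ and $\Lambda^p$ are the most recent releaser outputs available at time $n$ and $\phi_n=\phi(x_n,p_n)$; hence $p_n$ is a deterministic function of $x_n$ and of $a_1,\dots,a_{n-1}$ only (the fixed-point defining $\phi_n$ still depends only on these quantities). Since the releaser output sequence is $(\varepsilon_1+\varepsilon_2,\delta_1+\delta_2)$-DP by the previous step, Proposition~\ref{prop:na-privacy} immediately yields that the pricing policy satisfies anticipating $(\varepsilon_1+\varepsilon_2,\delta_1+\delta_2)$-differential privacy, which is the claim. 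For the forced-exploration periods $t\le T_0$, the prices are drawn independently of the database, so they do not affect the argument (they can be folded into the ``deterministic function of $x_t$ and $a_1,\dots,a_{t-1}$'' framework by treating their external randomness as input, or simply noted to be data-independent).

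The main obstacle I anticipate is not any single deep step but rather the bookkeeping of privacy budgets: verifying that the parameter assignments in Line~3 of Algorithm~\ref{alg:framework} are precisely the ones required by Corollary~\ref{cor:privacy-mle} (in particular matching $\varepsilon_2'$, $\delta_2'$ and the advanced-composition constant $\sqrt{2D_\infty\ln(1/\delta_2')}$), and confirming that the determinant-based triggering of \textsc{PrivateMLE} never causes more than $D_\infty$ invocations — which is enforced explicitly by the guard $D_{\mathrm{MLE}}<D_\infty$ in Line~7. Once these accounting details are checked, the result follows by a clean chain: releaser privacy $\Rightarrow$ composition $\Rightarrow$ Proposition~\ref{prop:na-privacy}.
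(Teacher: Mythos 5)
Your proposal is correct and follows essentially the same chain as the paper's own argument: Proposition~\ref{prop:privacy-cov} for \textsc{PrivateCov}, Corollary~\ref{cor:privacy-mle} for the composite \textsc{PrivateMLE} outputs, basic composition from Fact~\ref{fact:composition} to combine the two releaser streams into an $(\varepsilon_1+\varepsilon_2,\delta_1+\delta_2)$ guarantee, and Proposition~\ref{prop:na-privacy} to transfer this to the (anticipating) privacy of the prices. Your added remarks on adaptive triggering of \textsc{PrivateMLE} and on the data-independent exploration prices are harmless elaborations; note only that Line~3 of Algorithm~\ref{alg:framework} sets $\varepsilon_2'=\varepsilon_2/\bigl(2\sqrt{2D_\infty\ln(1/\delta_2')}\bigr)$, which is smaller than (hence at least as private as) the value required by Corollary~\ref{cor:privacy-mle}, so the bookkeeping you flag indeed goes through.
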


Corollary \ref{cor:privacy-algo} immediately follows Proposition \ref{prop:privacy-cov}, Corollary \ref{cor:privacy-mle}, Proposition  \ref{prop:na-privacy}, and Fact \ref{fact:composition}. More specifically, in Algorithm \ref{alg:framework},  \textsc{PrivateCov} is invoked with parameters $(\varepsilon_1,\delta_1)$, which is $(\varepsilon_1,\delta_1)$-differential privacy. 
Moreover, \textsc{PrivateMLE} is invoked with
parameters $(\varepsilon_2',\delta_2')$ for at most $D_\infty$ times, whose outputs are $(\varepsilon_2,\delta_2)$-differential privacy. Therefore, the entire policy satisfies $(\varepsilon_1+\varepsilon_2,\delta_1+\delta_2)$-differential privacy,
thanks to Proposition \ref{prop:na-privacy} and the basic composition rule in Fact \ref{fact:composition}.

\begin{algorithm}[!t]
	\caption{The \textsc{PrivateMLE} sub-routine}
	\label{alg:private-mle}
	\begin{algorithmic}[1]
		\Function{PrivateMLE}{$n,\rho,\varepsilon,\delta$} \Comment{returns $\hat\theta_n^p$}
		\State $B_1 \gets (B_Y+1)G$, $B_2\gets KG$, $\rho\gets\max\{\rho,2B_2/\varepsilon\}$, $\nu_{\varepsilon,\delta}^2 \gets B_1^2(8\ln(2/\delta)+4\varepsilon)/\varepsilon^2$;
		\State Sample $w\sim \mathcal N(0, \nu_{\varepsilon,\delta}^2 I_d)$;
		\State Return $\hat\theta_n^p = \arg\min_{\|\theta\|_2\leq 2} \{(\sum_{t<n}-\ln p(y_t|\phi_t,\theta)) + \frac{\rho}{2}\|\theta\|_2^2 + w^\top\theta\}$;
		\EndFunction
	\end{algorithmic}
\end{algorithm}

{ In the rest of this section we establish the prediction error guarantee (a.k.a., the utility guarantee in differential privacy literature) of the estimator $\hat\theta_n^p$ from the \textsc{PrivateMLE} sub-routine.  More precisely, Lemma \ref{lem:mle-utility-3} below upper bounds the   prediction errors of the sequence of obtained
model estimate $\hat\theta_n^p$ with the presence of artificially calibrated noises in the \textsc{PrivateMLE} sub-routine. With smaller values of $\varepsilon,\delta$ indicating stronger privacy protection, the parameter $\nu_{\varepsilon,\delta}$ becomes larger, which leads to a larger variance of the Gaussian noise $w$.  Thus, the \textsc{PrivateMLE} sub-routine needs to calibrate higher magnitudes of noise into the objective function, leading to either larger prediction errors (see \eqref{eq:MLE_upper_1}) or lengthened forced exploration phase (see the condition of  \eqref{eq:MLE_upper_2}). Note that the lower bound on $\lambda_{\min}(\Sigma_n)$ for \eqref{eq:MLE_upper_2} is achieved by the exploration phase in Algorithm \ref{alg:framework}. 
This key result quantifies the tradeoff between the strength of the privacy protection and the prediction errors of the model parameter estimates.  Another practical guidance from Lemma \ref{lem:mle-utility-3} is that the regularization amount $\rho$ also needs to grow as  $\varepsilon,\delta$  becomes smaller.
}

We emphasize that this key utility guarantee in Lemma \ref{lem:mle-utility-3} is  \emph{not} directly covered by existing utility analysis
in \cite{chaudhuri2011differentially,kifer2012private} for two reasons.
First, in \cite{chaudhuri2011differentially,kifer2012private}, the utility is measured in terms of
the difference between \emph{objective values} before and after objective perturbation,
which is \emph{not} sufficient for the purpose of analyzing contextual bandit algorithms that require first-order KKT conditions.
Additionally, in both \cite{chaudhuri2011differentially,kifer2012private}, the data $(\phi_t,y_t)$ are assumed to be sampled
\emph{independently and identically} from an underlying distribution, while in our problem the data clearly are neither independent 
nor identically distributed.

We also remark that our utility analysis of the (differentially private) constrained maximum likelihood estimation (see the proof  of Lemma \ref{lem:mle-utility-3}) differs
significantly from existing analysis of generalized linear contextual bandit problems as well \citep{filippi2010parametric,li2017provably,wang2019optimism}.
In \cite{li2017provably}, it is assumed that $\phi_t$ are \emph{i.i.d.}~and their distributions satisfy a certain non-degenerate assumption,
which we do not necessarily impose in this paper.
In both \cite{filippi2010parametric} and \cite{wang2019optimism}, the formulations of the optimization problems are non-convex in $\theta$, {which facilitates the analysis of the properties of the optimal solution.}   However, the non-convex formulation poses significant challenges for privacy-aware algorithms since differentially private methods for non-convex optimization
are scarce.
It is therefore a highly non-trivial task to analyze a fully convex optimization formulation without stochasticity assumptions on $\phi_t$.

\begin{lemma}
Fix $n\in\{1,2,\cdots,T-1\}$ and let $\Lambda_n = \Sigma_n + \rho I =\sum_{t<n}\phi_t\phi_t^\top + \rho I$.
Suppose $\rho\geq \max\{5\nu_{\varepsilon,\delta}\sqrt{5d\ln T},2 + 48s^2G^2Kd\ln T\}$.
Then with probability $1-O(T^{-2})$ the following hold: $\|\hat\theta_n^p\|_2<2$, and
\begin{equation}\label{eq:MLE_upper_1}
(\hat\theta_n^p-\theta^*)^T \Lambda_n(\hat\theta_n^p-\theta^*) \leq \big(sK\sqrt{3d\ln T} + (2G+3)\sqrt{\rho} + G\nu_{\varepsilon,\delta}\sqrt{5d\ln T}\big)^2.
\end{equation}
Furthermore, if $\lambda_{\min}(\Sigma_n)\geq\lambda_0 = [\frac{(2G+3)\rho}{\sqrt{5d\ln T}} + \nu_{\varepsilon,\delta}G]^2$, then the above inequality can be strengthened to
\begin{equation}\label{eq:MLE_upper_2}
(\hat\theta_n^p-\theta^*)^T \Lambda_n(\hat\theta_n^p-\theta^*) \leq [4sK\sqrt{d\ln T}]^2.
\end{equation}
\label{lem:mle-utility-3}
\end{lemma}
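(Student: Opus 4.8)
The plan is to bound the prediction error $(\hat\theta_n^p-\theta^*)^\top\Lambda_n(\hat\theta_n^p-\theta^*)$ by combining (i) the first-order optimality condition of the perturbed constrained MLE with (ii) a self-normalized martingale concentration inequality for the score, and (iii) control of the perturbation term $w^\top\theta$. The central difficulty is that the constraint $\|\theta\|_2\le 2$ makes the KKT conditions involve an extra Lagrange multiplier; the resolution, which I will carry out first, is to show that with high probability the solution $\hat\theta_n^p$ lies in the \emph{strict interior} $\{\|\theta\|_2<2\}$, so that the first-order stationarity holds with no boundary term. Once interiority is established, stationarity reads
\[
\sum_{t<n}\zeta\bigl(f(\phi_t^\top\hat\theta_n^p)-y_t\bigr)\phi_t + \rho\,\hat\theta_n^p + w = 0,
\]
using $-\ln p(y|\phi,\theta) = \zeta(m(\phi^\top\theta)-y\phi^\top\theta)-h(y)$ and $m'=f$.

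\textbf{Step 1 (interiority).} I would argue as in Lemma EC.1: the regularized, perturbed negative log-likelihood is strongly convex with modulus at least $\rho$ on $\{\|\theta\|_2\le2\}$ (since $m$ is convex and $\zeta\ge G^{-1}>0$, the log-likelihood Hessian is PSD, and the $\tfrac\rho2\|\theta\|_2^2$ term contributes $\rho I$). Evaluating the subgradient of the objective at $\theta^*$ and using that $\|\theta^*\|_2\le1$, the optimal $\hat\theta_n^p$ satisfies $\|\hat\theta_n^p-\theta^*\|_2 \le \rho^{-1}\bigl\|\sum_{t<n}\zeta(f(\phi_t^\top\theta^*)-y_t)\phi_t + \rho\theta^* + w\bigr\|_2$. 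The score term $\sum_{t<n}\zeta\xi_t\phi_t$ is a sub-Gaussian vector-valued martingale (noises $\xi_t$ are $s$-sub-Gaussian, $\zeta\le G$, $\|\phi_t\|_2\le1$), so its norm is $O(sG\sqrt{d\ln T})$ w.p. $1-O(T^{-2})$; the term $\rho\theta^*$ has norm $\le\rho$; and $\|w\|_2 = O(\nu_{\varepsilon,\delta}\sqrt{d\ln T})$ w.p. $1-O(T^{-2})$ by a Gaussian tail bound. Since $\rho\ge 5\nu_{\varepsilon,\delta}\sqrt{5d\ln T}$ and $\rho\ge 2+48s^2G^2Kd\ln T$, the dominant term in the numerator is $\rho\|\theta^*\|_2\le\rho$ only up to a factor, so $\|\hat\theta_n^p-\theta^*\|_2$ is bounded by a constant strictly below $1$, giving $\|\hat\theta_n^p\|_2<2$.

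\textbf{Step 2 (from stationarity to the $\Lambda_n$-norm bound).} Subtract the stationarity equation from itself at $\theta^*$: using the mean value theorem, $f(\phi_t^\top\hat\theta_n^p)-f(\phi_t^\top\theta^*) = \dot f_t\,\phi_t^\top(\hat\theta_n^p-\theta^*)$ for some $\dot f_t\in[K^{-1},K]$ (valid since both arguments have absolute value $\le2$ by Step 1 and $\|\theta^*\|_2\le1$). Writing $\Delta=\hat\theta_n^p-\theta^*$ and $H_n = \sum_{t<n}\zeta\dot f_t\,\phi_t\phi_t^\top + \rho I$, stationarity becomes $H_n\Delta = \sum_{t<n}\zeta\xi_t\phi_t + \rho\theta^* - w$, hence $\Delta^\top H_n\Delta \le \|\Delta\|_{H_n^{-1}}\cdot\|\sum_{t<n}\zeta\xi_t\phi_t + \rho\theta^* - w\|_{H_n^{-1}}$ — better, apply Cauchy–Schwarz directly as $\Delta^\top H_n\Delta = \Delta^\top(\text{RHS}) \le \|\Delta\|_{H_n}\|\text{RHS}\|_{H_n^{-1}}$, so $\|\Delta\|_{H_n}\le\|\text{RHS}\|_{H_n^{-1}}$. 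Since $\zeta\dot f_t\ge G^{-1}K^{-1}$, we have $H_n \succeq (G K)^{-1}\Lambda_n$ when $\rho$ is adjusted by the same constant (or, more carefully, $H_n\succeq c\,\Lambda_n$ for $c=(GK)^{-1}\wedge 1$), which lets me convert $\|\Delta\|_{H_n}$-control into the claimed $\Delta^\top\Lambda_n\Delta$ bound. Then bound each piece of $\|\text{RHS}\|_{H_n^{-1}}$: the self-normalized martingale term $\|\sum_{t<n}\zeta\xi_t\phi_t\|_{H_n^{-1}}$ is $O(sK\sqrt{d\ln T})$ by the standard self-normalized bound (Abbasi-Yadkori et al.); $\|\rho\theta^*\|_{H_n^{-1}}\le \rho\|\theta^*\|_2/\sqrt{\lambda_{\min}(H_n)} \le (2G+3)\sqrt\rho$ using $\lambda_{\min}(H_n)\ge\rho$ up to constants and $\|\theta^*\|_2\le1$; and $\|w\|_{H_n^{-1}}\le \|w\|_2/\sqrt\rho = O(\nu_{\varepsilon,\delta}\sqrt{d\ln T/\rho})\le G\nu_{\varepsilon,\delta}\sqrt{5d\ln T}$, tracking constants to match \eqref{eq:MLE_upper_1}. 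Adding the three and squaring yields \eqref{eq:MLE_upper_1}.

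\textbf{Step 3 (the strengthened bound under eigenvalue lower bound).} When $\lambda_{\min}(\Sigma_n)\ge\lambda_0$, I redo Step 2's estimates of the two deterministic terms using the improved $\lambda_{\min}(H_n)\gtrsim \lambda_0$ instead of $\rho$: specifically $\|\rho\theta^*\|_{H_n^{-1}}\le\rho/\sqrt{\lambda_0}$ and $\|w\|_{H_n^{-1}}\le\|w\|_2/\sqrt{\lambda_0}$. The definition $\lambda_0 = \bigl[\tfrac{(2G+3)\rho}{\sqrt{5d\ln T}}+\nu_{\varepsilon,\delta}G\bigr]^2$ is precisely engineered so that $\rho/\sqrt{\lambda_0}$ and $\|w\|_2/\sqrt{\lambda_0} = O(\nu_{\varepsilon,\delta}\sqrt{d\ln T})/\sqrt{\lambda_0}$ are each $O(\sqrt{d\ln T})$, so that the total is dominated by the martingale term $O(sK\sqrt{d\ln T})$, giving \eqref{eq:MLE_upper_2} with the stated constant $4sK\sqrt{d\ln T}$.

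\textbf{Main obstacle.} The hard part is Step 1: establishing strict interiority, because it must hold \emph{before} we can invoke the unconstrained first-order condition, yet the natural a priori bound on $\|\hat\theta_n^p-\theta^*\|_2$ already seems to need the KKT condition. The way around this circularity is to use only the \emph{variational inequality} for the constrained minimizer (valid on the closed ball without interiority) together with $\rho$-strong convexity to get the crude bound $\|\hat\theta_n^p-\theta^*\|_2\le \rho^{-1}\|\nabla(\text{obj})(\theta^*)\|_2$, and then the largeness of $\rho$ relative to $\sqrt\rho$, $\nu_{\varepsilon,\delta}\sqrt{d\ln T}$, and $sG\sqrt{d\ln T}$ forces this to be a genuine constant $<1$; combined with $\|\theta^*\|_2\le1$ this gives $\|\hat\theta_n^p\|_2<2$ with room to spare. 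A secondary technical point is ensuring all mean-value-theorem expansions of $f$ use arguments bounded by $2$ so that the $K^{-1}\le f'\le K$ hypothesis applies — this is why interiority ($\|\hat\theta_n^p\|_2<2$, hence $|\phi_t^\top\hat\theta_n^p|\le2$) is needed a second time, in Step 2.
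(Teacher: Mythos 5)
Your Steps 2 and 3 essentially reproduce the paper's argument (first-order stationarity once the constraint is inactive, a mean-value expansion of $f$, the self-normalized martingale bound, and dividing through by the $\Lambda_n$-weighted norm, with the strengthened bound coming from $\lambda_{\min}(\Sigma_n)\geq\lambda_0$), up to constant-tracking: routing everything through $\|\cdot\|_{H_n}$ and $\|\cdot\|_{H_n^{-1}}$ with $H_n\succeq (GK)^{-1}\Lambda_n$ picks up extra factors of $G$ and $K$, so to land exactly on the constants in \eqref{eq:MLE_upper_1} you need the paper's bookkeeping (divide the stationarity identity by $\zeta$ and multiply by $\hat\theta_n^p-\theta^*$ directly, then use $\Delta_\rho\geq\sqrt{\rho}\|\hat\theta_n^p-\theta^*\|_2$).

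The genuine gap is Step 1, which is also where the paper invests most of its effort (its Lemmas EC.1 and EC.2). Your crude bound $\|\hat\theta_n^p-\theta^*\|_2\leq\rho^{-1}\|\nabla(\mathrm{obj})(\theta^*)\|_2$ cannot give strict interiority under the stated hypothesis $\rho\gtrsim d\ln T$, for two reasons. First, the score $\sum_{t<n}\zeta\xi_t\phi_t$ is \emph{not} $O(sG\sqrt{d\ln T})$ in plain $\ell_2$ norm; only its $(\Sigma_n+I)^{-1}$-weighted norm is of that size, while its Euclidean norm is generically of order $sG\sqrt{n\,d\ln T}$ (take all $\phi_t$ equal), so $\rho^{-1}\|\mathrm{score}\|_2$ grows with $n$ and is not small. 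Second, even if you repair this by using the full Hessian lower bound $\frac{1}{GK}\Sigma_n+\rho I$ and self-normalized norms, the term $\rho\theta^*$ in the gradient contributes $\|\rho\theta^*\|_{H_n^{-1}}/\sqrt{\rho}\leq\|\theta^*\|_2\leq 1$ with no slack, so the best such an argument yields is $\|\hat\theta_n^p-\theta^*\|_2\leq 1+o(1)$, i.e.\ $\|\hat\theta_n^p\|_2\leq 2+o(1)$, which does not let you drop the KKT boundary multiplier. The paper escapes this by a two-stage comparison: it first shows the minimizer $\theta_\rho^*$ of the \emph{regularized population} objective $F_n$ satisfies $\|\theta_\rho^*\|_2\leq 1.5$ (here the score disappears entirely because $\theta^*$ exactly minimizes $L_n$, and the ridge penalty supplies genuine slack below the radius $2$), and then shows via strong convexity of $\hat F_n$ and the self-normalized empirical-process bound on $\hat F_n-F_n$ that the perturbed empirical minimizer lies within $0.5$ of $\theta_\rho^*$, hence strictly inside the ball. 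Without this (or some equivalent source of slack), your Step 1 does not establish $\|\hat\theta_n^p\|_2<2$, and the unconstrained stationarity equation that the rest of your proof relies on is not justified.
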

Lemma \ref{lem:mle-utility-3} is proved by analyzing the first-order KKT condition at $\hat\theta_n^p$, and is deferred to the supplementary material.
Lemma \ref{lem:mle-utility-3} upper bounds the transformed estimation error of the differentially private MLE $\hat\theta_n^p$ in two upper bounds.
The first upper bound in \eqref{eq:MLE_upper_1} applies to the general setting and has a $G\nu_{\varepsilon,\delta}\sqrt{5d\ln T}$ additive term involving the differential privacy parameters
$\varepsilon$, $\delta$. in the upper bound.
The second upper bound in \eqref{eq:MLE_upper_2}, on the other hand, shows that if the sample covariance matrix $\Sigma_n$ is spectrally lower bounded, then the upper bound on $\|\hat\theta_n^p-\theta^*\|_{\Lambda_n}^2$ can be much improved with only the standard $O(\sqrt{d\ln T})$ term.

\section{Regret analysis}\label{sec:regret}

Section \ref{sec:privacy} has established the privacy guarantees of our dynamic personalized pricing policy (see Corollary \ref{cor:privacy-algo}). In this section, we will further analyze the \emph{performance/utility} of our proposed policy by proving upper bounds on its expected cumulative \emph{regret}.

Recall that in the dynamic personalized pricing problem, there are $t$ time periods and at each time period a customer arrives with personal information $x_t$.
When offered price $p_t$, the expected demand is modeled by the generalized linear model $p(y_t|p_t,x_t,\theta^*) = \exp\{\zeta(y\phi(p_t,x_t)^\top\theta^* - m(\phi(p_t,x_t)^\top\theta^*)+h(y_t,\zeta)\}$ with expectation $\mathbb E[y_t|p_t,x_t,\theta^*]=f(\phi(p_t,x_t)^\top\theta^*)$.
With $\theta^*$ known in hindsight, the optimal price $p_t^*$ at time $t$ is the one maximizing the retailer's expected revenue, or more specifically
$$
p_t^* := \arg\max_{p\in[0,1]} pf(\phi(p,x_t)^\top\theta^*).
$$
The \emph{regret} of a dynamic pricing policy $\pi$ is then defined as the cumulative difference between the expected revenue of the policy's offered prices
and that of a clairvoyant, or more specifically
$$
\regret(\pi;T) := \sum_{t=1}^T p_t^*f(\phi(p_t^*,x_t)^\top\theta^*) - p_tf(\phi(p_t,x_t)^\top\theta^*).
$$
Clearly, by definition, the regret of any admissible policy is always non-negative since no $p_t$ has a higher expected revenue compared to $p_t^*$.
The smaller the regret, the better the policy's performance.
We are also primarily focused on the \emph{asymptotic} growth of the regret as a function of the time horizon $T$, as well as several other important parameters, such as the feature dimension $d$ and the privacy parameters $\varepsilon_0:=\varepsilon_1+\varepsilon_2$, $\delta_0:=\delta_1+\delta_2$.

\subsection{The general case}
\label{sec:general_case}
We first analyze the regret of Algorithm \ref{alg:framework} in the most general case, in which customers' personal information $\{x_t\}$ is obliviously {(i.e., pre-fixed)} but  can be adversarially chosen without pre-assumed patterns.
Our next theorem upper bounds the regret of Algorithm \ref{alg:framework} with proper choices of the values of algorithmic parameters. {Recall that $\varepsilon_0:=\varepsilon_1+\varepsilon_2$, $\delta_0:=\delta_1+\delta_2$. We also note that for the general case, the random exploration phase (Step 4 in Algorithm \ref{alg:framework}) will be unnecessary and thus we could set $T_0=0$.}
\begin{theorem}
Suppose Algorithm \ref{alg:framework} is run with parameters $\varepsilon_1,\varepsilon_2\geq 0.1\varepsilon_0$, $\delta_1,\delta_2\geq 0.1\delta_0$, 
$T_0=0$, $D_{\infty} = \lceil d\log_{1.5} T\rceil$, $\rho = \max\{\varepsilon_1^{-1}d^{1.5}\ln^5 T, 5\nu_{\varepsilon_2',\delta_2'}\sqrt{5d\ln T}, 2+48s^2G^2Kd\ln T\}$, 
$\gamma = K[(\sqrt{3}sK+\sqrt{5}G\nu_{\varepsilon_2',\delta_2'})\sqrt{d\ln T}+(2G+3)\sqrt{\rho}]$, 
where $\varepsilon_2',\delta_2'$ are defined in Step 3 of Algorithm \ref{alg:framework}
and $\nu_{\varepsilon_2',\delta_2'}$ is defined in Algorithm \ref{alg:private-mle}.
Then it holds that
$$
\regret(\pi;T) \leq 2\gamma\sqrt{4.6dT\ln T} \leq \widetilde O\left(\varepsilon_0^{-1}{\sqrt{d^3 T\ln^5(1/\delta_0)}}\right),
$$
where in the $\widetilde O(\cdot)$ notation we omit logarithmic terms in $T$ and polynomial dependency on other model parameters $s,K,G$ and $B_Y$.
\label{thm:main-regret-adversarial-context}
\end{theorem}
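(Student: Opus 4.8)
The plan is to run the standard optimism-in-the-face-of-uncertainty analysis for generalized linear bandits, but with careful bookkeeping of the three covariance matrices that the algorithm and its guarantees involve: the true regularized sample covariance $\Lambda_t=\rho I_d+\sum_{s<t}\phi_s\phi_s^\top$, the released private covariance $\Lambda_t^p=\Sigma_t^p+\rho I_d$, and the \emph{stale} private covariance $\Lambda^p$ carried over from the most recent \textsc{PrivateMLE} call. First I would fix one high-probability ``good event'' on which the conclusions of Lemma~\ref{lem:utility-cov} (equivalently Corollary~\ref{cor:cov-multiplicative-approx}) hold for all $n\le T$ and the conclusion of Lemma~\ref{lem:mle-utility-3} holds at each of the at most $D_\infty$ periods where \textsc{PrivateMLE} is invoked (we only use \eqref{eq:MLE_upper_1}, since $T_0=0$ gives no spectral lower bound on $\Sigma_n$); a union bound gives this event probability $1-O(T^{-1})$, and on its complement the trivial estimate $\regret(\pi;T)\le T$ costs only an additive $O(1)$.

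\emph{Staleness.} Corollary~\ref{cor:cov-multiplicative-approx} converts the recomputation trigger $\det(\Lambda_n^p)>2\det(\Lambda^p)$ into a statement about the true determinant: between two consecutive \textsc{PrivateMLE} calls $\det(\Lambda_\cdot)$ grows by a factor at least $(2\cdot 0.9)/1.11>1.5$, so the number of calls is at most $\log_{1.5}(\det(\Lambda_T)/\det(\Lambda_1))\le d\log_{1.5}(1+T/(\rho d))\le D_\infty=\lceil d\log_{1.5}T\rceil$; hence the cap $D_\infty$ is never binding and $\Lambda^p=\Lambda_{n_0}^p$ for some earlier $n_0<n$. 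Combining $\det(\Lambda_n^p)\le 2\det(\Lambda^p)$ with the elementary inequality $x^\top B^{-1}x\le(\det A/\det B)\,x^\top A^{-1}x$ for $0\prec B\preceq A$ (take $B=\Lambda^p$, $A=\Lambda_n^p$; the ordering $\Lambda^p\preceq(1+o(1))\Lambda_n^p$ follows from monotonicity of $\Sigma_\cdot$ plus Lemma~\ref{lem:utility-cov}, since $\rho$ dominates the $O(\varepsilon_1^{-1}\sqrt d\ln^{4.5}(T/\delta_1))$ operator-norm error), I obtain $\|\phi\|_{(\Lambda^p)^{-1}}^2\le(2+o(1))\|\phi\|_{(\Lambda_n^p)^{-1}}^2$, and a further application of Lemma~\ref{lem:utility-cov} yields $\|\phi\|_{(\Lambda_n^p)^{-1}}^2\le(1+o(1))\|\phi\|_{\Lambda_n^{-1}}^2$ for all $\phi$ with $\|\phi\|_2\le1$.

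\emph{Optimism and the potential argument.} Lemma~\ref{lem:mle-utility-3} gives $\|\hat\theta^p\|_2<2$ and $\|\hat\theta^p-\theta^*\|_{\Lambda_{n_0}}\le sK\sqrt{3d\ln T}+(2G+3)\sqrt\rho+G\nu_{\varepsilon_2',\delta_2'}\sqrt{5d\ln T}$; transferring to the $\Lambda^p$-norm as above and using the mean-value theorem with $f'\le K$ on $[-2,2]$ (Assumption~4; note $|\phi^\top\theta^*|\le1$ and $|\phi^\top\hat\theta^p|\le2$) produces the uniform confidence bound $|pf(\phi(x,p)^\top\hat\theta^p)-pf(\phi(x,p)^\top\theta^*)|\le\gamma\|\phi(x,p)\|_{(\Lambda^p)^{-1}}$ for the prescribed $\gamma$. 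Since the true per-period revenue $pf(\phi(p,x_t)^\top\theta^*)\le1$ as well, it is dominated for every $p$ by the index $\tilde R_t(p)=\min\{1,pf(\phi(p,x_t)^\top\hat\theta^p)+\gamma\|\phi(p,x_t)\|_{(\Lambda^p)^{-1}}\}$ that Algorithm~\ref{alg:framework} maximizes, so $\tilde R_t(p_t)\ge\tilde R_t(p_t^*)\ge p_t^*f(\phi(p_t^*,x_t)^\top\theta^*)$, while $\tilde R_t(p_t)\le p_tf(\phi_t^\top\hat\theta^p)+\gamma\|\phi_t\|_{(\Lambda^p)^{-1}}\le p_tf(\phi_t^\top\theta^*)+2\gamma\|\phi_t\|_{(\Lambda^p)^{-1}}$; summing, $\regret(\pi;T)\le2\gamma\sum_{t=1}^T\|\phi_t\|_{(\Lambda^p)^{-1}}$. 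Chaining the two transfers of the previous paragraph with Cauchy--Schwarz and then the elliptical-potential lemma (valid since $\|\phi_t\|_{\Lambda_t^{-1}}^2\le1/\rho\le1$), $\sum_t\|\phi_t\|_{(\Lambda^p)^{-1}}\le(2+o(1))^{1/2}\sqrt{T\cdot 2\ln(\det\Lambda_{T+1}/\det\Lambda_1)}\le(2+o(1))^{1/2}\sqrt{2dT\ln(1+T/(\rho d))}$, which after collecting constants is at most $\sqrt{4.6\,dT\ln T}$, giving $\regret(\pi;T)\le2\gamma\sqrt{4.6\,dT\ln T}$. Substituting $\gamma$ and $\rho$ and unwinding $\delta_2'=\delta_2/(2D_\infty)$, $\varepsilon_2'=\Theta(\varepsilon_2/\sqrt{D_\infty\ln(1/\delta_2')})$ with $D_\infty\asymp d\ln T$ (so $\nu_{\varepsilon_2',\delta_2'}\asymp B_1\sqrt{d\ln T}\,\ln(1/\delta_0)/\varepsilon_2$), together with $\varepsilon_1,\varepsilon_2\ge0.1\varepsilon_0$ and $\delta_1,\delta_2\ge0.1\delta_0$, makes the $G\nu_{\varepsilon_2',\delta_2'}\sqrt{d\ln T}$ term the bottleneck when privacy is stringent and $\sqrt\rho$ the bottleneck otherwise; in either case $\gamma=\widetilde O(\varepsilon_0^{-1}\sqrt{d^2\ln^5(1/\delta_0)})$, hence $\regret(\pi;T)=\widetilde O(\varepsilon_0^{-1}\sqrt{d^3T\ln^5(1/\delta_0)})$.

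\emph{Main obstacle.} The delicate step is the staleness analysis: the private-covariance noise is controlled only in operator norm by Lemma~\ref{lem:utility-cov}, so one must verify it neither pushes the number of \textsc{PrivateMLE} calls past $D_\infty$ nor destroys the multiplicative determinant and quadratic-form comparisons on which the lazy-update and elliptical-potential steps rest; keeping that operator-norm error $o(\rho)$ is exactly what forces $\rho\gtrsim\varepsilon_1^{-1}d^{3/2}\ln^5(T/\delta_1)$ and, through $\gamma\asymp\sqrt\rho$ in one regime or $\gamma\asymp\varepsilon_0^{-1}d$ in the other, much of the $d^{3/2}$ in the regret bound. A secondary point is that Lemma~\ref{lem:mle-utility-3} measures the estimation error in the $\Lambda_{n_0}$-norm at the update time while the price optimizer uses the stale $\Lambda^p$-norm, so the confidence bound needs the same operator-norm transfer before optimism can be invoked.
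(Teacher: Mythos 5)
Your proposal is correct and follows essentially the same route as the paper's proof: bound the number of \textsc{PrivateMLE} calls via Corollary \ref{cor:cov-multiplicative-approx} so the cap $D_\infty$ never binds, establish the validity of the upper confidence bound from Lemma \ref{lem:mle-utility-3}, run the optimism argument, transfer the stale $(\Lambda^p)^{-1}$ quadratic form to $\Lambda_n^{-1}$ via the determinant-ratio lemma, and finish with the elliptical potential lemma and Cauchy--Schwarz. If anything, you are slightly more careful than the paper in two spots it glosses over — the PSD ordering needed before applying the determinant-ratio comparison, and the transfer of the estimation error from the $\Lambda_{n_0}$-norm of Lemma \ref{lem:mle-utility-3} to the $\Lambda^p$-norm used by the price optimizer.
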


Theorem \ref{thm:main-regret-adversarial-context} is proved in the supplementary material.
We note that when $T$ is large, our regret bound matches the classical optimal regret bound of $O(\sqrt{T})$. The dependency on the dimensionality of personal information $d$ (i.e., $\sqrt{d^3}$) can be further improved by assuming a stronger assumption on the stochasticity of personal information ${x_t}$ (see Section \ref{subsec:improved}). Stochastic personal information or demand covariate  has been a common assumption in the pricing literature (see e.g., \citealt{qiang2016dynamic,ban2017personalized,javanmard2019dynamic,chen2020statistical}). 



\subsection{Improved regret with stochastic contexts}
\label{subsec:improved}
In this section, we show that for a large class of problems in which the customers' personal information is \emph{stochastically distributed},
the regret upper bound in Theorem \ref{thm:main-regret-adversarial-context} could be significantly sharpened.

The following assumption mathematically characterizes the stochasticity condition of customers' personal information used in this section:
\begin{assumption}
Let $U[0,1]$ be the uniform distribution on $[0,1]$.
There exists an underlying distribution $\mu_x$ and a constant $\kappa_x>0$ such that, $x_1,\cdots,x_T\overset{i.i.d.}{\sim}\mu_x$,
and furthermore
$$
\|\phi(x,p)\|_2\leq 1\;\;\;\;a.s.\;\;\sim\mu_x\times U[0,1]; \;\;\;\;\;\; \mathbb E_{(x,p)\sim\mu_x\times U[0,1]}\big[\phi(p,x)\phi(p,x)^\top\big]\succeq \kappa_x I_d.
$$
\label{asmp:stochastic-context}
\end{assumption}

{
Assumption \ref{asmp:stochastic-context} assumes that consumers' personal feature vectors are relatively widely spread,
so that they are not concentrated in a narrow region or direction.
Such an assumption helps improve the regret analysis because the algorithm can expect to see feature vectors along with any directions
with reasonable chances, and therefore the overall estimates of the unknown regression model can be more accurate.
}

With Assumption \ref{asmp:stochastic-context}, the following theorem shows that when algorithmic parameters are properly chosen
in Algorithm \ref{alg:framework}, the regret upper bound can be improved compared to Theorem \ref{thm:main-regret-adversarial-context} for the general setting.
\begin{theorem}
Under Assumption \ref{asmp:stochastic-context}, suppose Algorithm \ref{alg:framework} is run with parameters $\varepsilon_1,\varepsilon_2\geq 0.1\varepsilon_0$, $\delta_1,\delta_2\geq 0.1\delta_0$, $D_{\infty} = \lceil d\log_{1.5} T\rceil$, 
$\rho = \max\{\varepsilon_1^{-1}d^{1.5}\ln^5 T, 5\nu_{\varepsilon_2',\delta_2'}\sqrt{5d\ln T}, 2+48s^2GKd\ln T\}$, 
$T_0 = 32 [\frac{(2G+3)\rho}{\sqrt{5d\ln T}} + \nu_{\varepsilon,\delta}G]^2\ln^2(dT)$, 
$\gamma = 4sK^2\sqrt{d\ln T}$, 
where $\varepsilon_2',\delta_2'$ are defined in Step 3 of Algorithm \ref{alg:framework}
and $\nu_{\varepsilon_2',\delta_2'}$ is defined in Algorithm \ref{alg:private-mle}.
Then it holds for sufficiently large $T\geq e^{\kappa_x^{-2}}$ that
$$
\regret(\pi,T) \leq T_0 +  2\gamma\sqrt{4.6dT\ln T} \leq \widetilde O\left(d \sqrt{T} + \varepsilon_0^{-2}d^2\ln^{10}(1/\delta_0)\right),
$$
where in the $\widetilde O(\cdot)$ notation we omit logarithmic terms in $T$ and polynomial dependency on other model parameters $s,K,G$ and $B_Y$.
\label{thm:main-regret-stochastic-context}
\end{theorem}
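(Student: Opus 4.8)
The plan is to re-run the regret analysis of Theorem~\ref{thm:main-regret-adversarial-context}, but with the sharpened prediction-error bound \eqref{eq:MLE_upper_2} of Lemma~\ref{lem:mle-utility-3} replacing the general bound \eqref{eq:MLE_upper_1}; the payoff is that the confidence radius $\gamma$ loses its dominant $(2G+3)\sqrt{\rho}$ term. Start from the trivial decomposition $\regret(\pi;T)\le T_0 + \sum_{n=T_0+1}^{T}r_n$, where $r_n := p_n^* f(\phi(p_n^*,x_n)^\top\theta^*) - p_n f(\phi_n^\top\theta^*)\le 1$, so the forced-exploration periods contribute at most $T_0$.

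The genuinely new ingredient is a spectral lower bound on the realized sample covariance after exploration. During the first $T_0$ rounds the prices are i.i.d.\ uniform on $[0,1]$ and independent of $x_t\overset{i.i.d.}{\sim}\mu_x$, so the summands $\phi(x_t,p_t)\phi(x_t,p_t)^\top$ are i.i.d., bounded by $1$ in operator norm, with mean $\succeq\kappa_x I_d$ by Assumption~\ref{asmp:stochastic-context}. A matrix Chernoff/Bernstein inequality then yields $\lambda_{\min}(\Sigma_{T_0+1})\ge\tfrac12\kappa_x T_0$ with probability $1-O(T^{-2})$ as soon as $T_0\gtrsim\kappa_x^{-1}\ln(dT)$. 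I will verify that the prescribed $T_0 = 32[\tfrac{(2G+3)\rho}{\sqrt{5d\ln T}}+\nu_{\varepsilon_2',\delta_2'}G]^2\ln^2(dT)$, together with the hypothesis $T\ge e^{\kappa_x^{-2}}$ (which forces $\ln(dT)\ge\kappa_x^{-2}\ge\kappa_x^{-1}$ since $\kappa_x\le 1$), makes $\tfrac12\kappa_x T_0\ge\lambda_0:=[\tfrac{(2G+3)\rho}{\sqrt{5d\ln T}}+\nu_{\varepsilon_2',\delta_2'}G]^2$ comfortably. Since $\Sigma_n\succeq\Sigma_{T_0+1}$ for all $n>T_0$, the hypothesis of \eqref{eq:MLE_upper_2} then holds at every update time, so on the good event $\mathcal E$ — the intersection of this eigenvalue event, the covariance-approximation event of Corollary~\ref{cor:cov-multiplicative-approx}, and the event of Lemma~\ref{lem:mle-utility-3} — we have $\|\hat\theta_m^p-\theta^*\|_{\Lambda_m}^2\le(4sK\sqrt{d\ln T})^2$ at each time $m$ at which \textsc{PrivateMLE} is (re)invoked; moreover, by the doubling-of-determinant argument (Corollary~\ref{cor:cov-multiplicative-approx}) the number of reinvocations never exceeds $D_\infty=\lceil d\log_{1.5}T\rceil$, so the cap in Algorithm~\ref{alg:framework} is never hit.

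On $\mathcal E$ the per-round bound is the standard optimism argument: using $K$-Lipschitzness of $f$ on $\{|z|\le 2\}$, Cauchy--Schwarz in the $\Lambda_m$-norm, the spectral comparison $\tfrac12\Lambda_m\preceq\Lambda_m^p\preceq 2\Lambda_m$ (valid because $\lambda_{\min}(\Sigma_m)\ge\lambda_0\ge\rho$ dominates the $O(\varepsilon_1^{-1}\sqrt d\ln^{4.5}(T/\delta_1))$ covariance perturbation of Lemma~\ref{lem:utility-cov}, which is controlled by $\rho\ge\varepsilon_1^{-1}d^{1.5}\ln^5 T$), and the determinant-ratio inequality $\phi^\top A^{-1}\phi\le\tfrac{\det B}{\det A}\phi^\top B^{-1}\phi$ for $0\prec A\preceq B$ combined with the no-update condition $\det(\Lambda_n^p)\le 2\det(\Lambda^p)$, one shows that $p f(\phi(p,x_n)^\top\theta^*)$ never exceeds $\min\{1,pf(\phi(p,x_n)^\top\hat\theta^p)+\gamma\sqrt{\phi(p,x_n)^\top(\Lambda^p)^{-1}\phi(p,x_n)}\}$ with $\gamma=4sK^2\sqrt{d\ln T}$ — here it is essential that \eqref{eq:MLE_upper_2}, not \eqref{eq:MLE_upper_1}, is invoked, so $\gamma$ no longer carries the $\sqrt\rho$ term. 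Since $p_n$ maximizes this bound, $r_n\le 2\gamma\sqrt{\phi_n^\top(\Lambda^p)^{-1}\phi_n}\le O(\gamma)\sqrt{\phi_n^\top\Lambda_n^{-1}\phi_n}$. Summing over $n>T_0$, Cauchy--Schwarz together with the elliptical-potential (log-determinant telescoping) bound $\sum_n\phi_n^\top\Lambda_n^{-1}\phi_n\le O(d\ln T)$ gives $\sum_{n>T_0}r_n\le 2\gamma\sqrt{4.6dT\ln T}$. Adding $T_0$ and substituting the values of $\rho$ and $\nu_{\varepsilon_2',\delta_2'}$ (and in turn $\varepsilon_2',\delta_2'$ from Step~3 of Algorithm~\ref{alg:framework} and Corollary~\ref{cor:advanced-composition}) turns this into $\regret(\pi,T)\le T_0 + 2\gamma\sqrt{4.6dT\ln T} \le \widetilde O(d\sqrt T + \varepsilon_0^{-2}d^2\ln^{10}(1/\delta_0))$; a union bound over the polynomially small failure probabilities of the events constituting $\mathcal E$ absorbs the contribution of $\mathcal E^c$ to the expected regret, since $r_n\le 1$ always.

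The hard part will be the matrix-concentration step and its interplay with the parameter choices: proving that $T_0$ rounds of uniform exploration push $\lambda_{\min}(\Sigma_n)$ above $\lambda_0$ for every $n>T_0$ with high probability, and checking that the stated $T_0$ (written in terms of $\rho$ and $\nu_{\varepsilon_2',\delta_2'}$) really meets this requirement once $T\ge e^{\kappa_x^{-2}}$. Everything downstream is a re-run of the general-case argument with the improved confidence radius; the only other mildly delicate point is tracking the three-way comparison among the stale privatized covariance $\Lambda_m^p$, the stale true covariance $\Lambda_m$, and the current true covariance $\Lambda_n$ tightly enough that the specific choice $\gamma=4sK^2\sqrt{d\ln T}$, rather than a larger constant multiple, still makes the upper confidence bound valid.
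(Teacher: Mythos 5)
Your proposal is correct and follows essentially the same route as the paper: a matrix-concentration bound showing that $T_0$ rounds of uniform exploration push $\lambda_{\min}(\Sigma_{T_0+1})$ above $\lambda_0$ (the paper uses matrix Hoeffding and verifies the same two lower bounds on $T_0$ that you identify, with $T\ge e^{\kappa_x^{-2}}$ closing the gap), followed by invoking the sharpened bound \eqref{eq:MLE_upper_2} of Lemma~\ref{lem:mle-utility-3} so that the confidence radius drops to $\gamma=4sK^2\sqrt{d\ln T}$, and then re-running the Theorem~\ref{thm:main-regret-adversarial-context} argument verbatim. No gaps.
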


The proof of Theorem \ref{thm:main-regret-stochastic-context} is largely the same as the proof of Theorem \ref{thm:main-regret-adversarial-context},
except for the application of the second upper bound in Lemma \ref{lem:mle-utility-3}.
We relegate the complete proof of Theorem \ref{thm:main-regret-stochastic-context} to the supplementary material.
Comparing Theorem \ref{thm:main-regret-stochastic-context} with Theorem \ref{thm:main-regret-adversarial-context},
we note that the significant improvement lies in the additive nature between $\varepsilon_0, \delta_0$ and $d, T$ terms in Theorem \ref{thm:main-regret-stochastic-context}.
More specifically, because the privacy-incurred terms are now additive and do not scale polynomially with $T$,
in most practical scenarios when the time horizon $T$ is very large, the dominating term of Theorem \ref{thm:main-regret-stochastic-context}
becomes only $\widetilde O(d\sqrt{T})$, which is optimal (up to logarithmic factors) in \emph{both} the time horizon $T$
and the feature dimension $d$ (see, for example, the $\Omega(d\sqrt{T})$ lower bound in \cite{dani2008stochastic}). 

\subsection{Impact of privacy constraints on seller surplus}\label{subsec:discussion-seller-surplus}

In our theoretical framework, the seller surplus is measured and reflected by the notion of \emph{regret},
which measures how much revenue/profits are lost by the seller's pricing decisions compared to the optimal personalized prices in hindsight.
The smaller the regret, the larger the seller surplus.

Our main results in Theorems \ref{thm:main-regret-adversarial-context} and \ref{thm:main-regret-stochastic-context}
give quantitative upper bounds on the regret of our proposed algorithm.
More specifically, the regret of our algorithm is (omitting logarithmic factors and secondary model parameters)
$\widetilde O(\varepsilon^{-1}\sqrt{d^3 T})$ in the general setting,
and $\widetilde O(\sqrt{d^2 T}+\varepsilon^{-2}d^2)$ with additional assumptions on the distribution of consumers' context vectors.
Here $T$ is the time horizon (i.e., the number of customers handled), $d$ is the number of covariates in consumers' personal data, and $\varepsilon>0$ dictates the level of privacy leakage, with smaller $\varepsilon$ indicating stronger/stricter protection of users' privacy.
Based on these results, we make the following observations:

\paragraph{Tradeoffs between seller profits and privacy protection.} With stronger privacy protection (i.e., $\varepsilon\to 0^+$), 
it is clear that the regret of our proposed algorithm increases, indicating that the seller profits are going to suffer
with additional privacy constraints.
The decrease of seller surplus is, however, alleviated when the consumers' context vectors are relatively well distributed, 
as the $\varepsilon^{-2}d^2$ term is \emph{not} the dominating term in the regret bound when there are sufficient number of customers/users.
Such decrease of seller profits is intuitive and expected, because additional privacy constraints limit sellers' ability to offer very personally 
tailored prices to boost their revenues.

\paragraph{Value and privacy costs of information.} 
The $d$ parameter in the regret bound characterizes how many covariates or factors the pricing algorithm exploits in customers' personalized data and shows the value and privacy costs of information: with more factors/covariates (i.e., larger values of $d$),
the retailer is able to consider more refined details and information of each incoming customer
but such information adds to the burden of privacy protection, leading to increased regret. {To see this more clearly, with some stochasticity assumption of covariates, the regret bound $\widetilde O\left(d \sqrt{T} + \varepsilon^{-2}d^2 \right)$ in Theorem \ref{thm:main-regret-stochastic-context} shows the following fact. For the regret term $\varepsilon^{-2}d^2 $ related to the privacy to be a constant,  a larger dimension $d$ (i.e., more customer information) implies that $\varepsilon=C_0 d$ also grows proportionally, which leads to a weaker privacy protection. 
{ Additionally, for the first term $\widetilde O(d\sqrt{T})$, there is also a known lower bound showing that any policy must suffer a regret of $\Omega(d\sqrt{T})$
in the worst case \citep{dani2008stochastic}.
Therefore, there is indeed a cost of information for the purpose of privacy protection.}
Our regret upper bounds therefore provide in principle a bottom line for practitioners to gauge the costs of incorporating more factors
of user information into dynamic personalized price decisions.

\section{Numerical results}\label{sec:numerical}

In this section we corroborate the theoretical guarantees established in this paper for our proposed differentially private
personalized pricing method with simulation results on a synthetic dataset.
We adopt the logistic regression model $\Pr[y_t=1|\phi_t,\theta^*] = \frac{e^{\zeta\phi_t^\top\theta^*}}{1+e^{\zeta\phi_t^\top\theta^*}}$,
with $\zeta=4$, $\phi_t(x_t,p_t) = \frac{1}{\sqrt{d}}[x_t; -p_t]\in\mathbb R^d$
and $\theta^*=[-\sqrt{0.1}; -\sqrt{0.1}; \cdots; -\sqrt{0.1}; \sqrt{1-0.1(d-1)}]\in\mathbb R^d$.
The personal feature vectors $\{x_t\}$ are synthesized uniformly at random from the unit cube $[-1,1]^{d-1}$.
It is easy to verify that $\|\phi_t\|_2\leq 1$ and $\|\theta^*\|_2\leq 1$ always hold for all $d$.
Algorithm parameters (as inputs in Algorithm \ref{alg:framework}) are chosen as $T_0=10$, $\rho=10$, $D_{\infty}=\lceil d\log_2 T\rceil$, and $\gamma=1$.
Other privacy-related parameters will be varied to demonstrate a  spectrum of our proposed algorithm on a continuous landscape of differential privacy guarantees. {Note that this experiment's main purpose is to investigate the impact of privacy-related parameters (i.e., $\varepsilon$ and $\delta$) rather than compete with state-of-the-art non-private pricing algorithms.} 

\begin{figure}[t]
\centering
\includegraphics[width=0.48\textwidth]{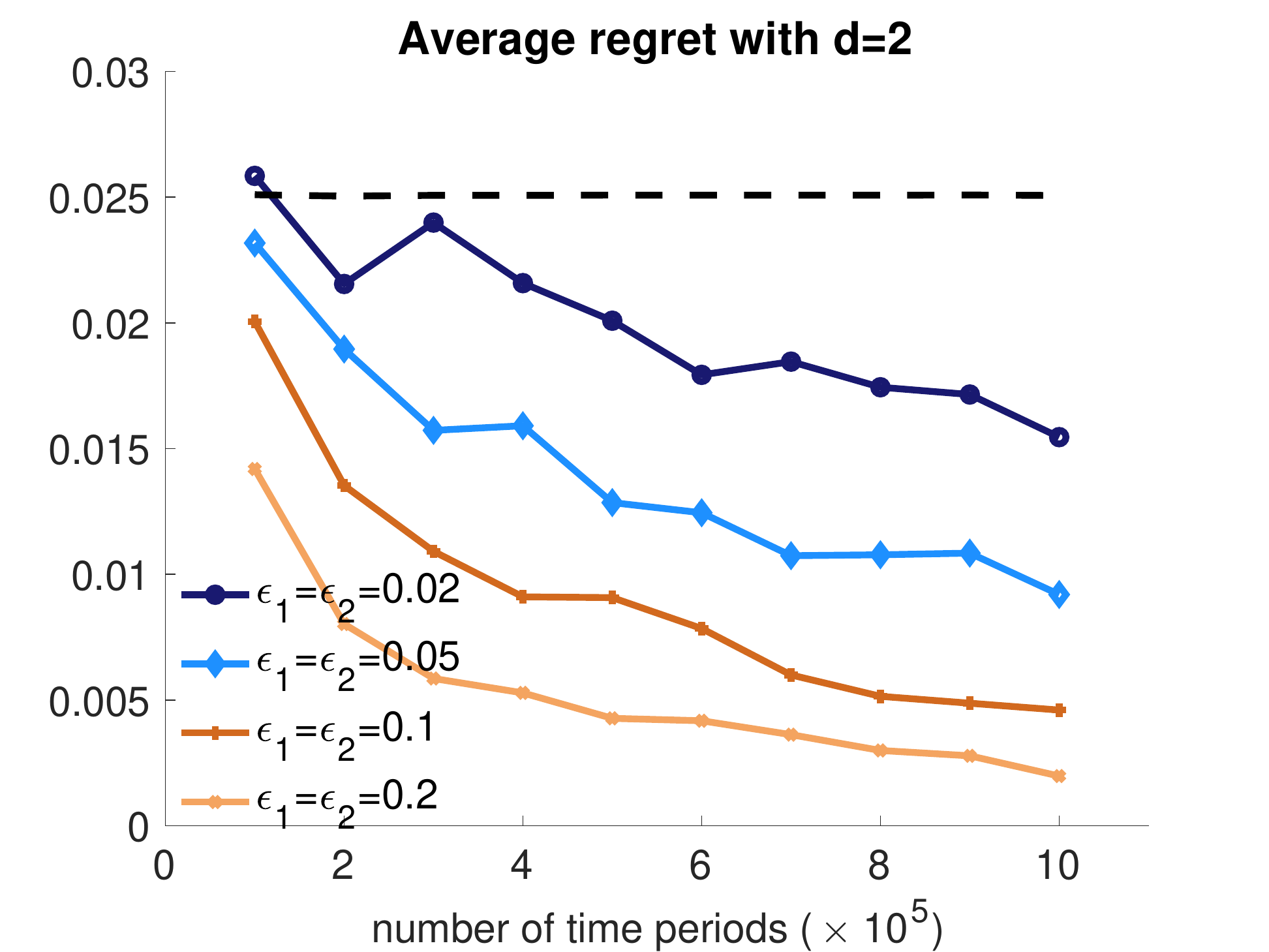}
\includegraphics[width=0.48\textwidth]{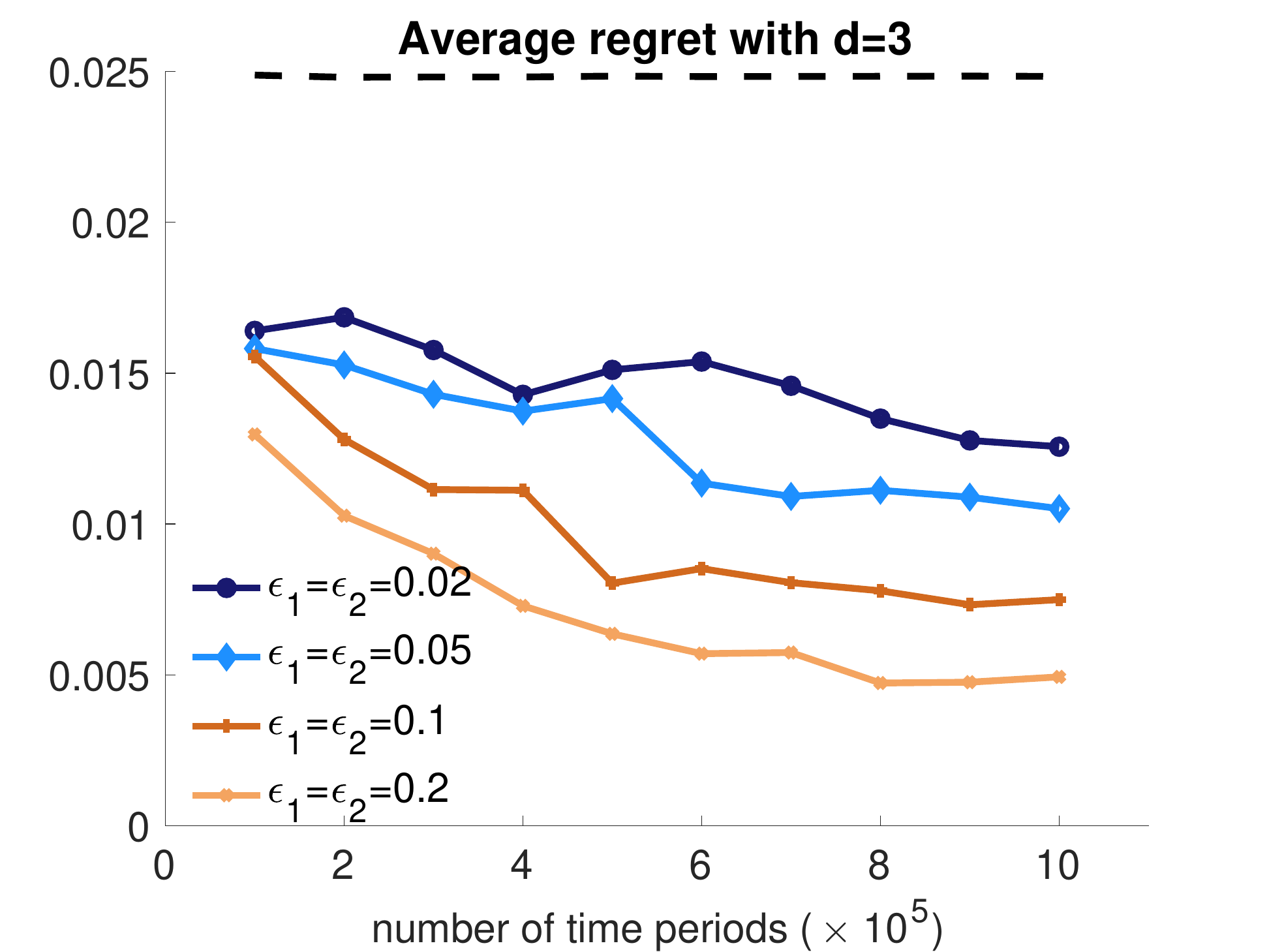}
\caption{Average regret of our proposed algorithm under different time horizons $T$.
The black dashed line indicates the average regret of a policy offering completely at random prices.
Both $\delta_1,\delta_2$ parameters are set at $\delta_1=\delta_2=1/T^2$.}
\label{fig:regret-main}
\end{figure}

In Figure \ref{fig:regret-main} we plot the average regret of our proposed algorithm under various $\varepsilon_1,\varepsilon_2$ privacy settings
and time horizons $T$ ranging from $10^5$ to $10^6$.
All settings are run for 20 independent trials and the average regret is reported.
For reference purposes, we also indicate in both plots of Figure \ref{fig:regret-main} (see the flat dashed line) the average regret of a policy that simply produces
uniformly at random prices $p_t$ at each $t$, completely ignoring the personalized features/factors of each incoming customer.
As we can see, under most privacy settings including  highly secured settings with small $\varepsilon$ (e.g., $\varepsilon_1=\varepsilon_2=0.02$),
the average regret of our proposed algorithm is much smaller compared to completely random prices, demonstrating its utility
under privacy constraints.
Furthermore, with relaxed privacy requirements (i.e., larger values of $\varepsilon_1,\varepsilon_2$)
and/or longer pricing horizons $T$, the average regret of our algorithm significantly decreases,
which verifies the theoretical regret upper bounds we established in Theorems \ref{thm:main-regret-adversarial-context} and \ref{thm:main-regret-stochastic-context}.

\begin{figure}[t]
\centering
\includegraphics[width=0.48\textwidth]{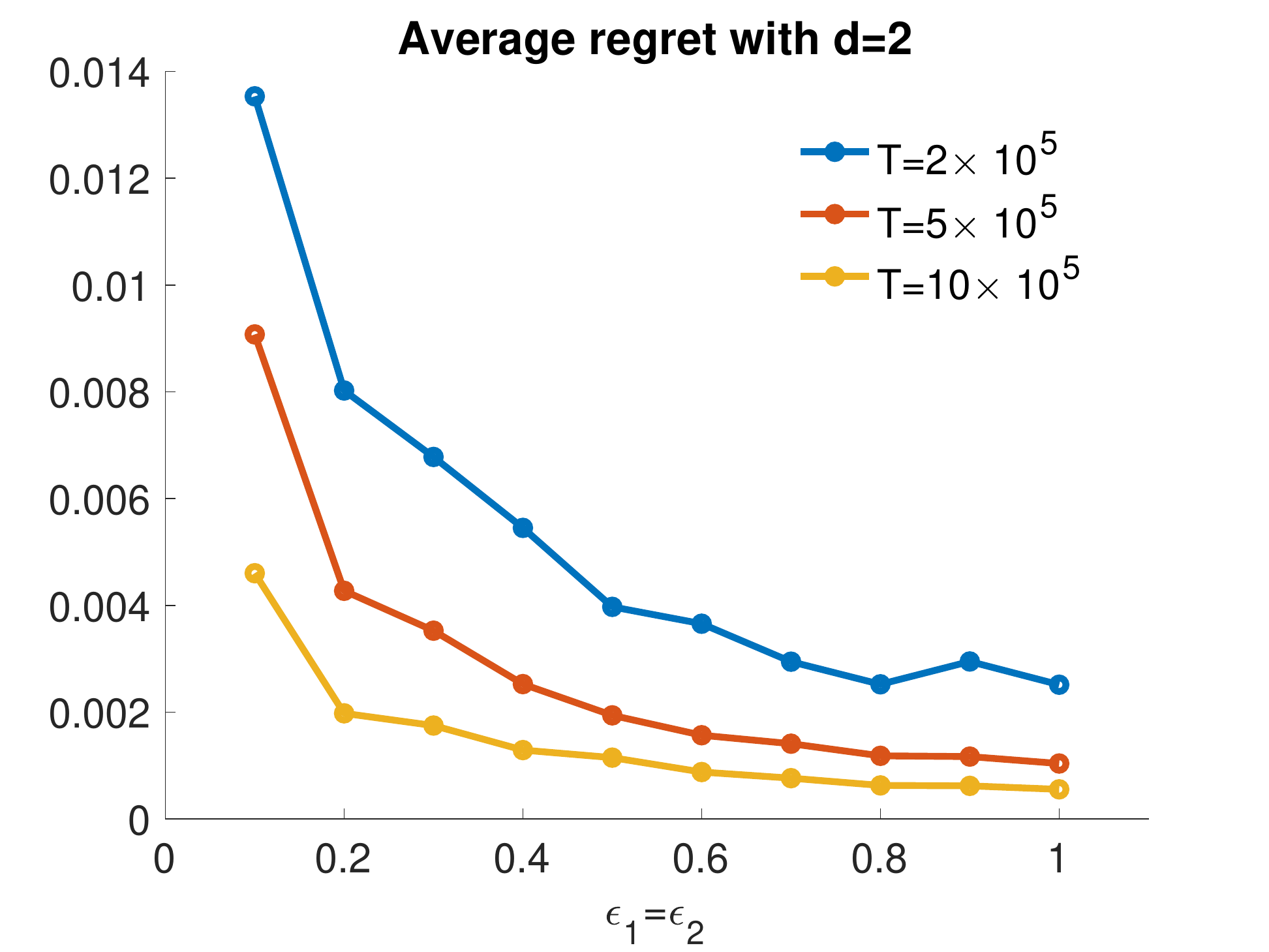}
\includegraphics[width=0.48\textwidth]{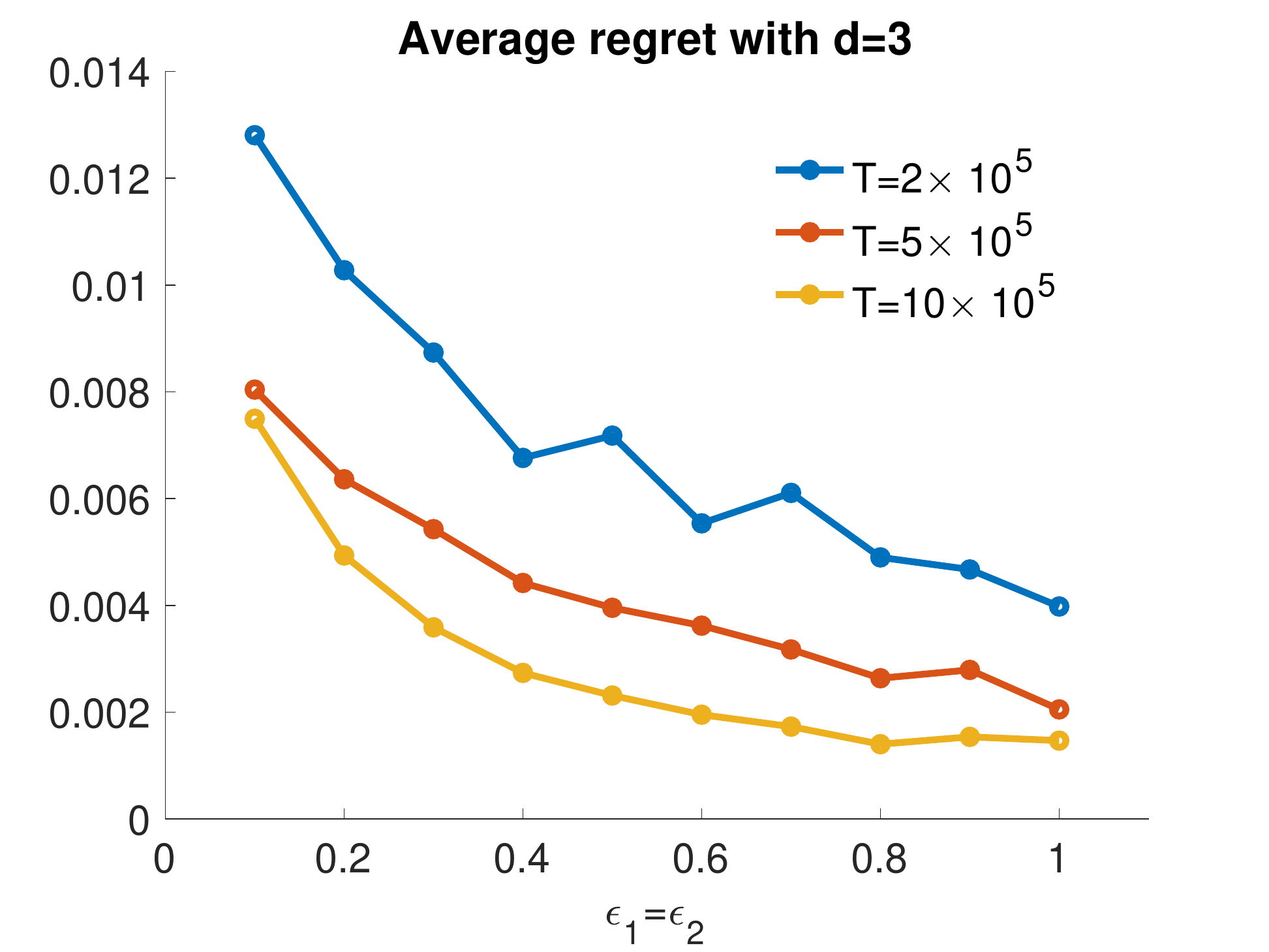}
\caption{Average regret of our proposed algorithm under different privacy parameters $\varepsilon=\varepsilon_1=\varepsilon_2$.
Both $\delta_1,\delta_2$ parameters are set at $\delta_1=\delta_2=1/T^2$.}
\label{fig:regret-veps}
\end{figure}

\begin{figure}[t]
\centering
\includegraphics[width=0.48\textwidth]{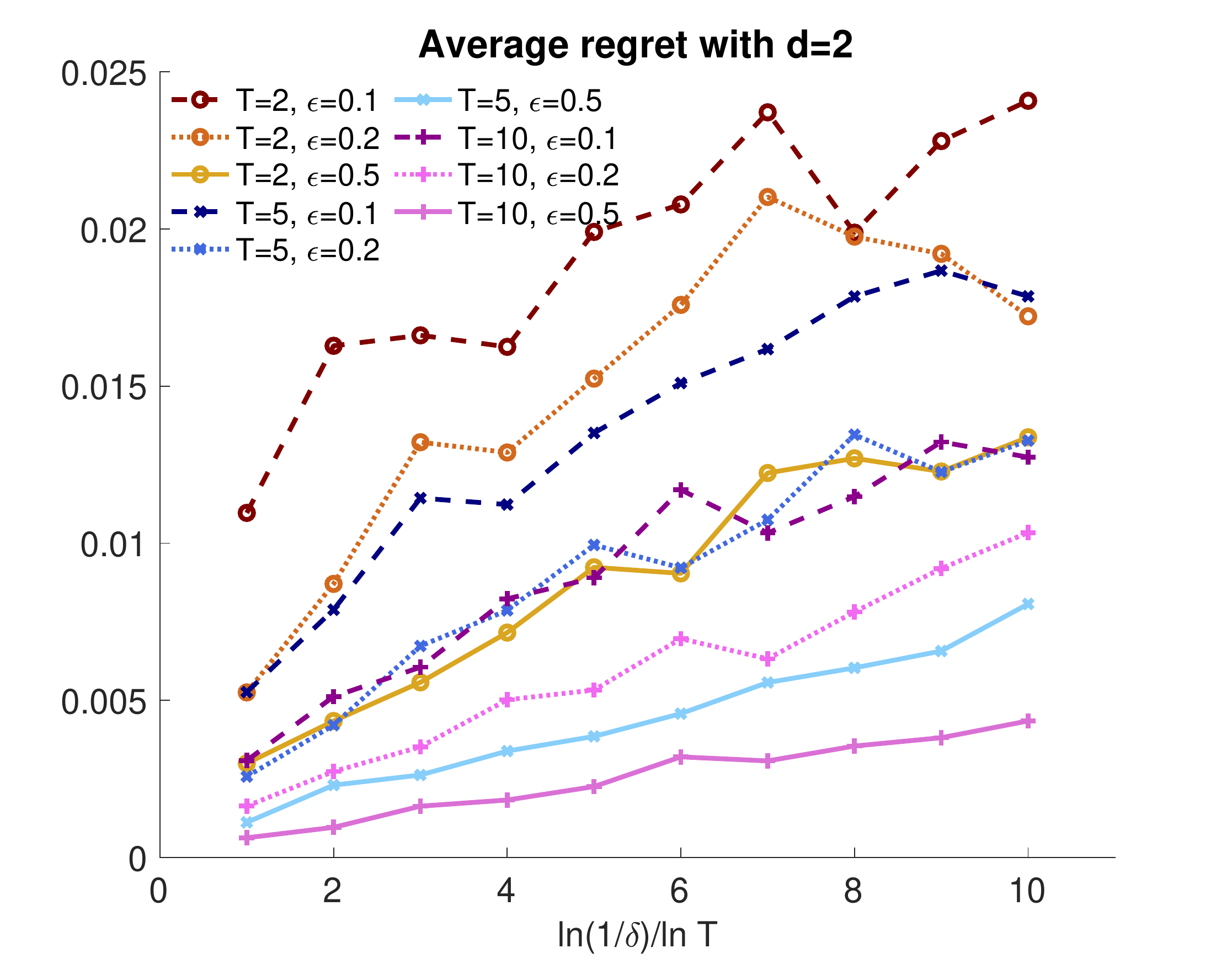}
\includegraphics[width=0.48\textwidth]{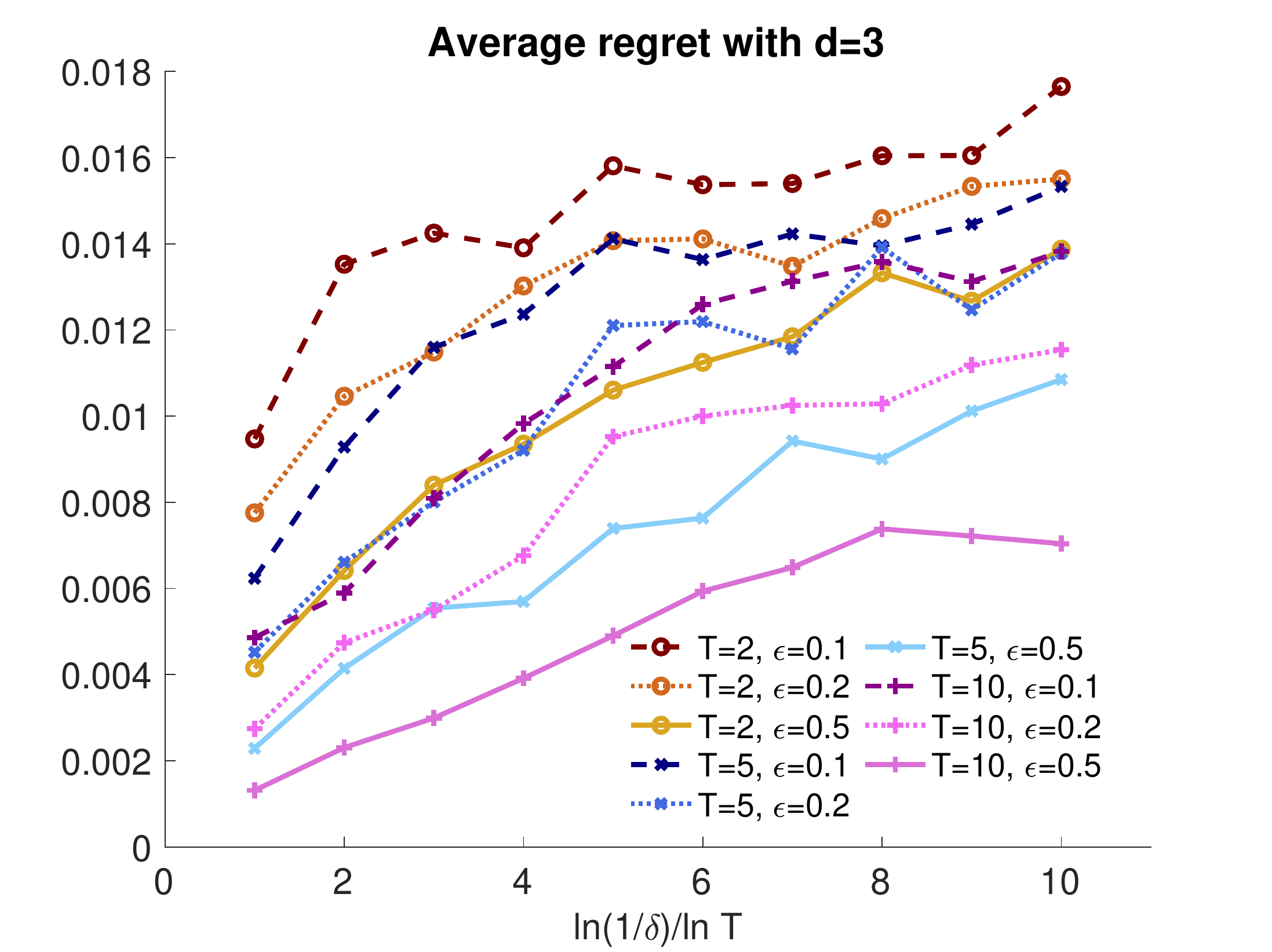}
\caption{Average regret of our proposed algorithm under different $\delta$ parameter values.
From left to right the $\delta$ values are $1/T, 1/T^2,\cdots,1/T^{10}$. The time horizon is measured in terms of $10^5$ periods (i.e., $T=2$ means $2\times 10^5$ total time periods).}
\label{fig:regret-vdelta}
\end{figure}

In Figures \ref{fig:regret-veps} and \ref{fig:regret-vdelta}, we provide some additional auxiliary simulation results.
Figure \ref{fig:regret-veps} gives a direct landscape of the average regret of our algorithm under $\varepsilon$ values
ranging from 0.1 to 1.
Figure \ref{fig:regret-vdelta} further explores the robustness of our algorithm under several very small $\delta$ values
(as small as $\delta=1/T^{10}$).
Note that in Figure \ref{fig:regret-vdelta} there are multiple trend lines corresponding to the performances of the proposed algorithm
under different settings of $T,\varepsilon$ and $\delta$ values.
Apart from the dependency on $\ln(1/\delta)$, Figure \ref{fig:regret-vdelta} also shows that
the average regret of our algorithm decreases with increasing time horizon $T$ and relaxed privacy guarantees (i.e., larger values of $\varepsilon$),
both of which are consistent with the findings in Figures \ref{fig:regret-main} and \ref{fig:regret-veps}.
The results in both figures are as expected (significant decreases in average regret with large $\varepsilon$ values
and moderate increases in average regret with geometrically decreasing $\delta$ values)
from our theoretical results.

\begin{table}[t]
\centering
\caption{Average regret comparison with non-private pricing algorithms.}
\begin{tabular}{l|cccccc}
\hline
& $\varepsilon=0.1$& $\varepsilon=0.2$& $\varepsilon=0.5$& $\varepsilon=1.0$& $\varepsilon=5.0$& non-private\\
\hline
$T=10^5, d=2$& $201\times 10^{-4}$& $142\times 10^{-4}$& $74.6\times 10^{-4}$& $41.9\times 10^{-4}$& $44.7\times 10^{-4}$& $3.1\times 10^{-4}$ \\
$T=10^6, d=2$& $46.0\times 10^{-4}$& $19.8\times 10^{-4}$& $11.5\times 10^{-4}$& $5.6\times 10^{-4}$& $5.5\times 10^{-4}$& $0.6\times 10^{-4}$\\
$T=10^5, d=3$& $156\times 10^{-4}$& $130\times 10^{-4}$& $92.6\times 10^{-4}$& $62.9\times 10^{-4}$& $43.4\times 10^{-4}$& $3.1\times 10^{-4}$\\
$T=10^6, d=3$& $74.9\times 10^{-4}$& $49.4\times 10^{-4}$& $23.1\times 10^{-4}$& $14.7\times 10^{-4}$& $5.7\times 10^{-4}$& $1.6\times 10^{-4}$\\
\hline
\end{tabular}
\label{tab:non-private}
\end{table}

{
To better illustrate our algorithm, we further report two additional sets of simulation results.
In Table \ref{tab:non-private} we report the average regret of our proposed algorithm together with an algorithm that is not subject to any kind 
of privacy constraints,
which is implemented by removing all noise calibration steps in the two private releasers \textsc{PrivateCov} and \textsc{PrivateMLE}.
We remark that even larger $\varepsilon$ values (e.g., $\varepsilon=1.0$ or $\varepsilon=5.0$) indicate
quite non-trivial universal privacy protection of consumers' sensitive data,
which explains the relatively larger regret incurred by differentially private pricing algorithms compared with their non-private counterparts.
In Table \ref{tab:diffeps} we report the average regret of our proposed algorithm when the values of $\varepsilon_1$ and $\varepsilon_2$
are very different to see which privacy parameter has a bigger impact on the performance of the designed algorithm.
Table \ref{tab:diffeps} shows that $\varepsilon_2$ clearly has a much larger effect on the regret performance of our algorithm,
with the average regret significantly decreasing with larger $\varepsilon_2$ values.
On the other hand, the impact of $\varepsilon_1$ is not significant or clear.
This is expected from the structure of the algorithm, because $\varepsilon_2$ is used in the \textsc{PrivateMLE}
sub-routine, which directly affects the model estimates used in subsequent price offerings.
}

\begin{table}[t]
\centering
\caption{Average regret with $T=10^5$ of our algorithm under different $\varepsilon_1,\varepsilon_2$ settings. When the row indicates ``fix $\varepsilon_1\equiv 0.1$'' (or ``fix $\varepsilon_2\equiv 0.1$''), then the $\varepsilon$ in the column represents the value of $\varepsilon_2$ (or accordingly $\varepsilon_1$).}
\begin{tabular}{l|ccccc}
\hline
& $\varepsilon=0.02$& $\varepsilon=0.05$& $\varepsilon=0.1$& $\varepsilon=0.2$& $\varepsilon=0.5$\\
\hline
$T=10^5, d=2$, fix $\varepsilon_1\equiv 0.1$& 0.0247& 0.0232& 0.0195& 0.0145& 0.0073\\
$T=10^5, d=2$, fix $\varepsilon_2\equiv 0.1$& 0.0192& 0.0178& 0.0196& 0.0192& 0.0191 \\
$T = 10^5, d=3$, fix $\varepsilon_1\equiv 0.1$& 0.0164& 0.0160& 0.0147& 0.0128& 0.0092\\
$T=10^5,d=3$, fix $\varepsilon_2\equiv 0.1$& 0.0145& 0.0149& 0.0144& 0.0149& 0.0154\\
\hline
\end{tabular}
\label{tab:diffeps}
\end{table}

{
\section{Discussion and insights}\label{sec:discussion}

\subsection{Insufficiency of input perturbation}\label{subsec:input-perturbation}

\emph{Input perturbation} is a straightforward method for designing differentially private algorithms
and is actually an effective method in some application scenarios.
The high-level idea of input perturbation is to artificially calibrate noise directly to the \emph{inputs} of the algorithm in order to protect private information.
With noisy inputs, the privacy of the entire algorithm trivially follows from the closeness-to-post-processing property of differential privacy (Fact \ref{fact:post-processing}).

In the context of personalized dynamic pricing, application of the input perturbation method amounts to calibrating noise 
directly to the personal features $x_t$ of each incoming customer: $\tilde x_t=x_t+\omega_t$,
for some centered noise vectors $\{\omega_t\}_{t=1}^T$.
Such an approach, however, is likely to fail because the features of each individual customer are relatively independent from each other.
Therefore, a very large magnitude of noises $\{\omega_t\}$ need to be injected, which  renders the subsequent pricing algorithm impractical.
More detailed discussion follows:
\begin{enumerate}
\item Suppose $\tilde x_t=x_t+w_t$ is the anonymized version of a customer's feature vector $x_t$ at time $t$.
Because $\tilde x_t$ is released and used in the subsequent process of the pricing algorithm, one must ensure that $\tilde x_t$ is differentially private.
This means that the magnitude of $w_t$ must be sufficiently large (on the order of $\Omega(1/\varepsilon)$) to protect the sensitive information of $x_t$.
\item Usually, input perturbation results in a much worse performance of the differentially private algorithms compared to output perturbation.
Consider the very simple example of having sensitive data $x_1,\cdots,x_n$ and one wants to release $\bar x=\frac{1}{n}\sum_{i=1}^n x_i$ with $\varepsilon$-differential privacy.
If we use input perturbation with $\tilde x_i=x_i+w_i$ and the Laplace mechanism, we have $w_i\sim\mathrm{Lap}(0,1/\varepsilon)$ and therefore
$\tilde x^1 := \frac{1}{n}\sum_{i=1}^n \tilde x_i$ satisfies $\mathbb E[|\tilde x^1-\bar x|]\asymp O(1/\varepsilon\sqrt{n})$.
On the other hand, if one uses output perturbation by releasing $\tilde x^2 := \bar x + \frac{1}{n}\mathrm{Lap}(0,1/\varepsilon)$, then one has
$\mathbb E[|\tilde x^2-\bar x|]\asymp O(1/\varepsilon n)$.
It is easy to verify that both $\tilde x^1,\tilde x^2$ are differentially private, but $\tilde x^2$ clearly is much closer to $\bar x$ compared to $\tilde x^1$.
This very simple example shows that, in general, input perturbation (directly adding noises to sensitive data) is usually less efficient and should be avoided if there 
are better approaches.
\item In the particular model studied in this paper, the use of a generalized linear model further complicates the input perturbation-based methods.
For many generalized linear models, such as the logistic regression model, the efficiency of statistical estimates (e.g., the maximum likelihood estimation)
decays \emph{exponentially} fast with respect to the vector norm of the feature vector $x$.
Hence, if we use $\tilde x_t=x_t+w_t$ to replace $x_t$ directly in the logistic regression model, the norm of $\tilde x_t$ is on the order of $\Omega(1/\varepsilon)$
and therefore the resulting method is going to incur an $O(\exp\{1/\varepsilon\})$ term in regret, which makes the regret excessively large.
\end{enumerate}

\begin{table}[t]
\centering
\caption{Average regret of our proposed algorithm and the input perturbation method.}
\begin{tabular}{lcccccccc}
\hline
& &\multicolumn{3}{c}{our algorithm}&& \multicolumn{3}{c}{input perturbation}\\
\cline{3-5} \cline{7-9}
& & $\varepsilon=0.2$& $\varepsilon=0.5$& $\varepsilon=1.0$& & $\varepsilon=0.2$& $\varepsilon=0.5$& $\varepsilon=1.0$ \\
\hline
$T=10^5, d=2$ ($\times 10^{-4}$)& &142&  74.6&  41.9& & 393& 393& 98.2 \\
$T=5\times 10^5, d=2$ ($\times 10^{-4}$)& &42.7& 19.4& 10.4& & 393& 393& 95.8 \\
$T=10^6, d=2$ ($\times 10^{-4}$)& & 19.8& 11.5& 5.6& & 393& 393& 95.3\\
\hline
\end{tabular}
\label{tab:input-perturbation}
\end{table}

In Table \ref{tab:input-perturbation} we compare the average regret of our proposed algorithm with the input perturbation method
using numerical simulations.
Table \ref{tab:input-perturbation} shows that the regret of our designed algorithm is significantly smaller than that of the input perturbation.
Furthermore, the average regret of input perturbation is very large unless the $\varepsilon$ parameter is at least one and does not necessarily decrease with increasing number of time periods $T$.

\subsection{Impact of privacy constraints on consumer surplus}\label{subsec:discussion-consumer-surplus}

\begin{figure}[t]
\centering
\includegraphics[width=0.48\textwidth]{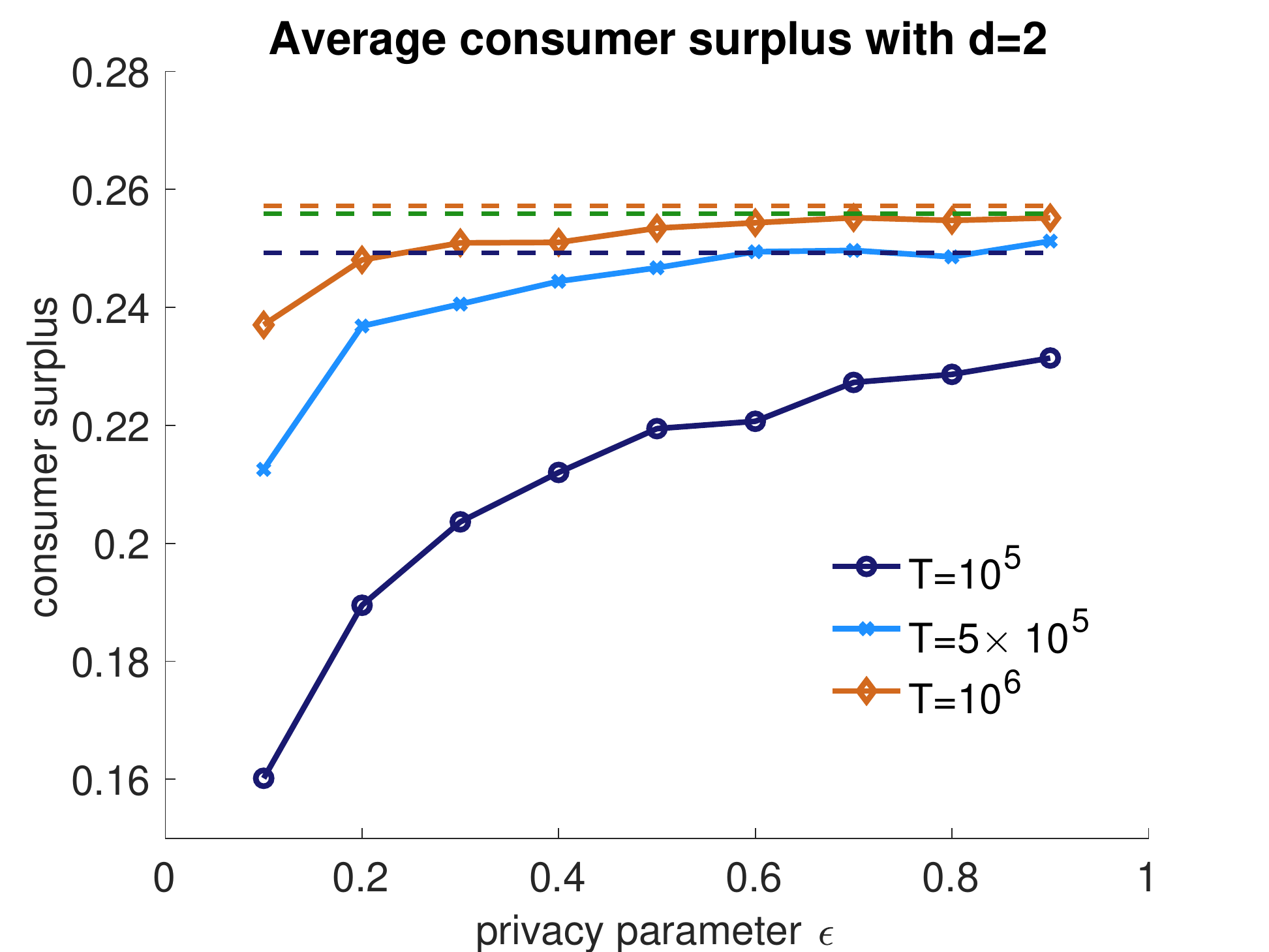}
\includegraphics[width=0.48\textwidth]{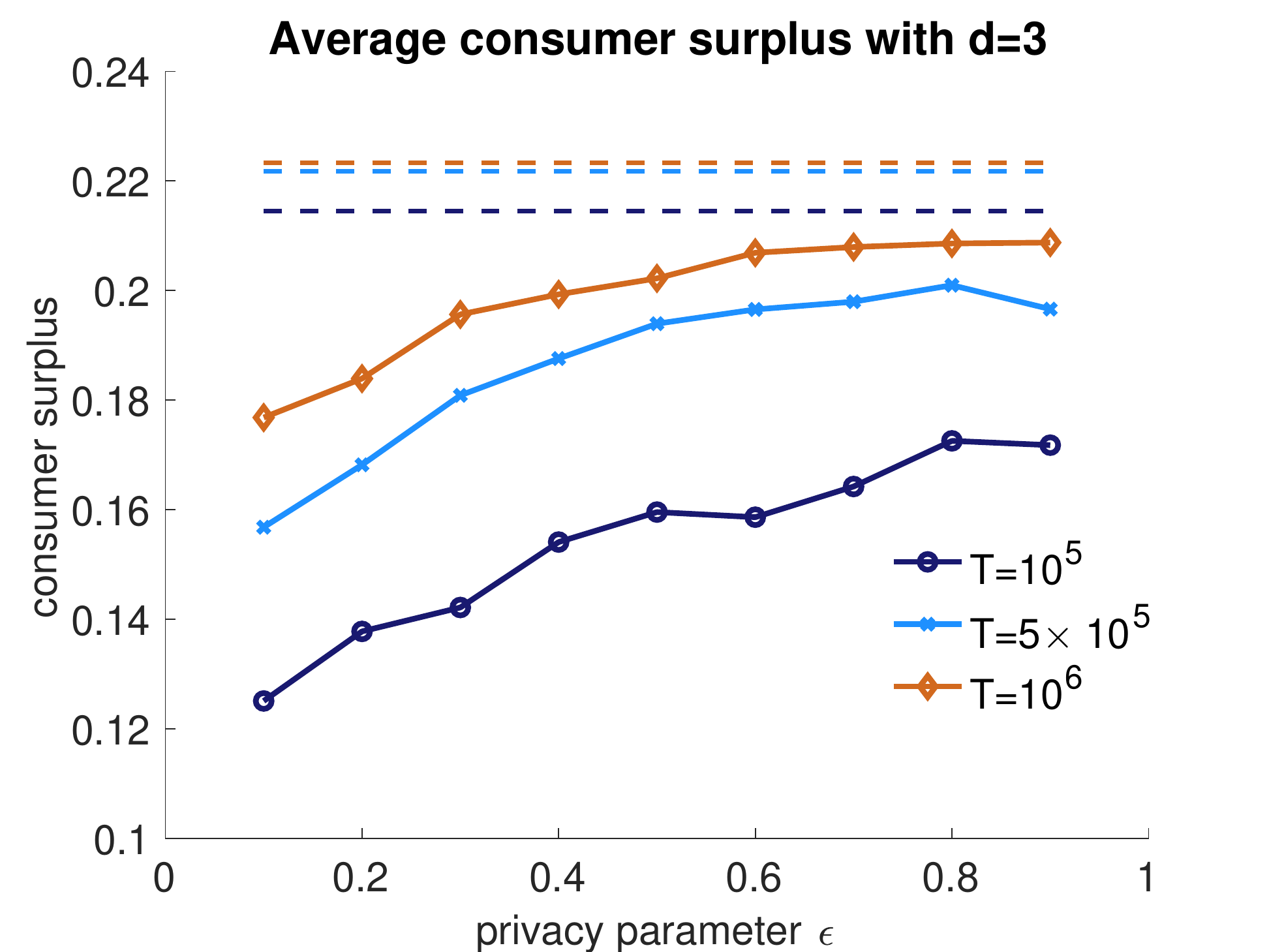}
\caption{Average consumer surplus under different levels of privacy constraints and time horizons.
The dashed lines represent average consumer surplus for a personalized pricing algorithm \emph{not} subject to any data privacy constraints.
Both $\varepsilon_1,\varepsilon_2$ parameters are equal to $\epsilon$ in the figures, and both $\delta_1,\delta_2$ parameters
are set as $1/T^2$, where $T$ is the time horizon.}
\label{fig:surplus}
\end{figure}

In this section we study the impact of privacy constraints of the seller's personalized pricing algorithm
on the average consumer's surplus, under different levels of privacy constraints.
We model the utility $u_t$ for each incoming customer at time $t$ with feature vector $x_t$ and offered price $p_t$ as
$u_t = \zeta \langle\phi(x_t,p_t), \theta^*\rangle + \zeta_t$, where $\zeta=4$, $\phi(x_t,p_t) = \frac{1}{\sqrt{d}}[x_t;p_t]$
and $\zeta_t$ are i.i.d.~random variables following the standard centered Logistic distribution.
The customer will make one unit of purchase if $u_t>0$, resulting in a surplus of $u_t$, and leave without making any purchases if $u_t<0$,
resulting in zero surplus at that period.
It is easy to verify that this utility model leads to the logistic regression model we used in the numerical experiments,
or more specifically, $\Pr[y_t=1|x_t,p_t] = \Pr[u_t>0|x_t,p_t] = \frac{e^{\zeta\phi_t^\top\theta^*}}{1+e^{\zeta\phi_t^\top\theta^*}}$,
where $\phi_t=\phi(x_t,p_t)$.

Figure \ref{fig:surplus} reports the average consumer surplus under our proposed privacy-aware personalized pricing algorithm,
for both $d=2$ and $3$ with consumers' contextual vectors $\{x_t\}_{t=1}^T$ and the unknown regression model synthesized 
in the same way as in Section~\ref{sec:numerical}.
We also plot the consumer surplus for a hypothetical pricing algorithm that is \emph{not} subject to any privacy constraints
as dashed lines in Figure \ref{fig:surplus}.
Note that we did not incorporate consumers' surplus from the protection of their private data,
which is difficult to measure and compare against the surplus from their purchasing decisions.
{  As we can see from Figure \ref{fig:surplus}, as $\varepsilon$ increases from 0 to 1, the implied privacy protection becomes \emph{weaker}
as the adversary has a stronger ability to distinguish between neighboring databases. This means that as $\varepsilon$ increases, the seller has less ability to discriminate against customers based on their personal data and features, resembling a transition from first-degree to third-degree price discrimination.
As a result, the consumer surplus increases as $\varepsilon$ increases and the seller extracts less of the consumer surplus from his/her limited ability to carry out price discrimination.}


\section{Conclusions and future directions}
\label{sec:con}

In this paper, we investigate how to protect the privacy of a customer's personal information and purchasing decisions in personalized dynamic pricing with demand learning.  Under the generalized linear model of the demand function, we propose a privacy-preserving constrained MLE policy. We establish both the privacy guarantee under the notion of anticipating differential privacy (DP) and the regret bounds for oblivious adversarial and stochastic settings.

There are several future directions. First, we could extend the current privacy setting to the local DP \citep{Evfimievski2003limiting,kasiviswanathan2011can}, which is a stronger notion of DP. The local DP is suitable for distributed environments, as user terminals need to randomize data before sending it to the center. A very recent paper by \cite{ren2020mab} investigates the UCB algorithm under the local DP. It would be interesting to study the personalized dynamic pricing under this stronger notion of DP. More importantly, as privacy has become a significant concern for the public, especially in the e-commerce domain, we believe that systematic research on privacy-preserving revenue management will become increasingly important in both academia and industry. 
While there is relatively less research in this area, we hope our work inspires future studies on privacy-aware operations management (e.g., inventory control or assortment optimization) based on the DP framework.

\section*{Acknowledgment}
The authors thank the department editor, the associated editor, and the anonymous referees for many useful suggestions and feedback, which greatly improves the paper. Xi Chen is supported by National Science Foundation via the Grant IIS-1845444 and Facebook Faculty Research Award. David Simchi-Levi is supported by MIT-Accenture Alliance for Business Analytics and the MIT Data Science Lab.

\bibliography{refs}
\bibliographystyle{apa-good}

\end{document}